\newcommand{\red}[1]{{#1}}
\title{Reconfiguring $k$-Path Vertex Covers}
\author{%
 Duc A. Hoang$^1$ and
 Akira Suzuki$^2$ and
 Tsuyoshi Yagita$^3$
}
\date{
$^1$ Graduate School of Informatics, Kyoto University, Japan\\ \texttt{hoang.duc.8r@kyoto-u.ac.jp}\\
$^2$ Graduate School of Information Sciences, Tohoku University, Japan\\ \texttt{akira@tohoku.ac.jp}\\
$^3$ Graduate School of Computer Science and Systems Engineering, Kyushu Institute of Technology, Japan\\ \texttt{yagita.tsuyoshi307@mail.kyutech.jp}
}
\newtheorem{theorem}{Theorem}
\newtheorem{lemma}[theorem]{Lemma}
\begin{document}
\maketitle
\begin{abstract}
A vertex subset $I$ of a graph $G$ is called a $k$-path vertex cover if every path on $k$ vertices in $G$ contains at least one vertex from $I$.
The \textsc{$k$-Path Vertex Cover Reconfiguration ($k$-PVCR)} problem asks if one can transform one $k$-path vertex cover into another via a sequence of $k$-path vertex covers where each intermediate member is obtained from its predecessor by applying a given reconfiguration rule exactly once.
We investigate the computational complexity of \textsc{$k$-PVCR} from the viewpoint of graph classes under the well-known reconfiguration rules: $\mathsf{TS}$, $\mathsf{TJ}$, and $\mathsf{TAR}$.
The problem for $k=2$, known as the \textsc{Vertex Cover Reconfiguration (VCR)} problem, has been well-studied in the literature.
We show that certain known hardness results for \textsc{VCR} on different graph classes 
can be extended for \textsc{$k$-PVCR}.
In particular, we prove a complexity dichotomy for \textsc{$k$-PVCR} on general graphs: on those whose maximum degree is three (and even planar), the problem is $\mathtt{PSPACE}$-complete, while on those whose maximum degree is two (i.e., paths and cycles), the problem can be solved in polynomial time.
Additionally, we also design polynomial-time algorithms for \textsc{$k$-PVCR} on trees under each of $\mathsf{TJ}$ and $\mathsf{TAR}$.
Moreover, on paths, cycles, and trees, we describe how one can construct a reconfiguration sequence between two given $k$-path vertex covers in a yes-instance.
In particular, on paths, our constructed reconfiguration sequence is shortest.
\end{abstract}

\section{Introduction}
\label{sec:introduction}

\red{Recently}, a collection of problems called \emph{Combinatorial Reconfiguration} has been extensively studied.
Work in this research area specifically aims to model dynamic situations where one needs to transform one feasible solution of a computational problem into another by locally changing a solution while keeping its feasibility along the way.
In a reconfiguration setting, two feasible solutions of a computational problem (e.g., \textsc{Satisfiability}, \textsc{Independent Set}, \textsc{Vertex Cover}, \textsc{Dominating Set}, etc.) are given, along with a \emph{reconfiguration rule} that describes an adjacency relation between solutions.
A \emph{reconfiguration problem} asks whether one feasible solution can be transformed into the other via a sequence of adjacent feasible solutions where each intermediate member is obtained from its predecessor by applying the given reconfiguration rule exactly once.
Such a sequence, if exists, is called a \emph{reconfiguration sequence}.
One may recall the classic Rubik's cube puzzle as an example of a reconfiguration problem, where each configuration of the Rubik's cube corresponds to a feasible solution, and two configurations (solutions) are adjacent if one can be obtained from the other by rotating a face of the cube by either $90$, $180$, or $270$ \red{degrees}.
The question is whether one can transform an arbitrary configuration to the one where each face of the cube has only one color.
For an overview of this research area, readers are referred to the recent surveys~\cite{Heuvel13,Nishimura18,MynhardtN19}.

\subsection{\textsc{$k$-Path Vertex Cover Reconfiguration}}

Let $G = (V, E)$ be a simple graph.
A \emph{vertex cover} of $G$ is a subset $I$ of $V$ where each edge contains at least one vertex from $I$.
The \textsc{Vertex Cover (VC)} problem, which asks whether there is a vertex cover of $G$ whose size is at most some positive integer $s$, is one of the classic $\mathtt{NP}$-complete problems in the computational complexity theory~\cite{GareyJohnson}.

Let $k \geq 2$ be a fixed positive integer.
A subset $I$ of $V$ is called a \emph{$k$-path vertex cover} if every path on $k$ vertices in $G$ contains at least one vertex from $I$.
The \textsc{$k$-Path Vertex Cover ($k$-PVC)} problem asks if there is a $k$-path vertex cover of $G$ whose size is at most some positive integer $s$.
Motivated by the importance of a problem related to secure communication in wireless sensor networks, Bre{\v{s}}ar et al. initiated the study of \textsc{$k$-PVC} in~\cite{BrevsarKKS11} (as a generalized concept of \emph{vertex cover}).
It is known that \textsc{$k$-PVC} is $\mathtt{NP}$-complete for every $k \geq 2$~\cite{AcharyaCBG12,BrevsarKKS11}.
Subsequent work regarding the \emph{maximum} variant~\cite{MiyanoSUYZ18} and \emph{weighted} variant~\cite{BrevsarKSS14} of \textsc{$k$-PVC} has also been considered in the literature.
Recently, the study of \textsc{$k$-PVC} and related problems has gained a lot of \red{attention} from both theoretical aspect~\cite{KumarKR19,RanZHLD19,Tsur19} 
and practical application~\cite{BeckLKSZ19,FunkeNS14}.

In this paper, we initiate the study of \textsc{$k$-PVC} from the viewpoint of \emph{combinatorial reconfiguration}.
Given two distinct $k$-path vertex covers $I$ and $J$ of a graph $G$ and a \emph{single} reconfiguration rule, the \textsc{$k$-Path Vertex Cover Reconfiguration ($k$-PVCR)} problem asks whether there is a reconfiguration sequence between $I$ and $J$.
We study the computational complexity of \textsc{$k$-PVCR} with respect to different graph classes under the well-known reconfiguration rules: \emph{Token Sliding}, \emph{Token Jumping}, and \emph{Token Addition or Removal}.
They are informally defined as follows.
Imagine that a token is placed at each vertex of a $k$-path vertex cover in $G$.
For each of the following rules, a common requirement is that the resulting token-set forms a $k$-path vertex cover of $G$.
\begin{itemize}
	\item Token Sliding ($\mathsf{TS}$): A $\mathsf{TS}$-step involves moving a token on some vertex $v$ to one of its unoccupied neighbors.
	\item Token Jumping ($\mathsf{TJ}$): A $\mathsf{TJ}$-step involves moving a token on $v$ to any unoccupied vertex.
	\item Token Addition or Removal ($\mathsf{TAR}$): A $\mathsf{TAR}$-step involves either adding or removing a single token such that the resulting token-set is of size at most given positive integer $u$. We sometimes write ``$\mathsf{TAR}(u)$'' instead of ``$\mathsf{TAR}$'' to emphasize the upper bound $u$ on the size of each token-set in a reconfiguration sequence under $\mathsf{TAR}$.
\end{itemize}

\subsection{Related Work}\label{subsection_related_work}

The \emph{reoptimization} framework is closely related to \emph{reconfiguration}. 
Roughly speaking, given an optimal solution of a problem instance $\mathcal{I}$, and some perturbations that change $\mathcal{I}$ into a new instance $\mathcal{I}^\prime$, a reoptimization problem aims to find an optimal solution for the changed instance $\mathcal{I}^\prime$.
Recently, Kumar~et~al.~\cite{KumarKR19} initiated the study of reoptimization problems for (both weighted and unweighted) \textsc{$k$-PVC} with $k \geq 3$, extending some known reoptimization paradigms for the well-known \textsc{VC} problem~\cite{AusielloBE11}.
The perturbation they considered in~\cite{KumarKR19} is changing the input graph of the current instance by inserting new vertices.

The \textsc{Vertex Cover Reconfiguration (VCR)} problem is one of the most well-studied reconfiguration problems, from both classical and parameterized complexity viewpoints (e.g., see~\cite{Nishimura18} for a quick summary of known results).
It is well-known that if $I$ is a vertex cover of a graph $G = (V, E)$ then $V \setminus I$ is an \emph{independent set} of $G$, i.e., a vertex subset whose members are pairwise non-adjacent.
Consequently, from classical complexity viewpoint, results of \textsc{Independent Set Reconfiguration (ISR)} and \textsc{Vertex Cover Reconfiguration} are interchangeable.

We now mention some known complexity results of \textsc{VCR} (which are mostly interchanged with \textsc{ISR}) for some graph classes.
It is well-known that \textsc{VCR} is $\mathtt{PSPACE}$-complete under each of $\mathsf{TS}$, $\mathsf{TJ}$, and $\mathsf{TAR}$ for general graphs~\cite{ItoDHPSUU11}, planar graphs of maximum degree three~\cite{HearnD05,ItoKOSUY20}, perfect graphs~\cite{KaminskiMM12}, and bounded bandwidth graphs~\cite{Wrochna18}.
Even on bipartite graphs, \textsc{VCR} remains $\mathtt{PSPACE}$-complete under $\mathsf{TS}$, and $\mathtt{NP}$-complete under each of $\mathsf{TJ}$ and $\mathsf{TAR}$~\cite{LokshtanovM19}.
On chordal graphs (and even on split graphs), \textsc{VCR} is known to be $\mathtt{PSPACE}$-complete under $\mathsf{TS}$~\cite{BelmonteKLMOS19}.
On the positive side, polynomial-time algorithms have been designed for \textsc{VCR} on even-hole-free graphs (and therefore chordal graphs) under each of $\mathsf{TJ}$ and $\mathsf{TAR}$~\cite{KaminskiMM12}, on bipartite permutation graphs and bipartite distance-hereditary graphs~\cite{Fox-EpsteinHOU15} under $\mathsf{TS}$, on cographs~\cite{Bonsma16,KaminskiMM12}, claw-free graphs~\cite{BonsmaKW14}, interval graphs~\cite{BonamyB17,KaminskiMM12,BrianskiFHM21}, and trees~\cite{DemaineDFHIOOUY15,KaminskiMM12} under each of $\mathsf{TS}$, $\mathsf{TJ}$, and $\mathsf{TAR}$.

\subsection{Our Results}

\begin{figure}[t]
	\centering
	\resizebox{\textwidth}{!}{
		\begin{tikzpicture}[level distance=2cm,
			level 1/.style={sibling distance=12em},
			level 2/.style={sibling distance=12em},
			every node/.style = {very thick, shape=rectangle, rounded corners, draw, align=center, minimum height=2.7em},
			edge from parent/.style={draw, thick, -stealth}]
			
			\node (general) {general}
			child {node (chordal) {chordal} }
				child {node [shift={(0,-4)}] (tree) {tree} }
			child {node (planar) {planar} 
				child {node [shift={(2,0)}] (planar-bounded-bw) {planar and\\ bounded bandwidth} 
					child { node (subcubic-planar-bounded-bw) {planar, maximum degree $3$,\\ and bounded bandwidth}
						child { node (path) {path} }
						child { node (cycle) {cycle} }
					} 
				}
			}
			child {node (bounded-bw) {bounded bandwidth} }
			;
			
			\draw[thick, -stealth] (bounded-bw) -- (planar-bounded-bw);
			\draw[thick, -stealth] (planar-bounded-bw) -- (tree);
			\draw[thick, -stealth] (chordal) -- (tree);
			\draw[thick, -stealth] (tree) -- (path);
			
			\draw [line width=0.1cm, black] ([shift={(1,0.2)}]cycle.north) -- ([shift={(0,-4.85)}]chordal.south west) node [below, draw=none, fill=none, shift={(1,0)}] {$\mathsf{TS}$\\ Poly-time};
			\draw [line width=0.1cm, black] ([shift={(0.5,-0.8)}]tree.south east) -- ([shift={(0.5,0.3)}]tree.north east) -- ([shift={(0,-2.7)}]chordal.south west) node [below, draw=none, fill=none, shift={(1,0)}] {$\mathsf{TS}$\\ Unknown} node [above, draw=none, fill=none, shift={(1,0)}] {$\mathsf{TS}$\\ $\mathtt{PSPACE}$-complete};
			\draw [line width=0.1cm, black] ([shift={(1,0.4)}]cycle.north) -- ([shift={(0.7,-0.6)}]tree.south east) -- ([shift={(0.7,2.4)}]tree.north east) -- ([shift={(0,-0.6)}]chordal.south west) node [below, draw=none, fill=none, shift={(1,0)}] {$\mathsf{TJ}/\mathsf{TAR}(u)$\\ Poly-time};
			\draw [line width=0.1cm, black] ([shift={(0.8,-0.8)}]chordal.south east) -- ([shift={(0.8,1)}]chordal.north east) -- ([shift={(0,1)}]chordal.north west) node [below, draw=none, fill=none, shift={(1,0)}] {$\mathsf{TJ}/\mathsf{TAR}(u)$\\ Unknown};
			\draw [line width=0.1cm, black] ([shift={(0.7,2.4)}]tree.north east) -- ([shift={(-0.1,0)}]general.north west) -- ([shift={(0,2)}]chordal.north west) 
			node [above, draw=none, fill=none, shift={(1,0)}] {$\mathsf{TJ}/\mathsf{TAR}(u)$\\ $\mathtt{PSPACE}$-complete};
			
			\node [draw=none, inner sep=0, shift={(0.8, 0.3)}, below of=planar-bounded-bw] {[Thm.~\ref{thm:PSPACEC}]};
			\node [draw=none, inner sep=0, shift={(0, 0.3)}, below of=chordal] {[Thm.~\ref{thm:PSPACEC}]};
			\node [draw=none, inner sep=0, shift={(1.2, 0.3)}, below of=subcubic-planar-bounded-bw] {[Thm.~\ref{thm:PSPACEC-planar-maxdeg3}]};
			\node [draw=none, inner sep=0, shift={(-0.2, 0.3)}, below of=tree] {[Thm.~\ref{thm:kPVC-Trees-TJ} and \ref{thm:kPVCR-Trees-TAR}]};
			\node [draw=none, inner sep=0, shift={(-0.2, 0.3)}, below of=path] {[Thm.~\ref{thm:kPVCR_paths_TJ}, \ref{thm:kPVCR_paths_TAR}, and \ref{thm:kPVCR_paths_TS}]};
			\node [draw=none, inner sep=0, shift={(-0.2, 0.3)}, below of=cycle] {[Thm.~\ref{thm:kPVCR_cycle_TJTS} and \ref{thm:kPVCR_cycles_TAR}]};
		\end{tikzpicture}
	}
	\caption{\red{Computational complexity of \textsc{$k$-PVCR} on some graph classes, under each of $\mathsf{TS}$, $\mathsf{TJ}$, and $\mathsf{TAR}(u)$. Each arrow from $A$ to $B$ means $B$ is a subclass of $A$.}}
	\label{fig:results}
\end{figure}

In this paper, we investigate the complexity of \textsc{$k$-PVCR} with respect to different input graphs (see \figurename~\ref{fig:results}).
More precisely, we show that:
\begin{itemize}
	\item Several hardness results for \textsc{VCR} remain true for \textsc{$k$-PVCR}. 
	More precisely, we show the $\mathtt{PSPACE}$-completeness of \textsc{$k$-PVCR} on general graphs under each rule $\mathsf{TS}$, $\mathsf{TJ}$, and $\mathsf{TAR}$ using a reduction from a variant of \textsc{VCR}.
	As our reduction preserves some nice graph properties, we claim (as a consequence of our reduction) that the hardness results for \textsc{VCR} on several graphs (namely planar graphs, bounded bandwidth graphs, chordal graphs)
	can be converted into those for \textsc{$k$-PVCR}.
	Using a reduction from the \textsc{Nondeterministic Constraint Logic}~\cite{HearnD05,Z15}, we also show that \textsc{$k$-PVCR} remains $\mathtt{PSPACE}$-complete even on planar graphs of bounded bandwidth and maximum degree three.
	(Our reduction from \textsc{VCR} does not preserve the maximum degree.) 
	
	\item On the positive side, we design polynomial-time algorithms for \textsc{$k$-PVCR} on some restricted graph classes: trees (under each of $\mathsf{TJ}$ and $\mathsf{TAR}$), paths and cycles (under each of $\mathsf{TS}$, $\mathsf{TJ}$, and $\mathsf{TAR}$).
	Our algorithms are constructive, i.e., we explicitly show how a reconfiguration sequence can be constructed in a yes-instance.
	On paths, we claim that our algorithm constructs a \emph{shortest} reconfiguration sequence.
	As a result, we obtain a complexity dichotomy for \textsc{$k$-PVCR} on (planar) graphs with respect to their maximum degree.
\end{itemize}

\section{Preliminaries}\label{sec:preliminaries}

In this section, we define some useful notation and terminology.
For standard concepts on graphs, we refer readers to~\cite{Diestel2010}.

Let $G$ be a simple graph with vertex-set $V(G)$ and edge-set $E(G)$.
For two vertices $u, v$, we denote by $\text{dist}_G(u, v)$ the \emph{distance} between $u$ and $v$ in $G$, i.e., the number of edges in a shortest path between them.
For a vertex $v \in V(G)$, we denote by $G - v$ the graph obtained from $G$ by removing the vertex $v$ and all incident edges.
For two vertex subsets $I$ and $J$, we denote by $G[I \Delta J]$ the subgraph of $G$ induced by their \emph{symmetric difference} $I \Delta J = (I \setminus J) \cup (J \setminus I)$.
For a fixed integer $k \geq 2$, we say that a vertex $v$ \emph{covers} a $k$-path (i.e., a path on $k$ vertices) $P_k$ in $G$ if $v \in V(P_k)$.
A vertex subset $I$ is called a \emph{$k$-path vertex cover} if every $k$-path in $G$ contains at least one vertex from $I$.
In other words, vertices of $I$ cover all $k$-paths in $G$.
We denote by $\psi_k(G)$ the size of a \emph{minimum} $k$-path vertex cover of $G$.
Trivially, for $n \geq k \geq 2$, $\psi_k(P_n) = \lfloor n/k \rfloor$ and $\psi_k(C_n) = \lceil n/k \rceil$ for a path $P_n$ and a cycle $C_n$ on $n$ vertices.

Throughout this paper, we denote by $(G, I, J, \mathsf{R})$ an instance of \textsc{$k$-PVCR} under a reconfiguration rule $\mathsf{R}\in \{\mathsf{TJ}, \mathsf{TS}, \mathsf{TAR}\}$, where $I$ and $J$ are two $k$-path vertex covers of $G$. 
We shall  respectively call a reconfiguration sequence under each of $\mathsf{TS}$, $\mathsf{TJ}$, and $\mathsf{TAR}$ by a \emph{$\mathsf{TS}$-sequence}, \emph{$\mathsf{TJ}$-sequence}, and \emph{$\mathsf{TAR}(u)$-sequence}.
Formally, let $S = \langle I_0, I_1, \dots, I_\ell \rangle$ be an \emph{ordered} sequence of $k$-path vertex covers of $G$.
The \emph{length} of $S$ is defined as $\ell$, i.e., if $S$ is a reconfiguration sequence then its length is exactly the number of steps it performs under the given reconfiguration rule.
Imagine that a token is placed at each vertex of a $k$-path vertex cover of $G$.
We may sometimes identify a token with the vertex where it is placed and say ``a token in a $k$-path vertex cover'', and therefore use the terms ``token-set'' and ``$k$-path vertex cover'' interchangeably.
We say that $S$ is a \emph{$\mathsf{TS}$-sequence} between two $k$-path vertex covers $I_0$ and $I_\ell$ if for each $i \in \{0, \dots, \ell-1\}$, there exist two vertices $x_i$ and $y_i$ such that $I_i \setminus I_{i+1} = \{x_i\}$, $I_{i+1} \setminus I_i = \{y_i\}$, and $x_iy_i \in E(G)$.
Roughly speaking, $I_{i+1}$ is obtained from $I_i$ by sliding the token placed on $x_i$ to $y_i$ along an edge $x_iy_i$.
Similarly, we say that $S$ is a \emph{$\mathsf{TJ}$-sequence} between $I_0$ and $I_\ell$ if for each $i \in \{0, \dots, \ell-1\}$, there exist two vertices $x_i$ and $y_i$ such that $I_i \setminus I_{i+1} = \{x_i\}$, $I_{i+1} \setminus I_i = \{y_i\}$.
Intuitively, $I_{i+1}$ is obtained from $I_i$ by jumping the token placed on $x_i$ to $y_i$.
Now, if $\max\{|I_i|: 0 \leq i \leq \ell\} \leq u$ for some positive integer $u$, and for each $i \in \{0, \dots, \ell-1\}$, there exists a vertex $x_i$ such that $I_i \Delta I_{i+1} = \{x_i\}$ then we say that $S$ is a $\mathsf{TAR}(u)$-sequence between $I_0$ and $I_\ell$.
Roughly speaking, $I_{i+1}$ is obtained from $I_i$ by either adding a token to $x_i$ or removing a token from $x_i$.
If a $\mathsf{TS}$-, $\mathsf{TJ}$-, or $\mathsf{TAR}(u)$-sequence between two $k$-path vertex covers $I$ and $J$ exists, we say that $I$ and $J$ are \emph{reconfigurable} under $\mathsf{TS}$, $\mathsf{TJ}$, or $\mathsf{TAR}$, respectively.

Using a similar argument as in~\cite[Theorem~1]{KaminskiMM12}, we can prove the following useful lemma.
\begin{lemma}
	\label{lem:TJ-TAR-equiv}
	There exists a $\mathsf{TJ}$-sequence of length $\ell$ between two $k$-path vertex covers $I, J$ of a graph $G$ with $|I| = |J| = s$ if and only if there exists a $\mathsf{TAR}(s+1)$-sequence of length $2\ell$ between them.
\end{lemma}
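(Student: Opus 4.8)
\emph{Setup.} My proof would rest on one structural fact that is specific to $k$-path vertex covers: the family of $k$-path vertex covers of $G$ is closed under taking supersets. Indeed, if $A \subseteq B \subseteq V(G)$ and $A$ meets every $k$-path of $G$, then so does $B$. I would record this monotonicity first, since it is exactly what lets a reconfiguration sequence pass safely through an intermediate token-set of size $s+1$. Note also that along any $\mathsf{TJ}$-sequence between $I$ and $J$ every member has size $s$, since each $\mathsf{TJ}$-step preserves cardinality.

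\emph{Forward direction.} Given a $\mathsf{TJ}$-sequence $\langle I_0, \dots, I_\ell\rangle$ with $|I_i| = s$ for all $i$, I would replace the $i$-th jump, which moves a token from $x_i$ to $y_i$, by the two $\mathsf{TAR}$-steps $I_i \to I_i \cup \{y_i\} \to (I_i \cup \{y_i\}) \setminus \{x_i\} = I_{i+1}$. The middle set $I_i \cup \{y_i\}$ is a superset of the $k$-path vertex cover $I_i$, hence is itself a $k$-path vertex cover by monotonicity, and it has size $s+1$; so both steps respect the bound $u = s+1$. Concatenating these blocks over $i \in \{0,\dots,\ell-1\}$ then yields a $\mathsf{TAR}(s+1)$-sequence of length exactly $2\ell$.

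\emph{Reverse direction.} For the converse I would start from a $\mathsf{TAR}(s+1)$-sequence $\langle A_0, \dots, A_{2\ell}\rangle$ with $|A_0| = |A_{2\ell}| = s$. Since each step changes the size by $\pm 1$ and the net change is $0$, there are exactly $\ell$ additions and $\ell$ removals. The plan is to normalize the sequence so that it uses only the sizes $s$ and $s+1$: once it never drops below $s$, the upper bound $s+1$ forces the size profile to alternate strictly as $s, s+1, s, \dots, s$, and each consecutive (addition, removal) block collapses into a single $\mathsf{TJ}$-step, producing a $\mathsf{TJ}$-sequence of length $\ell$. To raise the excursions below $s$, I would use monotonicity as a local swap: at a local minimum $A_{r-1} \to A_r \to A_{r+1}$ (removal of $b$ followed by addition of $c$) whose value $|A_r|$ is below $s$, we have $|A_{r-1}| \le s$, so we may instead add first and remove second, $A_{r-1} \to A_{r-1}\cup\{c\} \to A_{r+1}$; the new middle set is a superset of $A_{r-1}$, hence a valid $k$-path vertex cover of size $\le s+1$, while its neighbours $A_{r-1}$ and $A_{r+1}$ are unchanged. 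Each such swap strictly raises the size profile, so the process terminates in a sequence confined to sizes $s$ and $s+1$.

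\emph{Main obstacle.} The delicate point I expect is precisely this reverse normalization. One must exhibit a bounded potential that the valley-raising swaps strictly increase (for instance $\sum_j |A_j|$, bounded above by $(2\ell{+}1)(s{+}1)$) to guarantee termination, and one must handle the degenerate case $c = b$, where the pair cancels and the sequence shortens. This degeneracy is exactly why the clean factor-two correspondence is phrased for the pair (length $\ell$, length $2\ell$): restricting attention to irredundant—equivalently, shortest—$\mathsf{TAR}(s+1)$-sequences rules out such cancellations, after which the remaining bookkeeping is routine. The whole argument parallels the one for \textsc{Vertex Cover}/\textsc{Independent Set} in \cite[Theorem~1]{KaminskiMM12}, the only new ingredient being the superset-closure of $k$-path vertex covers taking the place of the analogous closure property used there.
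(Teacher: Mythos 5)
Your proposal is correct and follows essentially the same route as the paper's proof: the forward direction splits each jump into an addition followed by a removal (valid by superset-closure of $k$-path vertex covers), and the reverse direction normalizes the $\mathsf{TAR}(s+1)$-sequence to sizes $\{s, s+1\}$ by repeatedly swapping a removal that dips below $s$ with the next addition, cancelling the pair when the two vertices coincide. Your explicit termination potential and the remark on monotonicity are just slightly more detailed versions of what the paper leaves implicit.
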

%
%

A reconfiguration sequence of minimum length is called a \emph{shortest} reconfiguration sequence.
For a reconfiguration sequence $S = \langle I_0, I_1, \dots, I_p \rangle$, we denote by $\text{rev}(S)$ the \emph{reverse} of $S$, i.e., the reconfiguration sequence $\langle I_p, \dots, I_1, I_0\rangle$.
For two reconfiguration sequences $S = \langle I_0, I_1, \dots, I_p \rangle$ and $S^\prime = \langle I^\prime_0, I^\prime_1, \dots, I^\prime_q \rangle$ under the same reconfiguration rule, if $I_p = I^\prime_0$ then we say that they can be \emph{concatenated} and define their \emph{concatenation} $S \oplus S^\prime$ as the reconfiguration sequence $\langle I_0, I_1, \dots, I_p, I^\prime_1, \dots, I^\prime_q \rangle$.
We assume for convenience that if $S^\prime$ is empty then $S \oplus S^\prime = S^\prime \oplus S = S$.

\section{Hardness Results}
\label{sec:hardness}

\subsection{Reduction from \textsc{Vertex Cover Reconfiguration}}

In this section, we prove the following theorem using a polynomial-time reduction from \textsc{VCR}.

\begin{theorem}
	\label{thm:PSPACEC}
	\textsc{$k$-PVCR} is $\mathtt{PSPACE}$-complete under each of $\mathsf{TS}$, $\mathsf{TJ}$, and $\mathsf{TAR}$ even when the input graph is a planar graph of maximum degree four, or a bounded bandwidth graph.
	Additionally, \textsc{$k$-PVCR} is $\mathtt{PSPACE}$-complete under $\mathsf{TS}$ on chordal graphs.
\end{theorem}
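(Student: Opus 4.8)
The plan is to establish the whole theorem with a single polynomial-time reduction from \textsc{VCR} (i.e.\ \textsc{$2$-PVCR}) that preserves planarity, bounded bandwidth, chordality, and bipartiteness, so that the four families of hardness results for \textsc{VCR} quoted in the related work transfer simultaneously. Given a \textsc{VCR} instance $(G, C_0, C_1, \mathsf{R})$, I would construct $G'$ by attaching to every vertex $v \in V(G)$ a fresh pendant path $L_v$ on $k-2$ new vertices $t^v_1,\dots,t^v_{k-2}$, with edges $v\,t^v_1$ and $t^v_i t^v_{i+1}$ (for $k=2$ this is the identity map, recovering \textsc{VCR}). The target covers are the originals: since $C_0,C_1$ are vertex covers of $G$, I will show they are $k$-path vertex covers of $G'$, so $(G', C_0, C_1, \mathsf{R})$ is a legitimate \textsc{$k$-PVCR} instance. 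Attaching pendant paths keeps $G'$ planar, raises the maximum degree by exactly one (so $3\to 4$), enlarges the bandwidth by at most the constant $k-2$, and preserves chordality (each new vertex is simplicial when added) and bipartiteness (no cycle is created); recording these observations is exactly what lets me read off every claimed graph class.

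\emph{Structural core.} Two facts drive correctness. First, each branch $\{v\}\cup V(L_v)$ has only $k-1$ vertices and meets the rest of $G'$ solely through the edges of $G$ at $v$; hence every $k$-path of $G'$ uses at least one edge of $G$, and conversely, for each edge $uv\in E(G)$ the path $t^u_{k-2}\cdots t^u_1\,u\,v$ is a $k$-path whose only vertices in $V(G)$ are $u$ and $v$. Second, any $k$-path visiting a pendant vertex $t^v_i$ must contain the whole sub-path back to its root, hence contains $v$. From the first fact a vertex cover of $G$ hits every $k$-path of $G'$, so it is a $k$-path vertex cover; combined with the second fact, the projection $\pi$ that replaces each occupied pendant vertex $t^v_i$ by $v$ sends any $k$-path vertex cover $I'$ of $G'$ to one $\pi(I')\subseteq V(G)$ with $|\pi(I')|\le|I'|$, and the first fact forces $\pi(I')$ to cover every edge of $G$. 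In particular $\psi_k(G')=\psi_2(G)$, and at the threshold size $s=\psi_2(G)$ minimality rules out a single branch carrying two tokens, so $\pi$ is \emph{non-merging} there.

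\emph{Equivalence of reconfiguration.} For the forward direction I would note that every $\mathsf{R}$-step of a \textsc{VCR}-sequence in $G$ stays inside $V(G)\subseteq V(G')$; since vertex covers are $k$-path vertex covers and the original edges survive, each step is verbatim a legal $\mathsf{R}$-step in $G'$, so a \textsc{VCR}-sequence maps directly to a \textsc{$k$-PVCR}-sequence. The reverse direction is the main obstacle, because a \textsc{$k$-PVCR}-sequence in $G'$ may route tokens through pendant vertices. Here I would apply $\pi$ step by step: consecutive members differ by one move, which after projection either collapses (both endpoints share a root, giving equal projections to delete) or becomes a single legal move between vertices of $G$. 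The delicate points are that under $\mathsf{TS}$ the only edges joining two distinct branches are the original edges $uv$, so a slide that changes a projection must itself slide along $uv$ and projects to a legal slide; and under $\mathsf{TJ}$ the fixed token count $s=\psi_2(G)$ together with $|\pi(I')|=s$ keeps $\pi$ non-merging, so jumps project to jumps. This is precisely why I would reduce from the \emph{variant} of \textsc{VCR} whose two covers have the threshold size $\psi_2(G)$: it makes the token budget tight enough that the projection cannot lose tokens.

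\emph{Wrapping up.} The $\mathsf{TAR}$ case I would obtain either by running the same projection with budget $\psi_2(G)+1$ or by translating the $\mathsf{TJ}$ result through Lemma~\ref{lem:TJ-TAR-equiv}. Membership in $\mathtt{PSPACE}$ holds by the standard nondeterministic search over configurations (Savitch's theorem), which upgrades the transferred $\mathtt{PSPACE}$-hardness to $\mathtt{PSPACE}$-completeness for general, planar degree-$4$, bounded-bandwidth graphs (all rules) and for chordal and bipartite graphs (under $\mathsf{TS}$), while the $\mathtt{NP}$-hardness of \textsc{VCR} on bipartite graphs under $\mathsf{TJ}$ and $\mathsf{TAR}$ transfers directly through the same polynomial reduction. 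I expect the count-preserving reverse projection under $\mathsf{TJ}$ to be the one step needing the most care, as it is where the choice of the tight \textsc{VCR} variant is genuinely used.
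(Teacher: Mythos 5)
Your proposal is correct and follows essentially the same route as the paper: a reduction from minimum-size \textsc{VCR} obtained by attaching a pendant path to every vertex (the paper uses paths on $\lfloor(k-1)/2\rfloor$ vertices rather than $k-2$, but both lengths work), with the reverse direction handled by projecting tokens on pendant vertices back to their roots and using minimality to keep the projection non-merging, and $\mathsf{TAR}$ handled via Lemma~\ref{lem:TJ-TAR-equiv}. The only cosmetic difference is in the $\mathsf{TS}$ case, where you project $\mathsf{TS}$-sequences directly while the paper instead argues that any $\mathsf{TJ}$-sequence between minimum covers supported on $V(G)$ is automatically a $\mathsf{TS}$-sequence; both arguments rest on the same minimality observation.
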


The outline of our proof is as follows: 
\begin{itemize}
	\item[(1)] In Lemma~\ref{lem:PSPACEC-TJ}, using a \red{reduction similar to that} in~\cite{BrevsarKKS11}, we show the $\mathtt{PSPACE}$-completeness of \textsc{$k$-PVCR} under $\mathsf{TJ}$. 
	\item[(2)] In Lemma~\ref{lem:PSPACEC-TJ-TAR}, we combine (1), the known results for \textsc{VCR}, and Lemma~\ref{lem:TJ-TAR-equiv} to show the hardness results on several graphs under each of $\mathsf{TJ}$ and $\mathsf{TAR}$ as mentioned in Theorem~\ref{thm:PSPACEC}. 
	\item[(3)] Finally, in Lemma~\ref{lem:PSPACEC-TS}, we show that the hardness results under $\mathsf{TS}$ hold via the same reduction. 
\end{itemize}


\begin{lemma}
	\label{lem:PSPACEC-TJ}
	\textsc{$k$-PVCR} is $\mathtt{PSPACE}$-complete under $\mathsf{TJ}$. 
\end{lemma}
\begin{proof}
	Given two distinct \emph{minimum} $k$-path vertex covers $I$ and $J$ of a graph $G$ and a single reconfiguration rule, the \textsc{Minimum $k$-Path Vertex Cover Reconfiguration (Min-$k$-PVCR)} problem asks whether there is a reconfiguration sequence between $I$ and $J$. For $k=2$, the \textsc{Min-$k$-PVCR} problem is also known as \textsc{Minimum Vertex Cover Reconfiguration (Min-VCR)}. 
	
	Clearly, since \textsc{$k$-Path Vertex Cover} is in $\mathtt{NP}$~\cite{BrevsarKKS11}, it follows from~\cite{ItoDHPSUU11} that \textsc{$k$-PVCR} is in $\mathtt{PSPACE}$.
	Since \textsc{$k$-PVCR} is more general than \textsc{Min-$k$-PVCR}, in order to show the $\mathtt{PSPACE}$-completeness of \textsc{$k$-PVCR}, it suffices to reduce from the \textsc{Min-VCR} problem (which is known to be $\mathtt{PSPACE}$-complete~\cite{ItoDHPSUU11}) to the \textsc{Min-$k$-PVCR} problem.
	More precisely, given an instance $(G, I, J, \mathsf{TJ})$ of \textsc{Min-VCR}, we construct a corresponding instance $(G^\prime, I^\prime, J^\prime, \mathsf{TJ})$ of \textsc{Min-$k$-PVCR} as follows.
	Let $G^\prime$ be the graph obtained from $G$ by joining each vertex $x$ of $G$ to an endpoint of a new path $P^x$ on $\lfloor (k-1)/2\rfloor$ vertices.
	We choose $I^\prime = I$ and $J^\prime = J$.
	Note that each vertex cover of $G$ is also a $k$-path vertex cover of $G^\prime$, 
	Moreover, for any minimum $k$-path vertex cover $I^\prime$ of $G^\prime$, if $I^\prime$ contains a new vertex $y$ in a path $P^x$ for some vertex $x$ of $G$ then $(I^\prime \setminus \{y\}) \cup \{x\}$ is also a minimum $k$-path vertex cover of $G^\prime$, because any $k$-path covered by $y$ must also be covered by $x$.
	Consequently, $(G^\prime, I^\prime, J^\prime, \mathsf{TJ})$ is an instance of \textsc{Min-$k$-PVCR}.
	
	It is clear that this construction can be done in polynomial time.
	It remains to show that $(G, I, J, \mathsf{TJ})$ is a yes-instance of \textsc{Min-VCR} if and only if $(G^\prime, I^\prime, J^\prime, \mathsf{TJ})$ is a yes-instance of \textsc{Min-$k$-PVCR}.
	
	Assume that $(G, I, J, \mathsf{TJ})$ is a yes-instance of \textsc{Min-VCR}, that is, there exists a $\mathsf{TJ}$-sequence $\langle I = I_0, I_1, \dots, I_p = J \rangle$ between $I$ and $J$ in $G$.
	Clearly, for any $i \in \{0, 1, \dots, p\}$, the set $I_i$ is also a minimum $k$-path vertex cover of $G^\prime$.
	Then, $\langle I = I_0, I_1, \dots, I_p = J \rangle$ is also a $\mathsf{TJ}$-sequence between $I^\prime = I$ and $J^\prime = J$ in $G^\prime$.
	
	Now, assume that $(G^\prime, I^\prime, J^\prime, \mathsf{TJ})$ is a yes-instance of \textsc{Min-$k$-PVCR} in $G^\prime$, that is, there exists a $\mathsf{TJ}$-sequence $S = \langle I^\prime = I^\prime_0, I^\prime_1, \dots, I^\prime_q = J^\prime \rangle$ between $I^\prime = I$ and $J^\prime = J$ in $G^\prime$.
	We claim that $(G, I, J, \mathsf{TJ})$ is also a yes-instance by constructing a $\mathsf{TJ}$-sequence between $I$ and $J$ in $G$. 
	For $i \in \{0, 1, \dots, q\}$, let $I_i = I^\prime_i \setminus \bigcup_{x \in V(G)}V(P^x) \cap \bigcup_{x \in V(G)}\{x: I^\prime_i \cap V(P^x) \neq \emptyset\}$.
	Intuitively, $I_i$ is obtained from $I^\prime_i$ by moving each token placed at some new vertex in $P^x$ to $x$ itself.
	Since any $k$-path covered by some vertex in $P^x$ is also covered by $x$, and each $I^\prime_i$ is minimum, such moves are well-defined.
	Clearly, each $I_i$ is a minimum vertex cover of $G$.
	For $i \in \{0, 1, \dots, q-1\}$, let $x^\prime_i$ and $y^\prime_i$ be two distinct vertices of $G^\prime$ such that $I^\prime_i \setminus I^\prime_{i+1} = \{x^\prime_i\}$ and $I^\prime_{i+1} \setminus I^\prime_i = \{y^\prime_i\}$.
	Next, we will show that $I_{i+1}$ can be obtained from $I_i$ by performing at most one $\mathsf{TJ}$-step in $G$.
	\begin{itemize}
		\item {\bf Case~1: $x^\prime_i \in V(G)$ and $y^\prime_i \in V(G)$}.
		By definition, $I_i \setminus I_{i+1} = \{x^\prime_i\}$ and $I_{i+1} \setminus I_i = \{y^\prime_i\}$.
		\item {\bf Case~2: $x^\prime_i \in V(G)$ and $y^\prime_i \in V(G^\prime) \setminus V(G)$.}
		Then, $y^\prime_i$ must belong to a new path $P^y$ joined to some vertex $y \in V(G)$.
		By definition, $I_i \setminus I_{i+1} = \{x^\prime_i\}$ and $I_{i+1} \setminus I_i = \{y\}$.
		Note that if $x^\prime_i = y$, then $I_i = I_{i+1}$, and we are done.
		Moreover, as we consider minimum $k$-path vertex covers, $y \notin I^\prime_{i} \setminus \{x^\prime_i\}$ and therefore $y \notin I_{i}$; otherwise, we cannot move the token on $x^\prime_i$ to $y^\prime_i$.
		\item {\bf Case~3: $x^\prime_i \in V(G^\prime) \setminus V(G)$ and $y^\prime_i \in V(G)$.}
		As before, $x^\prime_i$ must belong to a new path $P^x$ joined to some vertex $x \in V(G)$.
		By definition, $I_i \setminus I_{i+1} = \{x\}$ and $I_{i+1} \setminus I_i = \{y^\prime_i\}$.
		Note that if $x = y^\prime_i$, then $I_i = I_{i+1}$.
		\item {\bf Case~4: $x^\prime_i \in V(G^\prime) \setminus V(G)$ and $y^\prime_i \in V(G^\prime) \setminus V(G)$.}
		As before, $x^\prime_i$ (resp. $y^\prime_i$) must belong to a new path $P^x$ (resp. $P^y$) joined to some vertex $x \in V(G)$ (resp. $y \in V(G)$).
		By definition, $I_i \setminus I_{i+1} = \{x\}$ and $I_{i+1} \setminus I_i = \{y\}$.
		Note that if $x = y$, then $I_i = I_{i+1}$.
	\end{itemize}
	Clearly, the sequence obtained from $\langle I_0, I_1, \dots, I_q \rangle$ by removing redundant vertex covers (i.e., those equal to their predecessors) is a $\mathsf{TJ}$-sequence in $G$ that reconfigures $I = I_0$ to $J = I_q$.
\end{proof}

\begin{lemma}
	\label{lem:PSPACEC-TJ-TAR}
	\textsc{$k$-PVCR} is $\mathtt{PSPACE}$-complete under each of $\mathsf{TJ}$ and $\mathsf{TAR}$ on planar graphs of maximum degree four and bounded bandwidth.
\end{lemma}
\begin{proof}
	\red{
	As we mention in Section~\ref{subsection_related_work}, it is known that \textsc{VCR} is $\mathtt{PSPACE}$-complete under each of $\mathsf{TJ}$, and $\mathsf{TAR}$ for planar graphs of maximum degree three~\cite{ItoKOSUY20}, and bounded bandwidth graphs~\cite{Wrochna18}.
	In fact, these results are also hold in the case MIN-VCR~\cite{ItoKOSUY20,Wrochna18}.
	}
	It is not hard to see that in the reduction presented in the proof of Lemma~\ref{lem:PSPACEC-TJ}, if the input graph $G$ is one of the mentioned graphs, then so is the constructed graph $G^\prime$.
	(In fact the bandwidth of $G^\prime$ is $O(k)$. However, since we defined that $k$ is a fixed integer, $G^\prime$ is of bounded bandwidth.)
	The hardness results under $\mathsf{TAR}$ are followed by combining the known results for \textsc{Vertex Cover Reconfiguration}, the above results, and Lemma~\ref{lem:TJ-TAR-equiv}.
\end{proof}

\begin{lemma}
	\label{lem:PSPACEC-TS}
	\textsc{$k$-PVCR} is $\mathtt{PSPACE}$-complete under $\mathsf{TS}$ on planar graphs of maximum degree \red{four} and bounded bandwidth, and chordal graphs.
\end{lemma}
\begin{proof}
	It is not hard to see that in the reduction presented in the proof of Lemma~\ref{lem:PSPACEC-TJ}, if the input graph $G$ is one of the mentioned graphs, then so is the constructed graph $G^\prime$.
	
	It is sufficient to show that any $\mathsf{TJ}$-sequence $S = \langle I_0, I_1, \dots, I_q \rangle$ between two minimum $k$-path vertex covers $I = I_0$ and $J = I_q$ of the constructed graph $G^\prime$ can be converted into a $\mathsf{TS}$-sequence between them in $G^\prime$.
	
	First of all, if $I_i \subseteq V(G)$ for all $i \in \{0, 1, \dots, q\}$ then we claim that $S$ itself is indeed a $\mathsf{TS}$-sequence.
	More precisely, we show that for each $i \in \{0, 1, \dots, q-1\}$, if $x_i$ and $y_i$ are two distinct vertices of $G$ such that $I_i \setminus I_{i+1} = \{x_i\}$ and $I_{i+1} \setminus I_i = \{y_i\}$ then $x_iy_i \in E(G) \subseteq E(G^\prime)$.
	Suppose to the contrary that $y_i$ is not adjacent to $x_i$.
	We note that each $I_i$ ($i \in \{0, 1, \dots, q\}$) is also a minimum vertex cover of $G$.
	Now, in order to move the token on $x_i$ to $y_i$ for obtaining a new vertex cover $I_{i+1}$ of $G$, each edge of $G$ incident with $x_i$ must already be covered by its other endpoint; otherwise, moving $x_i$ to $y_i$ left some non-covered edge.
	However, this means that one can obtain a vertex cover of smaller size by simply removing $x_i$ from $I_i$, which contradicts the fact that $I_i$ is minimum.
	Therefore, $y_i$ must be a neighbor of $x_i$.
	
	Now, from the above reduction, we know that there is always a $\mathsf{TJ}$-sequence $S^\prime$ between two $k$-path vertex covers $I^\prime = I \setminus \bigcup_{x \in V(G)}V(P^x) \cap \bigcup_{x \in V(G)}\{x: I \cap V(P^x) \neq \emptyset\}$ and $J^\prime = J \setminus \bigcup_{x \in V(G)}V(P^x) \cap \bigcup_{x \in V(G)}\{x: J \cap V(P^x) \neq \emptyset\}$, where all members of $S^\prime$ are subsets of $V(G)$.
	Here $P^x$ denotes the new path joined to the vertex $x \in V(G)$.
	As a result, $S^\prime$ is also a $\mathsf{TS}$-sequence in $G^\prime$.
	To construct a $\mathsf{TS}$-sequence between $I$ and $J$, it suffices to show that one can construct a $\mathsf{TS}$-sequence $S^{\prime\prime}$ between $I$ and $I^\prime$ in $G^\prime$.
	In a similar manner, we will be able to construct a $\mathsf{TS}$-sequence between $J$ and $J^\prime$, and a $\mathsf{TS}$-sequence between $I$ and $J$ can be formed by simply reconfiguring $I$ to $I^\prime$, then $I^\prime$ to $J^\prime$, and finally $J^\prime$ to $J$.
	Let $x \in V(G)$ be such that $I \cap V(P^x) = \{x^\prime\}$.
	Since $I$ is a minimum $k$-path vertex cover of $G^\prime$, we have $x \notin I$.
	We claim that $I$ can be reconfigured to $I \setminus \{x^\prime\} \cup \{x\}$ using $\mathsf{TS}$-steps.
	Let $P = v_0v_1\dots v_\ell$ ($0 \leq \ell \leq \lfloor (k-1)/2 \rfloor$) be the unique path in $G^\prime$ joining $v_0 = x$ and $v_\ell = x^\prime$.
	Note that for each $j \in \{1, \dots, \ell\}$, any $k$-path covered by $v_j$ is also covered by each vertex in $\{v_0, \dots, v_{j-1}\}$.
	Moreover, as we consider minimum $k$-path vertex covers, exactly one of $v_j$ ($j \in \{0, 1, \dots, \ell\}$) contains a token.
	Hence, one can obtain $I \setminus \{x^\prime\} \cup \{x\}$ from $I$ by simply sliding the token on $x^\prime \in I$ to $x$ along the path $P$.
	Applying this process repeatedly for each $x \in V(G)$ where $I \cap V(P^x) \neq \emptyset$, we obtain a $\mathsf{TS}$-sequence in $G^\prime$ between $I$ and $I^\prime$.
\end{proof}

Our proof of Theorem~\ref{thm:PSPACEC} is complete.

\subsection{Reduction from \textsc{Nondeterministic Constraint Logic}}

In Theorem~\ref{thm:PSPACEC}, we show the $\mathtt{PSPACE}$-completeness for planar graphs of maximum degree \red{four}.
Furthermore, using a reduction from the \textsc{Nondeterministic Constraint Logic}~\cite{HearnD05,Z15} (NCL, for short), we can improve this result as follows. 

\begin{theorem}
	\label{thm:PSPACEC-planar-maxdeg3}
	\textsc{$k$-PVCR} remains $\mathtt{PSPACE}$-complete under each of $\mathsf{TS}$, $\mathsf{TJ}$, and $\mathsf{TAR}$ even on planar graphs of bounded bandwidth and maximum degree \red{three}.
\end{theorem}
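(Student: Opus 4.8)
The plan is to give a polynomial-time reduction from the reconfiguration variant of \textsc{Nondeterministic Constraint Logic (NCL)}, which is known to be $\mathtt{PSPACE}$-complete already on \emph{planar} constraint graphs of maximum degree $3$ built solely from \textsc{And} and \textsc{Or} vertices~\cite{HearnD05}, and which remains $\mathtt{PSPACE}$-complete when the constraint graph additionally has bounded bandwidth~\cite{Z15}. Recall that in NCL each edge is assigned a weight in $\{1,2\}$, a configuration is an orientation of the edges in which the total weight of the edges directed into each vertex is at least $2$, and a single move reverses one edge while keeping every inflow constraint satisfied. Starting from such an instance, I would replace every edge and every vertex of the constraint graph by a constant-size (depending only on $k$) \emph{gadget}, wiring the gadgets together exactly along the planar embedding and the linear layout witnessing bounded bandwidth, so that planarity, maximum degree $3$, and bounded bandwidth are all preserved (each is blown up only by a factor of $O(k)$).

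The heart of the construction is the family of gadgets. First I would design an \emph{edge gadget}: a small path/tree in which, in every minimum $k$-path vertex cover, exactly one token is ``free'' and may occupy one of two canonical positions, one near each endpoint; these two positions encode the two orientations of the NCL edge, and sliding this token between them is the only way to flip the orientation. Using the identity $\psi_k(P_n)=\lfloor n/k\rfloor$, the remaining tokens of the gadget are forced, which pins down the two canonical states. Second, I would design an \textsc{And}-vertex gadget and an \textsc{Or}-vertex gadget, each containing a bundle of $k$-paths that can be covered \emph{only} by tokens arriving from the incident edge gadgets; the number of such $k$-paths and the way an incident token covers them encode the edge weights (a weight-$2$ edge covers ``$2$ units'', a weight-$1$ edge covers ``$1$ unit''), so that the vertex gadget admits a valid minimum $k$-path vertex cover precisely when the corresponding inflow constraint $\ge 2$ holds. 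In this way an NCL configuration corresponds bijectively to a canonical $k$-path vertex cover of the constructed graph, and one NCL edge-flip corresponds to a short, forced block of token slides inside the relevant edge gadget.

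With the correspondence in hand, I would prove both directions: any legal NCL move lifts to a reconfiguration subsequence between canonical $k$-path vertex covers, and conversely any reconfiguration step in the constructed graph either stays within a gadget's canonical states or is blocked by an uncovered $k$-path, so that its projection onto edge orientations is a legal NCL move. Since the free tokens are physically connected to their two target positions, the argument works verbatim under $\mathsf{TS}$; the results for $\mathsf{TJ}$ and $\mathsf{TAR}$ then follow by observing that the same canonical configurations and the same blocking $k$-paths constrain jumps and additions/removals identically (invoking Lemma~\ref{lem:TJ-TAR-equiv} to pass between $\mathsf{TJ}$ and $\mathsf{TAR}$), so that all three rules give the same yes/no answer.

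I expect the main obstacle to be making the \textsc{And}/\textsc{Or} gadgets faithfully enforce the inflow semantics through $k$-\emph{path} covering rather than the simpler edge covering available for $k=2$, while simultaneously respecting the degree bound $3$: encoding the weight distinction (two weight-$1$ and one weight-$2$ edge at an \textsc{And} vertex, three weight-$2$ edges at an \textsc{Or} vertex) as $k$-path multiplicities, and, crucially, ruling out \emph{spurious} $k$-paths that thread through several gadgets, since for $k>2$ a path on $k$ vertices may wind across gadget boundaries and could either impose unintended covering requirements or open illegitimate token moves. Controlling these boundary $k$-paths, likely by inserting short buffer paths (subdivisions) whose lengths are tuned to $k$ between adjacent gadgets, is the delicate part that the remainder of the proof must carefully verify.
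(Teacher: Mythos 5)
Your overall strategy is exactly the paper's: reduce from NCL reconfiguration on planar, bounded-bandwidth, maximum-degree-$3$ \textsc{and}/\textsc{or} constraint graphs~\cite{HearnD05,Z15}, replace vertices and edges by constant-size (in $k$) gadgets that preserve planarity, degree, and bandwidth, set up a correspondence between NCL configurations and canonical $k$-path vertex covers, and argue both directions plus the transfer among $\mathsf{TS}$, $\mathsf{TJ}$, and $\mathsf{TAR}$. You also correctly identify the two genuine difficulties: encoding the weight-$1$/weight-$2$ inflow semantics via $k$-path coverage rather than edge coverage, and controlling $k$-paths that thread across gadget boundaries.

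The gap is that the proposal stops exactly where the proof begins: no gadget is actually constructed or verified, and the entire substance of this theorem is the gadgets. The paper's concrete choices are: each NCL edge becomes a shared ``connecting part'' isomorphic to $P_{2k-2}$, whose single free token must sit on one of the \emph{two central} vertices (the only vertices covering all $k$-paths of $P_{2k-2}$), these two positions encoding the two orientations; the \textsc{and} gadget attaches a main part $P_k$ (forcing one more token) with the two weight-$1$ connecting parts at one endpoint and the weight-$2$ part at the other; the \textsc{or} gadget uses a main part $C_{k+1}$ (forcing two tokens) with the three connecting parts attached to three distinct cycle vertices. Correctness is then checked by exhaustively verifying that each gadget's reconfiguration graph of $k$-PVCs is ``internally connected'' and, after contraction, reproduces the adjacency graph of valid NCL orientations. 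Two specifics in your sketch would need repair: the canonical positions of the free token are the two \emph{centers} of the edge path, not positions ``near each endpoint'' (a near-endpoint token cannot cover the $k$-path at the far end), and it is precisely the length $2k-2$ of these connecting parts --- not extra buffer subdivisions --- that neutralizes the cross-gadget $k$-paths you worry about. Finally, your claim that the argument transfers ``verbatim'' among the three rules is slightly too strong: in the paper's gadgets some internal adjacencies exist only under $\mathsf{TJ}$ and not $\mathsf{TS}$, and one must check that the contracted reconfiguration graphs nonetheless coincide; $\mathsf{TAR}$ is then handled via the argument of Lemma~\ref{lem:TJ-TAR-equiv}.
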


In this section, we briefly define NCL and show the cases for $\mathsf{TS}$ and $\mathsf{TJ}$, because the case for $\mathsf{TAR}$ can be shown similar to the proof of Lemma~\ref{lem:TJ-TAR-equiv}.
This result can be obtained by constructing polynomial-time reductions from NCL---a well-known $\mathtt{PSPACE}$-complete problem first introduced by Hearn and Demaine~\cite{HearnD05}.
This problem is often used to prove the computational hardness of puzzles and games, because a reduction from this problem requires to construct only two types of gadgets, called \textsc{and} and \textsc{or} gadgets.

\subsubsection{\textsc{Nondeterministic Constraint Logic}} 
\begin{figure}[b]
	\begin{minipage}{0.33\hsize}
		\centering
		\scalebox{0.7}{
			\begin{tikzpicture}
				\node (v1) at (1,3) [circle, draw] {};
				\node (v2) at (5,3) [circle, draw] {};
				\node (v3) at (2,2) [circle, draw] {};
				\node (v4) at (4,2) [circle, draw] {};
				\node (v5) at (1,1) [circle, draw] {};
				\node (v6) at (5,1) [circle, draw] {};
				\draw (v1) -- node[midway, above left]{2} (v2) [ultra thick, blue] {};
				\draw (v1) -- node[midway, above right]{2} (v3) [ultra thick, blue] {};
				\draw (v1) -- node[midway, above left]{2} (v5) [ultra thick, blue] {};
				\draw (v3) -- node[midway, above left]{2} (v4) [ultra thick, blue] {};
				\draw (v3) -- node[midway, right]{2} (v5) [ultra thick, blue] {};
				\draw (v5) -- node[midway, above left]{2} (v6) [ultra thick, blue] {};
				\draw (v2) -- node[midway, left]{1} (v4) [red] {};
				\draw (v2) -- node[midway, above right]{1} (v6) [red] {};
				\draw (v4) -- node[midway, left]{1} (v6) [red] {};
				\draw (3.1,3.1) -- (2.9,3.0) -- (3.1,2.9) [ultra thick] {};
				\draw (2.9,2.1) -- (3.1,2.0) -- (2.9,1.9) [ultra thick] {};
				\draw (2.9,1.1) -- (3.1,1.0) -- (2.9,0.9) [ultra thick] {};
				\draw (0.9,2.1) -- (1.0,1.9) -- (1.1,2.1) [ultra thick] {};
				\draw (4.9,1.9) -- (5.0,2.1) -- (5.1,1.9) [] {};
				\draw (1.58,2.58) -- (1.65,2.35) -- (1.42,2.42) [ultra thick] {};
				\draw (1.42,1.58) -- (1.35,1.35) -- (1.58,1.42) [ultra thick] {};
				\draw (4.42,2.58) -- (4.65,2.65) -- (4.58,2.42) [] {};
				\draw (4.42,1.42) -- (4.65,1.35) -- (4.58,1.58) [] {};
			\end{tikzpicture}
		}
		
		(a)
	\end{minipage}
	\begin{minipage}{0.33\hsize}
		\centering
		\scalebox{1.0}{
			\begin{tikzpicture}
				\node (v0) at (1.0,1.0) [circle, draw] {};
				\node (v1) at (0.2,0.5) [label=above:1] {};
				\node (v2) at (1.8,0.5) [label=above:1] {};
				\node (v3) at (1.0,2.0) [label=below right:2] {};
				\draw (v0) -- (v1) [red] {};
				\draw (v0) -- (v2) [red] {};
				\draw (v0) -- (v3) [ultra thick, blue] {};
			\end{tikzpicture}
		}
		
		(b)
	\end{minipage}
	\begin{minipage}{0.33\hsize}
		\centering
		\scalebox{1.0}{
			\begin{tikzpicture}
				\node (v0) at (1.0,1.0) [circle, draw] {};
				\node (v1) at (0.2,0.5) [label=above:2] {};
				\node (v2) at (1.8,0.5) [label=above:2] {};
				\node (v3) at (1.0,2.0) [label=below right:2] {};
				\draw (v0) -- (v1) [ultra thick, blue] {};
				\draw (v0) -- (v2) [ultra thick, blue] {};
				\draw (v0) -- (v3) [ultra thick, blue] {};
			\end{tikzpicture}
		}
		
		(c)
	\end{minipage}
	\caption{(a) A configuration of an NCL machine, (b) NCL \textsc{and} vertex, and (c) NCL \textsc{or} vertex.}
	\label{fig:NCL}
\end{figure}
Now we define NCL problem~\cite{HearnD05}.
An NCL ``machine'' is an undirected graph together with an assignment of weights from $\{1,2\}$ to each edge of the graph. 
An (\emph{NCL}) \emph{configuration} of this machine is an orientation (direction) of the edges such that the sum of weights of in-coming arcs at each vertex is at least two. 
\figurename~\ref{fig:NCL}(a) illustrates a configuration of an NCL machine, where each weight-2 edge is depicted by a thick (blue) line and each weight-1 edge by a thin (red) line. 
Then, two NCL configurations are \emph{adjacent} if they differ in a single edge direction. 
Given an NCL machine and its two configurations, it is known to be $\mathtt{PSPACE}$-complete to determine whether there exists a sequence of adjacent NCL configurations which transforms one into the other~\cite{HearnD05}.     

An NCL machine is called an \emph{\textsc{and}/\textsc{or} constraint graph} if it consists of only two types of vertices, called ``NCL \textsc{and} vertices'' and ``NCL \textsc{or} vertices'' defined as follows:
\begin{itemize}
	\item A vertex of degree three is called an \emph{NCL \textsc{and} vertex} if its three incident edges have weights $1$, $1$ and $2$. 
	(See \figurename~\ref{fig:NCL}(b).)
	An NCL \textsc{and} vertex $u$ behaves as a logical \textsc{and}, in the following sense: 
	the weight-$2$ edge can be directed outward for $u$ if and only if both two weight-$1$ edges are directed inward for $u$. 
	Note that, however, the weight-$2$ edge is not necessarily directed outward even when both weight-$1$ edges are directed inward. 
	\item A vertex of degree three is called an \emph{NCL \textsc{or} vertex} if its three incident edges have weights $2$, $2$ and $2$. 
	(See \figurename~\ref{fig:NCL}(c).)
	An NCL \textsc{or} vertex $v$ behaves as a logical \textsc{or}: 
	one of the three edges can be directed outward for $v$ if and only if at least one of the other two edges is directed inward for $v$. 
\end{itemize}
It should be noted that, although it is natural to think of NCL \textsc{and}/\textsc{or} vertices as having inputs and outputs, there is nothing enforcing this interpretation; 
especially for NCL \textsc{or} vertices, the choice of input and output is entirely arbitrary because an NCL \textsc{or} vertex is symmetric. 

For example, the NCL machine in \figurename~\ref{fig:NCL}(a) is an \textsc{and}/\textsc{or} constraint graph. 
From now on, we call an \textsc{and}/\textsc{or} constraint graph simply an \emph{NCL machine}, and call an edge in an NCL machine an \emph{NCL edge}. 
NCL remains $\mathtt{PSPACE}$-complete even if an input NCL machine is planar and bounded bandwidth~\cite{Z15}. 

\subsubsection{Constructing gadgets} 

In our reduction, we construct two types of gadgets named \emph{\textsc{and}/\textsc{or} gadgets}, which correspond to NCL \textsc{and}/\textsc{or} vertices, respectively.
Both \textsc{and}/\textsc{or} gadgets consist of one \emph{main part} and three \emph{connecting parts}.
Each connecting part corresponds to each incident NCL edge of the corresponding vertex.
Then we replace each of vertices in \red{the} NCL machine with its corresponding gadget so that each pair of adjacent vertices sharing their connecting parts.

Each connecting part is formed $P_{2k-2}$.
Note that if we want to cover this path with only one vertex, we must choose one of the two center vertices.
In our reduction, choosing one of the two vertices corresponds to inward direction, and the other one corresponds to outward direction.

Now we explain the construction of \red{the} \textsc{and} gadget.
Consider an NCL \textsc{and} vertex.
\figurename~\ref{figure:and_reconf}(a) illustrates all valid orientations of the edges incident to an NCL \textsc{and} vertex.
Two boxes are joined by an edge if their \red{orientations} are adjacent.
We construct our \textsc{and} gadget so that it correctly simulates this reconfiguration graph in Fig.~\ref{figure:and_reconf}(a).

\figurename~\ref{figure:NCLgadget}(a) illustrates our \textsc{and} gadget for the case where $k=3$.
The main part of \textsc{and} gadget forms $P_k$.
Note that we must choose at least one of the vertices on this part to obtain $k$-PVC.
Then we connect one endpoint to two connecting parts which \red{corresponds to} weight-1 edges, and connect the other endpoint to a connecting part which \red{corresponds to} weight-2 edge.
\red{
If at least one of the weight-1 edges is directed outward, we must choose the endpoint of main part next to the connecting part corresponding to the weight-1 edge to obtain $k$-PVC.
On the other hand, if the weight-2 edge is directed outward, we must choose the endpoint of main part next to the connecting part corresponding to the weight-2 edge to obtain $k$-PVC.
}
Fig.~\ref{figure:and_reconf}(b) illustrates the reconfiguration graph for all $3$-PVCs of the \textsc{and} gadget where we allow to choose at most four vertices as $3$-PVC.
Each large dashed box surrounds all $3$-PVCs choosing the same vertices from their connecting part.
Then we can see that these $3$-PVCs are ``internally connected,'' that is, any two $3$-PVCs in the same dashed box are reconfigurable with each other without changing the vertices in connecting parts.
Furthermore, this gadget preserves the ``external adjacency'' in the following sense: 
if we contract the $3$-PVCs in the same dashed box in Fig.~\ref{figure:and_reconf}(b) into a single vertex, then the resulting graph is exactly the graph depicted in Fig.~\ref{figure:and_reconf}(a). 
Therefore, we can conclude that our \textsc{and} gadget correctly works as an NCL \textsc{and} vertex.

Next we explain the construction of \textsc{or} gadget.
\figurename~\ref{figure:NCLgadget}(b) illustrates our \textsc{or} gadget for the case where $k=3$.
The main part of \textsc{or} gadget forms $C_{k+1}$ (cycle consists of $k+1$ vertices).
Note that we must choose at least two of the vertices on this part to obtain $k$-PVC.
Then we arbitrary choose three distinct vertices from this cycle and connect them to three connecting parts one by one.
\red{
If a weight-2 edge is directed outward, we must choose the vertex in main part next to the connecting part corresponding to the edge to obtain $k$-PVC.
}
To verify that this \textsc{or} gadget correctly simulates an NCL \textsc{or} vertex, it suffices to show that this gadget satisfies both the internal connectedness and the external adjacency.
Since this gadget has only 18 $3$-PVCs where we allow to choose at most five vertices as $3$-PVC.
Therefore, by same way to \textsc{and} gadget, we can easily check these sufficient conditions. \red{(See Fig.~\ref{figure:or_reconf}.)}

\begin{figure}[t]
	\begin{minipage}{0.5\hsize}
		\centering
		\scalebox{0.65}{
			\begin{tikzpicture}
				\node (a1) at (1,2) [circle, draw] {};
				\node (a2) at (2,2) [circle, draw, fill=black] {};
				\node (a3) at (2,1) [circle, draw, fill=gray] {};
				\node (a4) at (1,1) [circle, draw] {};
				\node (b1) at (5,2) [circle, draw] {};
				\node (b2) at (4,2) [circle, draw, fill=black] {};
				\node (b3) at (4,1) [circle, draw, fill=gray] {};
				\node (b4) at (5,1) [circle, draw] {};
				\node (c1) at (4,5) [circle, draw] {};
				\node (c2) at (3,5) [circle, draw, fill=black] {};
				\node (c3) at (3,6) [circle, draw, fill=gray] {};
				\node (c4) at (4,6) [circle, draw] {};
				\node (v1) at (3,4) [circle, draw] {};
				\node (v2) at (3,3) [circle, draw] {};
				\node (v3) at (3,2) [circle, draw] {};
				\draw (a1) -- (a2) -- (a3) -- (a4);
				\draw (b1) -- (b2) -- (b3) -- (b4);
				\draw (c1) -- (c2) -- (c3) -- (c4);
				\draw (v1) -- (v2) -- (v3);
				\draw (a2) -- (v3);
				\draw (b2) -- (v3);
				\draw (c2) -- (v1);
				\draw [densely dashed, thick, rounded corners] ($(a1)+(-0.3,0.3)$) -- ($(a2)+(0.3,0.3)$) -- ($(a3)+(0.3,-0.3)$) -- ($(a4)+(-0.3,-0.3)$) -- cycle;
				\draw [densely dashed, thick, rounded corners] ($(b1)+(0.3,0.3)$) -- ($(b2)+(-0.3,0.3)$) -- ($(b3)+(-0.3,-0.3)$) -- ($(b4)+(0.3,-0.3)$) -- cycle;
				\draw [densely dashed, thick, rounded corners] ($(c1)+(0.3,-0.3)$) -- ($(c2)+(-0.3,-0.3)$) -- ($(c3)+(-0.3,0.3)$) -- ($(c4)+(0.3,0.3)$) -- cycle;
			\end{tikzpicture}
		}
	\end{minipage}
	\begin{minipage}{0.5\hsize}
		\centering
		\scalebox{0.65}{
			\begin{tikzpicture}
				\node (a1) at (1,2) [circle, draw] {};
				\node (a2) at (2,2) [circle, draw, fill=black] {};
				\node (a3) at (2,1) [circle, draw, fill=gray] {};
				\node (a4) at (1,1) [circle, draw] {};
				\node (b1) at (7,2) [circle, draw] {};
				\node (b2) at (6,2) [circle, draw, fill=black] {};
				\node (b3) at (6,1) [circle, draw, fill=gray] {};
				\node (b4) at (7,1) [circle, draw] {};
				\node (c1) at (5,4) [circle, draw] {};
				\node (c2) at (4,4) [circle, draw, fill=black] {};
				\node (c3) at (4,5) [circle, draw, fill=gray] {};
				\node (c4) at (5,5) [circle, draw] {};
				\node (v1) at (3,2) [circle, draw] {};
				\node (v2) at (4,1) [circle, draw] {};
				\node (v3) at (5,2) [circle, draw] {};
				\node (v4) at (4,3) [circle, draw] {};
				\draw (a1) -- (a2) -- (a3) -- (a4);
				\draw (b1) -- (b2) -- (b3) -- (b4);
				\draw (c1) -- (c2) -- (c3) -- (c4);
				\draw (v1) -- (v2) -- (v3) -- (v4) -- (v1);
				\draw (a2) -- (v1);
				\draw (b2) -- (v3);
				\draw (c2) -- (v4);
				\draw [densely dashed, thick, rounded corners] ($(a1)+(-0.3,0.3)$) -- ($(a2)+(0.3,0.3)$) -- ($(a3)+(0.3,-0.3)$) -- ($(a4)+(-0.3,-0.3)$) -- cycle;
				\draw [densely dashed, thick, rounded corners] ($(b1)+(0.3,0.3)$) -- ($(b2)+(-0.3,0.3)$) -- ($(b3)+(-0.3,-0.3)$) -- ($(b4)+(0.3,-0.3)$) -- cycle;
				\draw [densely dashed, thick, rounded corners] ($(c1)+(0.3,-0.3)$) -- ($(c2)+(-0.3,-0.3)$) -- ($(c3)+(-0.3,0.3)$) -- ($(c4)+(0.3,0.3)$) -- cycle;
			\end{tikzpicture}
		}
	\end{minipage}
	\begin{minipage}{0.5\hsize}
		\centering
		(a)
	\end{minipage}
	\begin{minipage}{0.5\hsize}
		\centering
		(b)
	\end{minipage}
	\caption{Gadgets for $3$-PVCR. (a) The \textsc{and} gadget. (b) The \textsc{or} gadget.
        	\red{
               Each dashed rectangle represents a connecting part.
               For each connecting part, choosing the black vertex corresponds to inward direction,  and the gray vertex corresponds to outward direction.
           }
        }
	\label{figure:NCLgadget}
\end{figure}

\begin{figure*}[t]
	\centering
	\scalebox{0.8}{
		\begin{tikzpicture}
			\foreach \x / \y in {-0.2/3.5, 4.3/1.75, 8.8/3.5, 8.8/0, 11.8/0}{
				\node (v0) at (1.0+\x,1.0+\y) [circle, draw] {};
				\node (v1) at (0.2+\x,0.5+\y) [label=above:1] {};
				\node (v2) at (1.8+\x,0.5+\y) [label=above:1] {};
				\node (v3) at (1.0+\x,2.0+\y) [label=below right:2] {};
				\draw (v0) -- (v1) [red] {};
				\draw (v0) -- (v2) [red] {};
				\draw (v0) -- (v3) [ultra thick, blue] {};
			}
			\draw (0.7,4.9) -- (0.8,5.1) -- (0.9,4.9) [ultra thick] {};
			\foreach \x / \y in {4.3/1.75, 8.8/3.5, 8.8/0, 11.8/0}{
				\draw (0.9+\x,1.6+\y) -- (1.0+\x,1.4+\y) -- (1.1+\x,1.6+\y) [ultra thick] {};
			}
			\foreach \x / \y in {-0.2/3.5, 4.3/1.75, 8.8/3.5}{
				\draw (1.53+\x,0.78+\y) -- (1.32+\x,0.8+\y) -- (1.43+\x,0.62+\y) [] {};
			}            
			\foreach \x / \y in {8.8/0, 11.8/0}{
				\draw (1.37+\x,0.88+\y) -- (1.48+\x,0.7+\y) -- (1.27+\x,0.72+\y) [] {};
			}
			\foreach \x / \y in {-0.2/3.5, 4.3/1.75, 8.8/0}{
				\draw (0.47+\x,0.78+\y) -- (0.68+\x,0.8+\y) -- (0.57+\x,0.62+\y) [] {};
			}            
			\foreach \x / \y in {8.8/3.5, 11.8/0}{
				\draw (0.63+\x,0.88+\y) -- (0.52+\x,0.7+\y) -- (0.73+\x,0.72+\y) [] {};
			}            
			\draw (-0.6,2.9) rectangle (2.2,6.1) [thick, dashed, green];
			\draw (2.4,-0.6) rectangle (8.2,6.1) [thick, dashed, green];
			\draw (8.4,-0.6) rectangle (11.2,2.6) [thick, dashed, green];
			\draw (8.4,2.9) rectangle (11.2,6.1) [thick, dashed, green];
			\draw (11.4,-0.6) rectangle (14.2,2.6) [thick, dashed, green];
			\draw (2.2,4.5) -- (2.4,4.5) [ultra thick];
			\draw (8.2,4.5) -- (8.4,4.5) [ultra thick];
			\draw (8.2,1.0) -- (8.4,1.0) [ultra thick];
			\draw (11.2,1.0) -- (11.4,1.0) [ultra thick];
			\draw (11.2,2.9) -- (11.4,2.6) [ultra thick];
		\end{tikzpicture}
	}
	\\(a)\\~\\
	\scalebox{0.8}{
		\begin{tikzpicture}
			\foreach \x / \y in {0/3.5, 3/3.5, 9/3.5, 3/0, 6/0, 9/0, 12/0}{
				\node (a1) at (0.0+\x,0.4+\y) [scale=0.5, circle, draw] {};
				\node (a2) at (0.4+\x,0.4+\y) [scale=0.5, circle, draw] {};
				\node (a3) at (0.4+\x,0.0+\y) [scale=0.5, circle, draw] {};
				\node (a4) at (0.0+\x,0.0+\y) [scale=0.5, circle, draw] {};
				\node (b1) at (1.6+\x,0.4+\y) [scale=0.5, circle, draw] {};
				\node (b2) at (1.2+\x,0.4+\y) [scale=0.5, circle, draw] {};
				\node (b3) at (1.2+\x,0.0+\y) [scale=0.5, circle, draw] {};
				\node (b4) at (1.6+\x,0.0+\y) [scale=0.5, circle, draw] {};
				\node (c1) at (1.2+\x,1.6+\y) [scale=0.5, circle, draw] {};
				\node (c2) at (0.8+\x,1.6+\y) [scale=0.5, circle, draw] {};
				\node (c3) at (0.8+\x,2.0+\y) [scale=0.5, circle, draw] {};
				\node (c4) at (1.2+\x,2.0+\y) [scale=0.5, circle, draw] {};
				\node (v1) at (0.8+\x,1.2+\y) [scale=0.5, circle, draw] {};
				\node (v2) at (0.8+\x,0.8+\y) [scale=0.5, circle, draw] {};
				\node (v3) at (0.8+\x,0.4+\y) [scale=0.5, circle, draw] {};
				\draw (a1) -- (a2) -- (a3) -- (a4);
				\draw (b1) -- (b2) -- (b3) -- (b4);
				\draw (c1) -- (c2) -- (c3) -- (c4);
				\draw (v1) -- (v2) -- (v3);
				\draw (a2) -- (v3);
				\draw (b2) -- (v3);
				\draw (c2) -- (v1);
				\draw (-0.4+\x,-0.4+\y) rectangle (2.0+\x,2.4+\y); 
			}
			\draw (2.0,4.5) -- (2.6,4.5);
			\draw (5.0,1.0) -- (5.6,1.0);
			\draw (8.0,1.0) -- (8.6,1.0);
			\draw (11.0,1.0) -- (11.6,1.0);
			\draw (3.8,3.1) -- (3.8,2.4);
			\draw (8.6,3.1) -- (8.0,2.4);
			\draw (11.6,2.4) -- (11.0,3.1);
			\draw (5.0,3.1) -- (5.6,2.4) [dashed];
			\foreach \x / \y in {0.4/3.9, 0.8/5.5, 0.8/4.7, 1.2/3.9, 3.4/3.9, 3.8/5.1, 3.8/4.7, 4.2/3.9, 3.4/0.4, 3.8/1.6, 3.8/0.8, 4.2/0.4, 6.4/0.4, 6.8/1.6, 6.8/0.4, 7.2/0.4, 9.4/3.5, 9.8/5.1, 9.8/3.9, 10.2/3.9, 9.4/0.4, 9.8/1.6, 9.8/0.4, 10.2/0.0, 12.4/0.0, 12.8/1.6, 12.8/0.4, 13.2/0.0}{
				\node (t) at (\x,\y) [scale=0.5, circle, draw, fill=black] {};
			}
			\draw (-0.6,2.9) rectangle (2.2,6.1) [thick, dashed, green];
			\draw (2.4,-0.6) rectangle (8.2,6.1) [thick, dashed, green];
			\draw (8.4,-0.6) rectangle (11.2,2.6) [thick, dashed, green];
			\draw (8.4,2.9) rectangle (11.2,6.1) [thick, dashed, green];
			\draw (11.4,-0.6) rectangle (14.2,2.6) [thick, dashed, green];
		\end{tikzpicture}
	}
	\\(b)
	\caption{(a) All vaild orientations of the edges incident to an NCL \textsc{and} vertex, and (b) all $3$-PVCs of the \textsc{and} gadget. The $3$-PVCs connected by an edge are adjacent by $\mathsf{TJ}$/$\mathsf{TS}$ rules, while the $3$-PVCs connected by dashed edge are adjacent only by $\mathsf{TJ}$ rule.}
	\label{figure:and_reconf}
\end{figure*}

\begin{figure*}[t]
		\centering
		\scalebox{0.65}{
			\begin{tikzpicture}
				\coordinate (coa_1) at (0,0);
				\coordinate (coa_2) at (3* 0.8660,2* 0.5);
				\coordinate (coa_3) at (3* 0.0000,2* 1.0);
				\coordinate (coa_4) at (3*-0.8660,2* 0.5);
				\coordinate (coa_5) at (3*-0.8660,2*-0.5);
				\coordinate (coa_6) at (3* 0.0000,2*-1.0);
				\coordinate (coa_7) at (3* 0.8660,2*-0.5);
				\draw ($(coa_1)+(1.0,1.0)$) -- ($(coa_3)+(1.0,1.0)$) [ultra thick] {};
				\draw ($(coa_1)+(1.0,1.0)$) -- ($(coa_5)+(1.0,1.0)$) [ultra thick] {};
				\draw ($(coa_1)+(1.0,1.0)$) -- ($(coa_7)+(1.0,1.0)$) [ultra thick] {};
				\draw ($(coa_2)+(1.0,1.0)$) -- ($(coa_3)+(1.0,1.0)$) -- ($(coa_4)+(1.0,1.0)$) -- ($(coa_5)+(1.0,1.0)$) -- ($(coa_6)+(1.0,1.0)$) -- ($(coa_7)+(1.0,1.0)$) -- ($(coa_2)+(1.0,1.0)$) [ultra thick] {};
				\foreach \n in {1, 2, 3, 4, 5, 6, 7} {
					\draw ($(coa_\n)+(0,0.3)$) rectangle ($(coa_\n)+(2.0,2.0)$) [thick, dashed, green, fill=white];
					\node (\n_v0) at ($(coa_\n)+(1.0,1.0)$) [circle, draw] {};
					\node (\n_v1) at ($(coa_\n)+(0.2,0.5)$) [label=above:2] {};
					\node (\n_v2) at ($(coa_\n)+(1.8,0.5)$) [label=above:2] {};
					\node (\n_v3) at ($(coa_\n)+(1.0,2.0)$) [label=below right:2] {};
					\draw (\n_v0) -- (\n_v1) [ultra thick, blue] {};
					\draw (\n_v0) -- (\n_v2) [ultra thick, blue] {};
					\draw (\n_v0) -- (\n_v3) [ultra thick, blue] {};
				}
				\foreach \n in {2, 3, 4} \draw ($(coa_\n)+(0.9,1.4)$) -- ($(coa_\n)+(1.0,1.6)$) -- ($(coa_\n)+(1.1,1.4)$) [ultra thick] {};
				\foreach \n in {1, 5, 6, 7} \draw ($(coa_\n)+(0.9,1.6)$) -- ($(coa_\n)+(1.0,1.4)$) -- ($(coa_\n)+(1.1,1.6)$) [ultra thick] {};
				\foreach \n in {2, 6, 7} \draw ($(coa_\n)+(1.37,0.88)$) -- ($(coa_\n)+(1.48,0.7)$) -- ($(coa_\n)+(1.27,0.72)$) [ultra thick] {};
				\foreach \n in {1, 3, 4, 5} \draw ($(coa_\n)+(1.53,0.78)$) -- ($(coa_\n)+(1.32,0.8)$) -- ($(coa_\n)+(1.43,0.62)$) [ultra thick] {};
				\foreach \n in {4, 5, 6} \draw ($(coa_\n)+(0.63,0.88)$) -- ($(coa_\n)+(0.52,0.7)$) -- ($(coa_\n)+(0.73,0.72)$) [ultra thick] {};
				\foreach \n in {1, 2, 3, 7} \draw ($(coa_\n)+(0.47,0.78)$) -- ($(coa_\n)+(0.68,0.8)$) -- ($(coa_\n)+(0.57,0.62)$) [ultra thick] {};
			\end{tikzpicture}
		}
		\\ (a)
		\\ ~
		\\
		\scalebox{0.65}{
			\begin{tikzpicture}[scale = 0.3]
				\coordinate (a_11) at (12* 0.8660,12* 0.5);
				\coordinate (a_12) at (12* 0.0000,12* 1.0);
				\coordinate (a_13) at (12*-0.8660,12* 0.5);
				\coordinate (a_14) at (12*-0.8660,12*-0.5);
				\coordinate (a_15) at (12* 0.0000,12*-1.0);
				\coordinate (a_16) at (12* 0.8660,12*-0.5);
				
				\coordinate (a_21) at (24* 0.8660,24* 0.5000);
				\coordinate (a_22) at (24* 0.5000,24* 0.8660);
				\coordinate (a_23) at (24* 0.0000,24* 1.0000);
				\coordinate (a_24) at (24*-0.5000,24* 0.8660);
				\coordinate (a_25) at (24*-0.8660,24* 0.5000);
				\coordinate (a_26) at (24*-1.0000,24* 0.0000);
				\coordinate (a_27) at (24*-0.8660,24*-0.5000);
				\coordinate (a_28) at (24*-0.5000,24*-0.8660);
				\coordinate (a_29) at (24* 0.0000,24*-1.0000);
				\coordinate (a_30) at (24* 0.5000,24*-0.8660);
				\coordinate (a_31) at (24* 0.8660,24*-0.5000);
				\coordinate (a_32) at (24* 1.0000,24* 0.0000);
				
				\coordinate (a1) at (1,2);
				\coordinate (a2) at (2,2);
				\coordinate (a3) at (2,1);
				\coordinate (a4) at (1,1);
				\coordinate (b1) at (7,2);
				\coordinate (b2) at (6,2);
				\coordinate (b3) at (6,1);
				\coordinate (b4) at (7,1);
				\coordinate (c1) at (5,4);
				\coordinate (c2) at (4,4);
				\coordinate (c3) at (4,5);
				\coordinate (c4) at (5,5);
				\coordinate (v1) at (3,2);
				\coordinate (v2) at (4,1);
				\coordinate (v3) at (5,2);
				\coordinate (v4) at (4,3);

				\foreach \n/\m in {11/12, 11/15, 11/22, 11/32, 12/13, 12/14, 12/16, 12/23, 13/15, 13/24, 13/26, 14/15, 14/27, 15/16, 15/28, 15/30, 16/31, 21/22, 21/32, 22/23, 23/24, 24/25, 25/26, 27/28, 28/29, 29/30, 30/31} \draw ($(a_\n)+(4,3)$) -- ($(a_\m)+(4,3)$);
				\foreach \n/\m in {11/13, 11/16, 13/14, 14/16, 26/27, 31/32} \draw ($(a_\n)+(4,3)$) -- ($(a_\m)+(4,3)$) [dashed];
				\draw (24*-0.5000+4,24* 0.9330) -- (24* 0.5000+4,24* 0.9330) [dashed];
				\draw (24* 0.5000+4,24*-0.8660+3) -- (24* 0.9330+8,24*-0.8660+3) -- (24* 0.9330+8,24* 0.0000+3);
				\draw (24*-0.9330,24* 0.0000+3) -- (24*-0.9330,24*-0.8660+3) -- (24*-0.5000+4,24*-0.8660+3);

				\foreach \n in {11,12,13,14,15,16,21,22,23,24,25,26,27,28,29,30,31,32}{
					\draw ($(a_\n)+(0,0)$) rectangle ($(a_\n)+(8,6)$) [fill=white];
					\node (\n_a1) at ($(a_\n)+(a1)$) [circle, draw, scale=0.5] {};
					\node (\n_a4) at ($(a_\n)+(a4)$) [circle, draw, scale=0.5] {};
					\node (\n_b1) at ($(a_\n)+(b1)$) [circle, draw, scale=0.5] {};
					\node (\n_b4) at ($(a_\n)+(b4)$) [circle, draw, scale=0.5] {};
					\node (\n_c1) at ($(a_\n)+(c1)$) [circle, draw, scale=0.5] {};
					\node (\n_c4) at ($(a_\n)+(c4)$) [circle, draw, scale=0.5] {};
				}
				\foreach \n in {11,12,13,14,15,16,21,22,23,24,30,31,32}{
					\node (\n_a2) at ($(a_\n)+(a2)$) [circle, draw, scale=0.5, fill=black] {};
					\node (\n_a3) at ($(a_\n)+(a3)$) [circle, draw, scale=0.5] {};
				}
				\foreach \n in {25,26,27,28,29}{
					\node (\n_a2) at ($(a_\n)+(a2)$) [circle, draw, scale=0.5] {};
					\node (\n_a3) at ($(a_\n)+(a3)$) [circle, draw, scale=0.5, fill=black] {};
				}
				\foreach \n in {11,12,13,14,15,16,22,23,24,25,26,27,28}{
					\node (\n_b2) at ($(a_\n)+(b2)$) [circle, draw, scale=0.5, fill=black] {};
					\node (\n_b3) at ($(a_\n)+(b3)$) [circle, draw, scale=0.5] {};
				}
				\foreach \n in {21,29,30,31,32}{
					\node (\n_b2) at ($(a_\n)+(b2)$) [circle, draw, scale=0.5] {};
					\node (\n_b3) at ($(a_\n)+(b3)$) [circle, draw, scale=0.5, fill=black] {};
				}
				\foreach \n in {11,12,13,14,15,16,26,27,28,29,30,31,32}{
					\node (\n_c2) at ($(a_\n)+(c2)$) [circle, draw, scale=0.5, fill=black] {};
					\node (\n_c3) at ($(a_\n)+(c3)$) [circle, draw, scale=0.5] {};
				}
				\foreach \n in {21,22,23,24,25}{
					\node (\n_c2) at ($(a_\n)+(c2)$) [circle, draw, scale=0.5] {};
					\node (\n_c3) at ($(a_\n)+(c3)$) [circle, draw, scale=0.5, fill=black] {};
				}
				\foreach \n in {13,14,15,24,25,26,27,28,29,30} \node (\n_v1) at ($(a_\n)+(v1)$) [circle, draw, scale=0.5, fill=black] {};
				\foreach \n in {11,12,16,21,22,23,31,32} \node (\n_v1) at ($(a_\n)+(v1)$) [circle, draw, scale=0.5] {};
				\foreach \n in {12,14,16,23,27,31} \node (\n_v2) at ($(a_\n)+(v2)$) [circle, draw, scale=0.5, fill=black] {};
				\foreach \n in {11,13,15,21,22,24,25,26,28,29,30,32} \node (\n_v2) at ($(a_\n)+(v2)$) [circle, draw, scale=0.5] {};
				\foreach \n in {11,15,16,21,22,28,29,30,31,32} \node (\n_v3) at ($(a_\n)+(v3)$) [circle, draw, scale=0.5, fill=black] {};
				\foreach \n in {12,13,14,23,24,25,26,27} \node (\n_v3) at ($(a_\n)+(v3)$) [circle, draw, scale=0.5] {};
				\foreach \n in {11,12,13,21,22,23,24,25,26,32} \node (\n_v4) at ($(a_\n)+(v4)$) [circle, draw, scale=0.5, fill=black] {};
				\foreach \n in {14,15,16,27,28,29,30,31} \node (\n_v4) at ($(a_\n)+(v4)$) [circle, draw, scale=0.5] {};
				\foreach \n in {11,12,13,14,15,16,21,22,23,24,25,26,27,28,29,30,31,32}{
					\draw (\n_a1) -- (\n_a2) -- (\n_a3) -- (\n_a4);
					\draw (\n_b1) -- (\n_b2) -- (\n_b3) -- (\n_b4);
					\draw (\n_c1) -- (\n_c2) -- (\n_c3) -- (\n_c4);
					\draw (\n_v1) -- (\n_v2) -- (\n_v3) -- (\n_v4) -- (\n_v1);
					\draw (\n_a2) -- (\n_v1);
					\draw (\n_b2) -- (\n_v3);
					\draw (\n_c2) -- (\n_v4);
				}
				\draw ($(a_21)+(-0.5,-0.5)$) rectangle ($(a_21)+(8.5,6.5)$) [thick, dashed, green];
				\draw ($(a_25)+(-0.5,-0.5)$) rectangle ($(a_25)+(8.5,6.5)$) [thick, dashed, green];
				\draw ($(a_29)+(-0.5,-0.5)$) rectangle ($(a_29)+(8.5,6.5)$) [thick, dashed, green];
				\draw (12*-0.8660-0.5,12*-1.0-0.5) rectangle (12* 0.8660+8.5,12* 1.0+6.5) [thick, dashed, green];
				\draw (24*-0.5000-0.5,24* 0.8660-0.5) rectangle (24* 0.5000+8.5,24* 1.0000+6.5) [thick, dashed, green];
				\draw (24*-1.0000-0.5,24* 0.0000+6.5) -- (24*-1.0000-0.5,24*-0.8660-0.5) -- (24*-0.5000+8.5,24*-0.8660-0.5) -- (24*-0.5000+8.5,24*-0.8660+6.5) -- (24*-0.8660+8.5,24*-0.8660+6.5) -- (24*-0.8660+8.5,24* 0.0000+6.5) -- (24*-1.0000-0.5,24* 0.0000+6.5) [thick, dashed, green];
				\draw (24* 1.0000+8.5,24* 0.0000+6.5) -- (24* 1.0000+8.5,24*-0.8660-0.5) -- (24* 0.5000-0.5,24*-0.8660-0.5) -- (24* 0.5000-0.5,24*-0.8660+6.5) -- (24* 0.8660-0.5,24*-0.8660+6.5) -- (24* 0.8660-0.5,24* 0.0000+6.5) -- (24* 1.0000+8.5,24* 0.0000+6.5) [thick, dashed, green];
			\end{tikzpicture}
		}
		\\ (b)
	\caption{
		\red{(a) All vaild orientations of the edges incident to an NCL \textsc{or} vertex, and (b) all $3$-PVCs of the \textsc{or} gadget.
		The $3$-PVCs connected by an edge are adjacent by $\mathsf{TJ}$/$\mathsf{TS}$ rules, while the $3$-PVCs connected by dashed edge are adjacent only by $\mathsf{TJ}$ rule.}}
	\label{figure:or_reconf}
\end{figure*}
	
\subsubsection{Reduction} 

As we have explained before, we replace each of NCL \textsc{and}/\textsc{or} vertices with its corresponding gadget; let $G$ be the resulting graph.
Recall that NCL remains $\mathtt{PSPACE}$-complete even if an input NCL machine is planar and bounded bandwidth~\cite{Z15}.
Since both our gadgets are planar, consist of only a constant number of edges, and of maximum degree three, the resulting graph $G$ is also planar, bounded bandwidth and of maximum degree three.
(In fact the number of edges in our gadget is $O(k)$. However, since we defined that $k$ is a fixed integer, it becomes constant.)

In addition, we construct two $k$-PVCs of $G$ which correspond to two given NCL configurations of the NCL machine.
Note that there are (in general, exponentially) many $k$-PVCs which correspond to the same NCL configuration.
However, by the construction of the gadgets, no two distinct NCL configurations correspond to the same $k$-PVC of $G$.
Therefore, we arbitrarily choose two $k$-PVCs of $G$ which correspond to two given NCL configurations.

This completes the construction of our corresponding instance of $k$-PVCR.
Clearly the construction can be done in polynomial time.

\subsubsection{Correctness} 

Let $C_I$ and $C_J$ be two given NCL configurations of the NCL machine.
Let $I$ and $J$ be two $k$-PVCs of $G$ which correspond to $C_I$ and $C_J$, respectively.
We now prove that there exists a desired sequence of NCL configurations between $C_I$ and $C_J$ if and only if there exists a reconfiguration sequence between $I$ and $J$.

We first prove the only-if direction.
Suppose that there exists a desired sequence $S = \langle C_0, C_1, \dots, C_\ell \rangle$ of NCL configurations between $C_0 = C_I$ and $C_\ell = C_J$.
Consider any two adjacent NCL configurations $C_{i-1}$ and $C_i$ in the sequence.
Then only one NCL edge $vw$ changes its orientation between $C_{i-1}$ and $C_i$.
Notice that, since both $C_{i-1}$ and $C_i$ are valid NCL configurations, the NCL \textsc{and}/\textsc{or} vertices $v$ and $w$ have enough in-coming NCL edges even without $vw$.
Recall that both \textsc{and}/\textsc{or} gadgets are internally connected and preserve the external adjacency.
Therefore, any reversal of an NCL edge can be simulated by a reconfiguration sequence of $k$-PVCs of $G$, and hence there exists a reconfiguration sequence between $I$ and $J$.

We now prove the if direction.
Suppose that there exists a reconfiguration sequence $S = \langle I_0, I_1, \dots, I_\ell \rangle$ ($I_0 = I$ and $I_\ell = J$).
Notice that, by the construction of gadgets, any $k$-PVC of $G$ corresponds to a valid NCL configuration.
Let $C_i$ be an NCL configuration corresponds to $I_i$, for $i \in \{ 0, \dots , \ell \} $.
By deleting redundant orientations from $C_0, C_1, \dots , C_\ell$ if needed, we can obtain a sequence of valid adjacent orientations between $C_I$ and $C_J$.

This completes the proof of Theorem~\ref{thm:PSPACEC-planar-maxdeg3}.

\section{Polynomial-Time Algorithms}
\label{sec:polytime-algos}

\subsection{Trees}
\label{sec:trees}

In this section, we show polynomial-time algorithms for \textsc{$k$-PVCR} on trees under each of $\mathsf{TJ}$ and $\mathsf{TAR}$. We first show a polynomial-time algorithm for the problem under $\mathsf{TJ}$. Then, using Lemma~\ref{lem:TJ-TAR-equiv} and the above result, we show a polynomial-time algorithm for the problem under $\mathsf{TAR}$.

First, in order to solve the problem under $\mathsf{TJ}$, we claim that for an instance $(T, I, J, \mathsf{TJ})$ of \textsc{$k$-PVCR} on a tree $T$, if $|I| = |J|$, one can construct in polynomial time a $\mathsf{TJ}$-sequence between $I$ and $J$. The idea is to construct a canonical $k$-path vertex cover $I^\star$ such that both $I$ and $J$ can be reconfigured to $I^\star$ under $\mathsf{TJ}$. 

Before constructing $I^\star$, we prove the following lemma, which describes an useful algorithm for partitioning a tree into subtrees satisfying certain conditions.
\begin{lemma}\label{lem:tree-partition}
	Let $T$ be a tree on $n$ vertices rooted at a vertex $r$.
	Assume that $\psi_k(T) \geq 1$.
	Then, in $O(n)$ time, one can partition $T$ into $\psi_k(T)$ subtrees $T_1(r), \dots,\allowbreak T_{\psi_k(T)}(r)$ such that for each $i \in \{1, \dots, \psi_k(T)\}$,
	\begin{itemize}
		\item[(i)] Each $k$-path vertex cover $I$ satisfies $I \cap V(T_i(r)) \neq \emptyset$.
		\item[(ii)] There is a vertex that covers all $k$-paths in $T_i(r)$.
	\end{itemize}
\end{lemma}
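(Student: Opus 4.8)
The plan is to design a greedy bottom-up marking procedure and read off the partition from it. I would process the vertices of $T$ in post-order (children before their parent), maintaining for each vertex $v$ a single integer $f(v)$, namely the number of vertices on a longest downward path from $v$ to a descendant consisting entirely of vertices not yet assigned to any subtree (and $f(v)=0$ once $v$ is placed). When processing $v$, let $a \ge b$ be the two largest values among $\{f(c) : c \text{ a child of } v\}$ (with $b=0$ if fewer than two children have positive value); then $a+b+1$ is exactly the length of a longest not-yet-assigned path having $v$ as its apex. The rule is: if $a+b+1 \ge k$, \emph{cut at $v$}, i.e.\ declare $v$ \emph{marked}, form a new subtree from $v$ together with all currently unassigned vertices in the subtree rooted at $v$, and reset $f(v)=0$; otherwise set $f(v)=a+1$ and leave these vertices unassigned. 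Since each vertex is touched once and only the top two child values are needed, this runs in $O(n)$ time.

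Next I would establish the invariant that, immediately after $v$ is processed, the still-unassigned vertices in the subtree rooted at $v$ induce a connected $P_k$-free subtree (empty if $v$ was just cut). This follows by induction: below a cut everything is assigned, and if $v$ is not cut then $a+b+1<k$ forces every unassigned path through $v$ to have fewer than $k$ vertices, while every unassigned path avoiding $v$ lies inside a single child and is $P_k$-free by induction. The key consequence is property~(ii): when we cut at $v$ and form a subtree $T_i$, deleting $v$ leaves a disjoint union of the (unassigned, hence $P_k$-free) child-parts, so the single vertex $v$ covers every $k$-path of $T_i$. Moreover the path on $a+b+1 \ge k$ vertices witnessing the cut lies entirely in $V(T_i)$, giving property~(i): any $k$-path vertex cover of $T$ must hit this $k$-path and so meets $V(T_i)$.

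To show the number $m$ of subtrees equals $\psi_k(T)$ I would prove matching bounds. The lower bound $\psi_k(T) \ge m$ is immediate from property~(i), since the subtrees are vertex-disjoint and each contains a $k$-path. For the upper bound I would show that the set $S=\{v_1,\dots,v_m\}$ of marked vertices is itself a $k$-path vertex cover. Assigning to each $u$ its nearest marked ancestor-or-self $a(u)$ shows the blocks $\{u : a(u)=v_i\}=T_i$ together with the leftover $U_r=\{u : a(u)\text{ undefined}\}$ partition $V(T)$. Given any path $P$ of $T-S$, its apex $w$ is unmarked, and since $P$ avoids $S$ the subpath from any vertex of $P$ up to $w$ lies in $P$, so all of $P$ shares $w$'s nearest marked ancestor; hence $P$ lies inside a single block minus its marked vertex (or inside $U_r$), all $P_k$-free, so $|V(P)|<k$. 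Thus $S$ is a cover of size $m$ and $\psi_k(T) \le m$.

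Finally I would dispose of the leftover $U_r$, which I expect to be the most delicate point. Since $\psi_k(T)\ge 1$, at least one cut occurs, so $U_r \subsetneq V(T)$; as $U_r$ is upward-closed and contains $r$, it has a boundary edge to some marked $v_{i_0}$ with $\mathrm{parent}(v_{i_0}) \in U_r$. I would merge $U_r$ into $T_{i_0}$. Connectivity is preserved because the two pieces meet only along the single edge from $v_{i_0}$ to $\mathrm{parent}(v_{i_0})$, and property~(ii) survives because deleting $v_{i_0}$ again separates $T_{i_0}-v_{i_0}$ from $U_r$, both $P_k$-free; property~(i) is unaffected since the witnessing $k$-path is retained. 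The merge keeps the count at $m=\psi_k(T)$, completing the construction. The real obstacle is exactly this bookkeeping: verifying that one marked vertex per subtree still covers all internal $k$-paths after the unassigned top part of $T$ is absorbed, while the number of parts remains precisely $\psi_k(T)$.
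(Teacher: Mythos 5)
Your proposal is correct and follows essentially the same route as the paper: your cut vertices are exactly the roots of the ``properly rooted subtrees'' used in the paper's modification of Bre{\v{s}}ar et al.'s algorithm, your post-order $f$-value bookkeeping is the standard linear-time implementation of the search for them, and your absorption of the leftover top part $U_r$ into an adjacent block mirrors the paper's step that assigns the whole remaining tree to the last part. The only difference is one of self-containedness: you prove the part count equals $\psi_k(T)$ and the $O(n)$ bound from scratch, whereas the paper delegates these facts to the cited reference.
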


\begin{proof}
	To construct a partition $P(T) = \{T_1(r), \dots, T_{\psi_k(T)}(r)\}$ of $T$ satisfying the described conditions, we slightly modify the algorithm $\mathtt{PVCPTree}(T, k)$ in~\cite{BrevsarKKS11} as follows.
	A \emph{properly rooted subtree} $T_v$ of $T$ is a subtree of $T$ induced by the vertex $v$ and all its descendants (with respect to the root $r$) satisfying the following conditions
	\begin{enumerate}
		\item $T_v$ contains a $k$-path;
		\item $T_v - v$ does not contain a $k$-path.
	\end{enumerate} 
	The modified algorithm $\mathtt{Partition}(T, k, r)$ systematically searches for a properly rooted tree $T_v$, decides whether $T_v$ belongs to a solution $P(T)$, and if so, \red{adds} $T_v$ to $P(T)$, and \red{removes} $T_v$ from the input tree $T$. 
	\red{To check if $T$ contains a properly rooted subtree $T_v$, one can start by assigning $v$ to a vertex of largest \textit{depth} (i.e., distance from $r$) and verify if $T_v$ is properly rooted. If so, we return ``yes''. Otherwise, we assign $v$ to its parent and repeat, until a $T_v$ is found (returning ``yes'') or there is nothing to check (returning ``no'').}
	
	\begin{algorithm}[t]
		\KwIn{A tree $T$ on $n$ vertices rooted at $r$ and a positive integer $k$\;}
		\KwOut{A partition $P(T)$ of $T$ into $\psi_k$(T) subtrees\;}
		\SetArgSty{textbb}   
		$i := 1$\;
		\While{$T$ contains a properly rooted subtree $T_v$}
		{            
			\If{$T - T_v$ contains a properly rooted subtree}
			{
				$T_i(r) := T_v$\;
				$i := i + 1$\;
			}
			\Else{
				$T_i(r) := T$\;
			}
			$T := T - T_v$\;
		}
		$P(T) = \{T_1(r), \dots, T_i(r)\}$\;
		\Return $P(T)$\;
		\caption{$\mathtt{Partition}(T, k, r)$.}
	\end{algorithm}
	
	From~\cite{BrevsarKKS11}, it follows that $\mathtt{Partition}(T, k, r)$ runs in $O(n)$ time.
	From the construction of $P(T)$, it is clear that (i) always holds. 
	We show (ii) by induction on $\psi_k(T)$.
	
	For a tree $T$ with $\psi_k(T) = 1$, let $T_v$ be a properly rooted subtree of $T$.
	Since any $k$-path vertex cover of $T$ contains a vertex from $T_v$, it follows that $\psi_k(T - T_v) = \psi_k(T) - 1 = 0$, which implies that $T - T_v$ does not contain any properly rooted subtree, and therefore $P(T) = \{T\}$.
	To see that (ii) holds, note that $v$ must cover all $k$-paths in $T_v$, and therefore it also covers all $k$-paths in $T$; otherwise, $T - T_v$ contains a $k$-path that is not covered by $v$, and then must contain a properly rooted subtree, which is a contradiction.
	
	Assume that (ii) holds for any tree $T$ with $\psi_k(T) < c$, for some constant $c > 1$.
	For a tree $T$ rooted at some vertex $r$ with $\psi_k(T) = c$, let $T_v$ be a properly rooted subtree of $T$, where $v$ is some vertex of $T$.
	From the algorithm $\mathtt{Partition}$, it follows that $v$ must cover all $k$-paths in $T_v = T_1(r)$.
	Since $c > 1$, the tree $T - T_v$ contains a properly rooted subtree.
	By inductive hypothesis, for each $i \in \{2, 3, \dots, \psi_k(T)\}$, there is a vertex that covers all $k$-paths in $T_i(r)$.
	Therefore, (ii) holds for any tree $T$ with $\psi_k(T) \geq 1$.
	\end{proof}

We are now ready to show the following theorem.
\begin{theorem}
	\label{thm:kPVC-Trees-TJ}
	For any instance $(T, I, J, \mathsf{TJ})$ of \textsc{$k$-PVCR} on a tree $T$, $I$ and $J$ are reconfigurable if and only if $|I| = |J|$.
	Moreover, a reconfiguration sequence between them, if exists, can be constructed in $O(n)$ time.
	Consequently, \textsc{$k$-PVCR} under $\mathsf{TJ}$ can be solved in linear time on trees.
\end{theorem}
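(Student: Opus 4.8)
The plan is to prove both directions, with essentially all the work in the ``if'' direction. Necessity is immediate: every $\mathsf{TJ}$-step satisfies $I_i \setminus I_{i+1} = \{x_i\}$ and $I_{i+1}\setminus I_i = \{y_i\}$, so $|I_i| = |I_{i+1}|$ along any $\mathsf{TJ}$-sequence, whence $|I| = |J|$. For the converse I would argue constructively by fixing a \emph{canonical} $k$-path vertex cover $I^\star = I^\star(T,k,r,s)$ of size $s := |I| = |J|$, depending only on $(T,k,r,s)$ and not on $I$ or $J$, and showing that \emph{every} $k$-path vertex cover of size $s$ can be $\mathsf{TJ}$-reconfigured to $I^\star$. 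Since the reverse $revS$ of a $\mathsf{TJ}$-sequence is again a $\mathsf{TJ}$-sequence and two $\mathsf{TJ}$-sequences with matching endpoints concatenate via $\oplus$, reconfiguring both $I$ and $J$ to the common target $I^\star$ yields a $\mathsf{TJ}$-sequence from $I$ to $J$.

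To define $I^\star$, root $T$ at $r$, apply Lemma~\ref{lem:tree-partition} to obtain subtrees $T_1(r),\dots,T_{\psi_k(T)}(r)$, and let $v_i$ be the vertex guaranteed by condition~(ii) covering every $k$-path of $T_i(r)$. The base $B = \{v_1,\dots,v_{\psi_k(T)}\}$ is a minimum $k$-path vertex cover of $T$: every $k$-path of $T$ either lies inside some $T_i(r)$, where it meets $v_i$, or leaves $T_i(r)$ and so passes through the cut vertex $v_i$ separating $T_i(r)$ from the rest of $T$. I then set $I^\star$ to be $B$ together with the $s-\psi_k(T)$ unoccupied vertices that come first in a fixed ordering (say BFS order from $r$); since enlarging a cover never destroys the covering property, $I^\star$ is a valid $k$-path vertex cover of size $s$.

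The reconfiguration of an arbitrary size-$s$ cover $I$ to $I^\star$ I would carry out in two phases. In \emph{Phase~A} I occupy the base $B$: processing the pendant subtrees $T_1(r),\dots,T_{\psi_k(T)-1}(r)$ first and the top subtree $T_{\psi_k(T)}(r)$ last, whenever $v_i \notin I$ I jump a token from some $u \in I \cap V(T_i(r))$ (nonempty by condition~(i)) onto $v_i$. In \emph{Phase~B}, now that $B \subseteq I$ and $B$ alone already covers all $k$-paths, any remaining (extra) token may be jumped to any unoccupied vertex while the cover property is preserved automatically, since the resulting set still contains $B$; I use this freedom to move the $s-\psi_k(T)$ extra tokens onto the canonical extra positions one misplaced token at a time. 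Phase~A uses $O(\psi_k(T))$ jumps and Phase~B uses $O(s)$ jumps, so the produced sequence has length $O(n)$ and, together with the $O(n)$ partition of Lemma~\ref{lem:tree-partition}, is constructed in $O(n)$ time; the decision itself reduces to the linear-time test $|I| = |J|$.

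The step I expect to be the crux is verifying that each Phase~A jump is legal, i.e.\ that $I \setminus \{u\} \cup \{v_i\}$ remains a $k$-path vertex cover. The structural content of Lemma~\ref{lem:tree-partition} is exactly what makes this work: since $T_i(r) - v_i$ contains no $k$-path, $v_i$ lies on \emph{every} $k$-path inside $T_i(r)$, and since $T_i(r)$ attaches to the rest of $T$ only through $v_i$, every $k$-path through $u$ that leaves $T_i(r)$ also passes through $v_i$. Hence every $k$-path formerly covered by $u$ is recovered by $v_i$; for the single top subtree, the only further $k$-paths to check escape into already-processed pendant subtrees $T_j(r)$ and are covered by the already-placed $v_j \in B$, which is why that subtree is handled last. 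The remaining bookkeeping --- that earlier-placed base tokens are never disturbed, because the subtrees $T_i(r)$ are pairwise vertex-disjoint --- is routine.
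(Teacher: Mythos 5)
Your proposal is correct and takes essentially the same route as the paper's proof: both fix a canonical target $I^\star$ consisting of the minimum cover $\{v_1,\dots,v_{\psi_k(T)}\}$ read off from the partition of Lemma~\ref{lem:tree-partition} plus arbitrary filler vertices, first jump one token per part onto $v_i$ in the peeling order, and then freely relocate the surplus tokens, concatenating the sequence for $I$ with the reverse of the one for $J$. The one imprecision is your claim that each pendant $T_i(r)$ attaches to the rest of $T$ only through $v_i$ --- for $i\geq 2$ a $k$-path can escape downward into an earlier part $T_j(r)$ with $j<i$ without meeting $v_i$ --- but the correct justification (such a path passes through the already-placed $v_j$) is exactly the argument you give for the top subtree, and the paper states it uniformly as ``any $k$-path covered by $x_i$ is also covered by some $v_j$ with $j\leq i$.''
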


\begin{proof}
	Clearly, if $I$ and $J$ are reconfigurable under $\mathsf{TJ}$, they must be of the same size.
	To prove this theorem, it suffices to show that for an instance $(T, I, J, \mathsf{TJ})$ of \textsc{$k$-PVCR} on a tree $T$, one can construct in polynomial time a $\mathsf{TJ}$-sequence between $I$ and $J$.
	
	A minimum $k$-path vertex cover $I^r$ can be easily constructed in linear time by modifying $\mathtt{Partition}$ as follows: Initially, $I^r = \emptyset$. In each iteration of the {\bf while} loop, add to $I^r$ the vertex $v$ of the properly rooted subtree $T_v$ that is currently considering.
	\red{Such a vertex $v$ can be obtained from the process of checking if $T$ contains a properly rooted subtree described in the proof of Lemma~\ref{lem:tree-partition}.} 
	Let $I^\star$ be any $k$-path vertex cover of size $|I| = |J|$ such that $I^r \subseteq I^\star$.
	We claim that both $I$ and $J$ can be reconfigured to $I^\star$ under $\mathsf{TJ}$. 
	As a result, a $\mathsf{TJ}$-sequence between $I$ and $J$ can be constructed by reconfiguring $I$ to $I^\star$, and then $I^\star$ to $J$.
	
	We now show how to construct a $\mathsf{TJ}$-sequence between $I$ and $I^\star$.
	Let $P(T) = \{T_1(r), \dots, \allowbreak T_{\psi_k(T)}(r)\}$ be a partition of $T$ resulting from the algorithm $\mathtt{Partition}$ and let $I_0 = I$. 
	\red{Intuitively, we will first ``settle'' the tokens in $I^r \subseteq I^\star$ (\textbf{Step~1}), and then, as the tokens in $I^r$ already cover all $k$-paths in $T$, the remaining tokens in $I^\star \setminus I^r$ can be easily ``settled'' by jumping tokens one-by-one in arbitrary order (\textbf{Step~2}).} 
	\begin{itemize}
		\item {\bf Step~1:} For \red{each $i$ from $1$ to $\psi_k(T)$}, let $v_i \in I^r \cap V(T_i(r))$.
		If $v_i$ does not contain a token in $I_{i-1}$, we jump a token from some vertex $x_i \in I_{i-1} \cap V(T_i(r))$ to $v_i$.
		Otherwise, we do nothing.
		Let $I_i$ be the resulting set.
		Note that any $k$-path in $T$ covered by $x_i$ must also be covered by some $v_j$ with $j \leq i$.
		A simple induction shows that $I_i = I_{i-1} \setminus \{x_i\} \cup \{v_i\}$ forms a $k$-path vertex cover of $T$.
		\item {\bf Step~2:}
		For $x \in I_{\psi_k(T)} \setminus I^\star$ and $y \in I^\star \setminus I_{\psi_k(T)}$, we simply jump the token on $x$ to $y$, and repeat the process with $I_{\psi_k(T)} \setminus \{x\}$ and $I^\star \setminus \{y\}$ instead of $I_{\psi_k(T)}$ and $I^\star$, respectively.
		Since $I^r \subseteq I_{\psi_k(T)} \cap I^\star$ is already a minimum $k$-path vertex cover, any $\mathsf{TJ}$-step described above results a $k$-path vertex cover of $T$.
	\end{itemize}
	Since each token in $I$ is jumped at most once, the above construction can be done in linear time.
	We have described how to construct a $\mathsf{TJ}$-sequence from $J$ to $I^\star$.
	In a similar manner, a $\mathsf{TJ}$-sequence between $J$ and $I^\star$ can be constructed.
	Our proof of Theorem~\ref{thm:kPVC-Trees-TJ} is complete.
	\end{proof}

Consequently, combining Theorem~\ref{thm:kPVC-Trees-TJ} and Lemma~\ref{lem:TJ-TAR-equiv}, we have the following theorem.
\begin{theorem}
	\label{thm:kPVCR-Trees-TAR}
	For any instance $(T, I, J, \mathsf{TAR}(u))$ of \textsc{$k$-PVCR} on a tree $T$, one can decide if $I$ and $J$ are reconfigurable in polynomial time.
\end{theorem}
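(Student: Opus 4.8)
The plan is to prove the slightly stronger statement that one can not only decide reconfigurability in polynomial time but also give an explicit characterization of the yes-instances, from which the algorithm is immediate. Concretely, I would show that $(T, I, J, \mathsf{TAR}(u))$ is a yes-instance if and only if $|I| \le u$, $|J| \le u$, and either $I = J$ or both $I$ and $J$ lie in what I call the \emph{main component}: a $k$-path vertex cover $X$ with $|X| \le u$ lies in the main component exactly when $|X| < u$, or $|X| = u$ and some token of $X$ can be removed so that the result is still a $k$-path vertex cover. Each of these conditions is testable in polynomial time, since the size checks are immediate and reducibility of $X$ is decided by testing, for every $x \in X$, whether $X \setminus \{x\}$ is again a $k$-path vertex cover. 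Hence a correctness proof of this characterization yields the claimed polynomial-time decision procedure (and, in fact, an explicit reconfiguration sequence in yes-instances).

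The engine of the sufficiency direction is a reduction to the equal-size $\mathsf{TJ}$ setting. First I observe that adding any vertex to a $k$-path vertex cover again yields a $k$-path vertex cover, so from any cover one may freely grow upward while the size stays at most $u$. Now take two covers $A$ and $B$ with $|A| \le |B| \le u-1$. Inserting $|B| - |A|$ arbitrary new vertices into $A$ produces a cover $A^{+} \supseteq A$ with $|A^{+}| = |B| \le u-1$, every intermediate set being a valid cover of size at most $u$. Since $A^{+}$ and $B$ are covers of the same tree of equal size, Theorem~\ref{thm:kPVC-Trees-TJ} gives a $\mathsf{TJ}$-sequence between them, and Lemma~\ref{lem:TJ-TAR-equiv} converts it into a $\mathsf{TAR}(|B|+1)$-sequence all of whose sets have size at most $|B|+1 \le u$. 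Concatenating, $A$ and $B$ are reconfigurable under $\mathsf{TAR}(u)$, so all covers of size at most $u-1$ form a single component. A cover $X$ of size exactly $u$ that is reducible reaches a cover of size $u-1$ by one token removal, placing $X$ in that same component. Concatenating the sequences built for $I$ and $J$ then produces the desired $\mathsf{TAR}(u)$-sequence, and the construction runs in polynomial (indeed linear) time, invoking Lemma~\ref{lem:tree-partition} wherever a minimum cover or $\psi_k(T)$ is needed.

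For the necessity direction, the key point is that a cover $X$ with $|X| = u$ that is \emph{inclusion-minimal} (no token is removable) admits no $\mathsf{TAR}(u)$-move at all: adding a token would violate the bound $u$, while removing any token destroys the covering property. Such an $X$ is therefore an isolated vertex of the $\mathsf{TAR}(u)$-configuration graph, so if $X \in \{I, J\}$ and $I \ne J$ the instance is a no-instance. I would then argue that these isolated maximum-size covers are the \emph{only} obstruction, since every other cover is either of size below $u$ or reducible and hence, by the sufficiency argument, connected to the main component.

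The step I expect to be the main obstacle is pinning down this boundary layer of covers of size exactly $u$ and proving that reducibility is precisely the right criterion for membership in the main component, ruling out any additional small components hiding among the maximum-size covers. The delicate feature here, which already surfaces for ordinary vertex cover ($k=2$) on stars, is that a tree can possess \emph{inclusion-minimal} $k$-path vertex covers that are far larger than minimum; the bound $u$ must leave room for at least one extra token beyond the current set in order to dislodge such a cover, and the characterization must detect exactly when it fails to do so. Once this structural dichotomy between the single main component and the isolated maximum inclusion-minimal covers is established, the polynomial-time decision procedure and its correctness follow directly from Theorem~\ref{thm:kPVC-Trees-TJ} and Lemma~\ref{lem:TJ-TAR-equiv}.
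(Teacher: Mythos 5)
Your proposal is correct and follows essentially the same route as the paper: the paper proves this theorem by combining Theorem~\ref{thm:kPVC-Trees-TJ} with Lemma~\ref{lem:TJ-TAR-equiv}, and its detailed template (given explicitly for the path case in the proof of Theorem~\ref{thm:kPVCR_paths_TAR}) is exactly your argument — pad the smaller cover, convert an equal-size $\mathsf{TJ}$-sequence into a $\mathsf{TAR}(s+1)$-sequence when $u \geq s+1$, and in the boundary case $u = s$ test whether a token can be removed from each of $I$ and $J$, declaring a no-instance otherwise. Your ``main component plus isolated inclusion-minimal covers of size $u$'' characterization is a clean repackaging of that same case analysis rather than a different method.
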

\begin{proof}
	Clearly, if $u < \max\{|I|, |J|\}$ or $u = \psi_k(T)$ then $(T, I, J, \mathsf{TAR}(u))$ is a no-instance, because either $I$ or $J$ cannot be modified by adding/removing tokens.
	We now consider the case $u \geq \max\{|I|, |J|\}$ and $u > \psi_k(T)$.
	Note that if $|I| < |J|$ then we can add tokens to $I$ until the resulting $k$-path vertex cover is of size $|J|$, simply because $u \geq \max\{|I|, |J|\}$.
	As a result, we can assume without loss of generality that $|I| = |J| = s$ for some constant $s$.
	By Theorem~\ref{thm:kPVC-Trees-TJ} and Lemma~\ref{lem:TJ-TAR-equiv}, it follows that there always exists a $\mathsf{TAR}(s+1)$-sequence between $I$ and $J$.
	If $s+1 \leq u$ then clearly a $\mathsf{TAR}(s+1)$-sequence is also a $\mathsf{TAR}(u)$-sequence, and we are done.
	Assume that $s+1 > u$.
	Since $u \geq s$ and $u > \psi_k(T)$, it follows that $u = s$ and both $I$ and $J$ are not minimum.
	Now, we need to check if we can remove at least one token from $I$ (resp. $J$), which can be done in polynomial time by checking each token one by one and verifying whether its removal results a $k$-path vertex cover.
	If this is possible for both $I$ and $J$, we remove exactly one token from $I$ (resp. $J$) to obtain a new $k$-path vertex cover $I^\prime$ (resp. $J^\prime$) of size $s-1$.
	By Lemma~\ref{lem:TJ-TAR-equiv}, there exists a $\mathsf{TAR}(u)$-sequence between $I^\prime$ and $J^\prime$, and combining this sequence with the previous removal steps gives us a $\mathsf{TAR}(u)$-sequence between $I$ and $J$.
	Otherwise, we can conclude that the given instance is a no-instance, because the first step of reconfiguring (either from $I$ to $J$ or vice versa) is to remove some token (since $u = s$, adding a token is not possible).
\end{proof}

\subsection{Paths and Cycles}
\label{sec:paths-cycles}
Here, we describe polynomial-time algorithms for \textsc{$k$-PVCR} on paths and cycles.
As paths and cycles are the only (planar) graphs of maximum degree \red{two}, by combining Theorem~\ref{thm:PSPACEC-planar-maxdeg3} and our results, we have a complexity dichotomy of \textsc{$k$-PVCR} on (planar) graphs.
Additionally, on paths, we claim that one can construct a \emph{shortest} reconfiguration sequence between any two given $k$-path vertex covers (if exists) under each reconfiguration rule $\mathsf{TS}$, $\mathsf{TJ}$, and $\mathsf{TAR}$.

\subsubsection{\textsc{$k$-PVCR} on Paths}
By Theorems~\ref{thm:kPVC-Trees-TJ} and \ref{thm:kPVCR-Trees-TAR}, clearly \textsc{$k$-PVCR} on paths can be solved in polynomial time under each of $\mathsf{TJ}$ and $\mathsf{TAR}$. In this section, we slightly improve this result by showing that one can construct a \emph{shortest} reconfiguration sequence between two $k$-path vertex covers on a path not only under each of $\mathsf{TJ}$ and $\mathsf{TAR}$ but also under $\mathsf{TS}$. 

Given an instance $(P,I,J,\mathsf{TJ})$ of \textsc{$k$-PVCR} where $|I|=|J|=s$, one can construct a shortest $\mathsf{TJ}$-sequence between $I$ and $J$. 
Suppose that vertices in $I = \{v_{i_1}, \dots, \allowbreak v_{i_s}\}$ and $J = \{v_{j_1}, \dots, v_{j_s}\}$ are ordered such that $1 \leq i_1 < \dots < i_s \leq n$ and  $1 \leq j_1 < \dots < j_s \leq n$.
In each step of the algorithm, we move a token on the ``rightmost'' vertex $v_{i_p} \in I\setminus J$ to the ``rightmost'' vertex $ v_{j_p} \in J\setminus I$ if $i_p > j_p$ or vice-versa otherwise, for $p \in \{1, \dots, s\}$. 
As a reconfiguration sequence is reversible, one can easily form a $\mathsf{TJ}$-sequence between $I$ and $J$. Note that each step of the algorithm reduces $|I\Delta J|/2$ by exactly one. Finally, we obtain a shortest $\mathsf{TJ}$-sequence between $I$ and $J$ of length exactly $|I\Delta J|/2$. 

\begin{theorem}
	\label{thm:kPVCR_paths_TJ} 
	Given an instance $(P, I, J, \mathsf{TJ})$ of \textsc{$k$-PVCR} on a path $P$, the $k$-path vertex covers $I$ and $J$ are reconfigurable if and only if $|I| = |J|$.
	Moreover, we can compute a shortest reconfiguration sequence in $O(n)$ time. 
\end{theorem}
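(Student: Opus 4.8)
The plan is to prove the two directions separately. The ``only if'' direction is immediate: a single $\mathsf{TJ}$-step removes exactly one token and places exactly one token, so the cardinality of the current cover is invariant, and reconfigurability forces $|I| = |J|$. The whole content lies in the ``if'' direction, for which I would first record a characterization of $k$-path vertex covers of a path. Writing $P = v_1 v_2 \cdots v_n$, a $k$-path is simply a block of $k$ consecutive vertices, so a set whose tokens occupy the sorted positions $i_1 < i_2 < \cdots < i_s$ is a $k$-path vertex cover if and only if $i_1 \le k$, $i_{p+1} - i_p \le k$ for every $p$, and $i_s \ge n-k+1$; equivalently, no gap between consecutive tokens (including the two end gaps) has length $k$ or more. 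I would phrase the whole argument in terms of these gap conditions.

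Feasibility of the greedy construction is the heart of the proof, and I would establish it through a single-step lemma proved by induction on $|I \Delta J|$. Given the current cover $C$ (initially $I$) and the target $J$, both valid and of size $s$, let $a$ be the rightmost vertex of $C \setminus J$ and $b$ the rightmost vertex of $J \setminus C$; by the reversibility of reconfiguration sequences (working from the $J$-side and reversing at the end) I may assume $a > b$, and I claim that moving the token from $a$ to the \emph{unoccupied} vertex $b$ yields a cover $C'$ that is again a valid $k$-path vertex cover with $|C' \Delta J| = |C \Delta J| - 2$. Validity is checked via the gap conditions: placing a token at $b$ only shortens gaps, so the only danger is the gap created by vacating $a$, namely the one between the largest token $\ell$ of $C'$ below $a$ and the smallest token $a^{+}$ of $C'$ above $a$. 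Because $a$ is the rightmost vertex of $C \setminus J$ and $b < a$ is the rightmost deficit, the covers $C$ and $J$ agree strictly to the right of $a$, so $a^{+}$ lies in $J$ and the validity of $J$ gives $a^{+} - j \le k$, where $j$ is the largest token of $J$ below $a$. The crucial point — and exactly where the choice of the \emph{rightmost} deficit $b$ is used — is that $\ell \ge j$: either $j \in C$, hence $j \in C'$, or $j \in J \setminus C$, in which case $j \le b$ and the newly placed token at $b$ guarantees $\ell \ge b \ge j$. Thus $a^{+} - \ell \le a^{+} - j \le k$, the merged gap is admissible, and the analogous (easier) checks at the two end gaps complete the verification; induction then yields a $\mathsf{TJ}$-sequence of length $|I \Delta J|/2$.

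It remains to argue optimality and the running time. For the lower bound I would track the potential $|C \setminus J|$: since $|I| = |J|$ we have $|I \setminus J| = |I \Delta J|/2$ and the target satisfies $|J \setminus J| = 0$, while a single $\mathsf{TJ}$-step — removing a token at $x$ and adding one at $y$ — decreases $|C \setminus J|$ by at most one, with strict decrease only when $x \notin J$ and $y \in J$. Hence every $\mathsf{TJ}$-sequence between $I$ and $J$ has length at least $|I \Delta J|/2$, and since each step of the construction realizes a strict decrease, the constructed sequence attains this bound and is therefore shortest. Finally, the sorted token positions are obtained in $O(n)$ time by one left-to-right scan of $P$, and the greedy is implemented with two pointers sweeping from the right end, so each of the at most $n/2$ jumps is found in amortized constant time, giving the claimed $O(n)$ bound. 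I expect the gap-preservation argument of the second paragraph to be the main obstacle, since it is the only place where one must guarantee that the intermediate covers never develop an uncovered $k$-path, and it is precisely the reason the algorithm must target the rightmost mismatch rather than an arbitrary one.
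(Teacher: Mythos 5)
Your proposal is correct and follows essentially the same route as the paper: the same greedy that repeatedly jumps the rightmost mismatched token to the rightmost deficit, handling the opposite orientation by building a second sequence from the $J$-side and reversing, the same $|I\Delta J|/2$ lower bound, and the same $O(n)$ implementation. The only cosmetic difference is that you verify feasibility of each jump directly via the gap characterization of $k$-path vertex covers of a path, whereas the paper argues by contradiction that a hypothetical uncovered $k$-path would have to contain a vertex of $I\cap J$; these are the same argument, both hinging on the fact that the current cover and the target agree to the right of the rightmost mismatch.
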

\begin{proof}
	
	Let $P= v_1v_2\dots v_n$ be a given path. 
	In the following, we use the expression \emph{rightmost} instead of using ``with the largest index''. 
	Algorithm~\ref{algo:PVCRPathTJ} describes an algorithm $\mathtt{PVCRPathTJ}(P, I, J)$ for \textsc{$k$-PVCR} on paths under $\mathsf{TJ}$. 
	
	\begin{algorithm}[ht]
		\KwIn{A path $P$ of $n$ vertices, initial token-set $I$, and target token-set $J$\;}
		\KwOut{A reconfiguration sequence $S$\;}
		\SetArgSty{textbb}
		Let $S, S_I, S_J$ be reconfiguration sequences, and initialize them by $\emptyset$\;
		\While{$I\Delta J\neq \emptyset$}{
			$v_i \leftarrow$ the rightmost vertex in $P[I\Delta J]$\;
			
			\If{$v_i\in I$}
			{
				Find the rightmost token $v_j$ in $J\setminus I$ (here $j<i$)\; 
				$S_I:=S_I \oplus \langle I, I \setminus \{v_i\} \cup \{v_j\} \rangle$\;
				$I:= I \setminus \{v_i\} \cup \{v_j\}$\;
			}
			\If{$v_i\in J$}
			{
				Find the rightmost token $v_j$ in $I\setminus J$ (here $j<i$)\; 
				$S_J:=S_J \oplus \langle J, J \setminus \{v_i\} \cup \{v_j\} \rangle$\;
				$J:=J \setminus \{v_i\} \cup \{v_j\}$\;
			}
		}
		$S:=S_I\oplus \text{rev}(S_J)$\;
		\Return $S$\;
		\caption{$\mathtt{PVCRPathTJ}(P,I,J)$}
		\label{algo:PVCRPathTJ}
	\end{algorithm}
	
	Clearly, if $I$ and $J$ are reconfigurable under $\mathsf{TJ}$ then they are of the same size.
	It remains to show the if direction.
	To this end, we show that $\mathtt{PVCRPathTJ}(P,I,J)$ correctly constructs a $\mathsf{TJ}$-sequence between two $k$-path vertex covers $I, J$ of the same size.
	In each iteration of the {\bf while} loop, when $v_i \in I$, we confirm that if we move a token from $v_i$ to $v_j$, the resulting token-set still keeps $k$-path vertex cover property. 
	In other words, the constructed sequence $S_I$ is indeed a $\mathsf{TJ}$-sequence.
	Suppose to the contrary that moving the token on $v_i$ to the left (i.e., to the direction in which the indices get smaller) results in some non-covered $k$-path, say $Q = v_{\ell} v_{\ell+1}\dots v_{\ell+k-1}$, where $\ell \leq i\leq \ell+k-1$ and $j+1\leq \ell \leq n-k+1$.
	Since $J$ is a $k$-path vertex cover, there must be some vertex $v_{\ell'}\in J$ for $\ell \leq \ell' \leq \ell +k-1$. Also, since $v_i\in I\setminus J$, $\ell'\neq i$. If $\ell'<i$, then $v_{\ell'}\in I$; otherwise, $v_j\in J\setminus I$ is not rightmost. If $\ell'>i$, then $v_{\ell'}\in I$; otherwise, $v_i$ is not rightmost in $P[I\Delta J]$. Therefore $v_{\ell'}\in J\cap I$ always covers $P$, a contradiction. 
	In a similar manner, one can also verify that $S_J$ is indeed a $\mathsf{TJ}$-sequence.
	Let $I^\prime$ be the $k$-path vertex cover obtained when the condition of the {\bf while} loop is violated.
	Clearly, $S_I$ (resp. $S_J$) reconfigures $I$ (resp. $J$) to $I^\prime$.
	Therefore, $S = S_I \oplus \text{rev}(S_J)$ reconfigures $I$ to $J$.
	
	Next, we claim that $S$ is shortest.
	Note that any $\mathsf{TJ}$-sequence between $I$ and $J$ uses at least $|I\Delta J|/2$ $\mathsf{TJ}$-steps.
	Moreover, in $\mathtt{PVCRPathTJ}(P,I,J)$, we move tokens exactly $|I\Delta J|/2$ times: in each iteration, exactly one token (either from $I \setminus J$ or $J \setminus I$) is moved, and then the size of $I \Delta J$ decreases by $2$.
	Therefore, $S$ is shortest. 
	Consequently, the running time is $O(n)$.
\end{proof}

By Theorem~\ref{thm:kPVCR_paths_TJ} and Lemma~\ref{lem:TJ-TAR-equiv}, we obtain the following result on \textsc{$k$-PVCR} on a path $P$ under $\mathsf{TAR}$. 

\begin{theorem}
	\label{thm:kPVCR_paths_TAR}
	For any instance $(P,I,J, \mathsf{TAR}(u))$ of \textsc{$k$-PVCR} on a path $P$ on $n$ vertices, one can decide if $I$ and $J$ are reconfigurable in linear time. 
\end{theorem}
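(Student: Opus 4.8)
The plan is to reduce the $\mathsf{TAR}(u)$ question to the $\mathsf{TJ}$ question of Theorem~\ref{thm:kPVCR_paths_TJ} through the equivalence of Lemma~\ref{lem:TJ-TAR-equiv}, and then to account for the size-changing moves that $\mathsf{TAR}$ additionally permits. First I would set $m = \psi_k(P) = \lfloor n/k \rfloor$ and note that in any valid instance we must have $|I|, |J| \le u$ (otherwise an endpoint already violates the size bound and the answer is trivially no) and $u \ge m$ (otherwise no $k$-path vertex cover of size $\le u$ exists at all). Two boundary phenomena govern the reconfiguration graph. First, adding a token to a $k$-path vertex cover never destroys the covering property, so from any cover of size strictly below $u$ one always has a legal move. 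Second, a cover of size exactly $u$ admits no move whatsoever precisely when it is \emph{minimal}, i.e.\ no single token can be deleted without creating an uncovered $k$-path; such a \emph{frozen} cover is an isolated vertex of the reconfiguration graph.

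The core claim I would establish is a clean characterization. If $u = m$, the only covers of size $\le u$ are the minimum ones, each minimal, so no move exists and $I,J$ are reconfigurable iff $I = J$. If $u \ge m+1$, fix a canonical minimum cover $C^\star$ (constructible in $O(n)$, e.g.\ tokens at positions $k, 2k, \dots$), which has size $m < u$ and is thus not frozen. I would then show that every cover $C$ with $|C| \le u$ that is \emph{not} frozen reconfigures to $C^\star$ under $\mathsf{TAR}(u)$. If $|C| = u$ and $C$ is not minimal, a removable token lets us drop to size $u-1$, so it suffices to treat $|C| \le u-1$. For such $C$ there is slack $|C|+1 \le u$, so by Lemma~\ref{lem:TJ-TAR-equiv} any $\mathsf{TJ}$-sequence among covers of size $|C|$ lifts to a $\mathsf{TAR}(|C|+1)$-sequence that is legal under budget $u$; and by Theorem~\ref{thm:kPVCR_paths_TJ} all covers of size $|C|$ are $\mathsf{TJ}$-reconfigurable. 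Hence I would navigate $C$ to the cover $D = C^\star \cup \{\text{$|C|-m$ extra tokens at free positions}\}$, and then delete those superfluous tokens one by one; each deletion keeps $C^\star$ as a subset, so every intermediate set remains a cover, and the sizes descend $|C|, |C|-1, \dots, m$, all within budget, ending at $C^\star$.

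Combining these facts with reversibility of reconfiguration sequences, for $u \ge m+1$ the instance is a yes-instance iff $I = J$ or neither $I$ nor $J$ is frozen (a cover of size exactly $u$ that is minimal), and for $u = m$ iff $I = J$. Finally I would verify the $O(n)$ running time: computing $m$ and comparing $I$ with $J$ takes $O(n)$; testing whether a given cover is frozen amounts to checking $|C| = u$ and then, after a single left-to-right scan recording the gap lengths between consecutive tokens, verifying that every token is non-removable, i.e.\ that its two adjacent gaps sum to at least $k-1$, which is again $O(n)$.

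The step I expect to be the main obstacle is pinning down exactly which configurations are stuck and proving that all remaining ones collapse onto the single canonical cover $C^\star$. The delicate point is that on a path a minimal $k$-path vertex cover need not be minimum (minimal covers can be as large as roughly $2n/(k+1)$), so merely having size greater than $\psi_k(P)$ does not guarantee a removable token. This is why the argument must exploit the slack $|C| < u$ to first reach, via the $\mathsf{TJ}$-engine of Theorem~\ref{thm:kPVCR_paths_TJ}, a same-size cover that \emph{does} possess removable tokens, rather than trying to shrink $C$ directly; correctly handling the top size level $|C| = u$, where this slack is unavailable and genuine frozen covers appear, is the crux of the proof.
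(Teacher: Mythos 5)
Your proposal is correct and follows essentially the same route as the paper: both use Lemma~\ref{lem:TJ-TAR-equiv} together with Theorem~\ref{thm:kPVCR_paths_TJ} as the engine, and both identify the only obstruction as an endpoint of size exactly $u$ with no removable token, detected by a linear-time gap scan. The only cosmetic difference is that you funnel everything through a canonical minimum cover $C^\star$ and state an explicit yes/no characterization, whereas the paper pads $I$ and $J$ to equal size $s$ and, in the tight case $u=s$, removes one token from each and connects the resulting covers directly at size $s-1$.
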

\begin{proof}
	A similar approach as in the proof of Theorem~\ref{thm:kPVCR-Trees-TAR} can be applied.
	Note that in the case $s = u$, where $s = |I| = |J|$, we have to check if we can remove at least one token from $I$ (resp. $J$) is as follows. 
	Given a path $P=v_{1}v_{2}\dots v_{n}$, let us assume that $I=\{v_{i_1}, v_{i_2}, \dots, v_{i_s}\}$ where $1\leq i_1 < i_2 < \dots < i_s\leq n$. In order to check if a token on $u$ can be removed, assuming $u=v_{i_j}$ for some $j$ such that $1\leq j\leq s$, we do as follows. (1) If $j\in\{2,\dots,s-1\}$, then check if $\text{dist}_G(v_{i_{j-1}},v_{i_{j+1}})\leq k$, and (2) if $j=1$, then check if $\text{dist}_G(v_{1},v_{i_{j}})\leq k-1$, and (3) if $j=s$, then check if $\text{dist}_G(v_{i_{j}},v_{n})\leq k-1$. 
	Indeed, this can be done in $O(n)$ time: for each token, one needs $O(1)$ time for checking if the resulting set obtained by removing $u$ is still a $k$-path vertex cover. The correctness of this checking easily follows from the definition of $k$-path vertex cover. One can see that similar things can be done for $J$. 
\end{proof}

Now we sketch the idea for solving the problem under $\mathsf{TS}$ in polynomial time. Given an instance $(P,I,J,\mathsf{TS})$ of \textsc{$k$-PVCR} where $|I|=|J|=s$, one can construct a shortest $\mathsf{TS}$-sequence between $I$ and $J$. Suppose that vertices in $I = \{v_{i_1}, \dots, \allowbreak v_{i_s}\}$ and $J = \{v_{j_1}, \dots, v_{j_s}\}$ are ordered such that $1 \leq i_1 < \dots < i_s \leq n$ and  $1 \leq j_1 < \dots < j_s \leq n$. Our goal is to construct a shortest $\mathsf{TS}$-sequence (of length $\sum_{p=1}^s\text{dist}_P(v_{i_p}, v_{j_p})$) that repeatedly slides the token on the ``leftmost'' vertex $v_{i_p} \in I$ to the ``leftmost'' vertex $v_{j_p} \in J$ if $i_p < j_p$ or vice-versa otherwise, for $p \in \{1, \dots, s\}$. 

The key point is, in certain conditions, one can construct in polynomial time a function $\mathtt{Push}(P, I, i, j)$ (Function~\ref{func:push}) whose task is to output a $\mathsf{TS}$-sequence that moves the token placed at some vertex $v_i$ of the $k$-path vertex cover $I$ to vertex $v_j$ in a given path $P = v_1v_2\dots v_n$, where $1 \leq i < j \leq i+k \leq n$.
Roughly speaking, $\mathtt{Push}(P, I, i, j)$ slides the token $t$ on $v_i$ toward $v_j$ along the path $P_{ij} = v_iv_{i+1}\dots v_j$ until either $t$ ends up at $v_j$ or there is some index $p \in \{i, \dots, j-1\}$ where $t$ is already placed at $v_{p}$ but cannot immediately move to $v_{p+1}$ because there is already some token $t^\prime$ placed there.
In the latter case, one can recursively call $\mathtt{Push}$ to slide $t^\prime$ from $v_{p+1}$ to $v_{p+2}$ and therefore enabling $t$ (which is currently placed at $v_{p}$) to slide to $v_{p+1}$. 
Now, the same situation happens again with $t$ and $t^\prime$, and the resolving procedure can be done in the same manner as before.
This process stops when $t$ is finally placed at $v_j$.

\begin{algorithm}[t]
	\SetAlgorithmName{Function}{}{}
	\KwIn{A path $P = v_1\dots v_n$, a $k$-path vertex cover $I$, and two indices $i$ and $j$ with $1 \leq i < j \leq i+k \leq n$\;}
	\KwOut{A sequence $S$ of $\mathsf{TS}$-steps that moves the token on $v_i$ to $v_j$\;}
	\SetArgSty{textbb}
	$S = \emptyset$\;
	\While{$i\neq j$}
	{
		\If{$v_{i+1}\in I$}
		{
			$S:=  S \oplus \mathtt{Push}(P, I, i+1, i+2)$\tcp*{Both $S$ and $I$ are updated}
		}
		$S:= S\oplus \langle I, I\setminus \{v_i\} \cup\{v_{i+1}\}\rangle$\;
		$I:= I\setminus \{v_i\} \cup \{v_{i+1}\}$\;
		$i:= i+1$\;
	}
	\Return $S$;
	\caption{$\mathtt{Push}(P, I, i, j)$}
	\label{func:push}
\end{algorithm}

The following lemma says that if certain conditions are satisfied, the output of $\mathtt{Push}(P, I, i, j)$ is indeed a $\mathsf{TS}$-sequence that reconfigures the $k$-path vertex cover $I$ to some other $k$-path vertex cover of $P$.
\begin{lemma}
	\label{lem:push}
	Let $P = v_1v_2\dots v_n$ be a path on $n$ vertices, and let $I$ be a $k$-path vertex cover of $P$.
	Let $i \in \{1, \dots, n\}$ be such that either $i \leq k+1$ or $\{v_{i-1}, \dots, v_{i-k}\} \cap I \neq \emptyset$.
	If $\{v_i, v_{i+1}, \dots, v_{i+p}\} \subseteq I$ and $v_{i+p+1} \notin I$ for some integer $p$ satisfying $0 \leq p \leq n-i-1$, then there exists a $\mathsf{TS}$-sequence in $P$ that reconfigures $I$ to $I \setminus \{v_i, v_{i+1}, \dots, v_{i+p}\} \cup \{v_{i+1}, \dots, v_{i+p+1}\}$.
	Consequently, if the assumption is satisfied, the output of $\mathtt{Push}(P, I, i, j)$ is indeed a $\mathsf{TS}$-sequence in $P$ that reconfigures $I$ to some $k$-path vertex cover of $P$.
\end{lemma}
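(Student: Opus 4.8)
The plan is to prove the lemma in two stages matching its two assertions: first the single \emph{block shift} (the implication), and then, by iterating it, the correctness of $\mathtt{Push}$. Throughout I would use that in a path a $k$-path is simply a set of $k$ consecutive vertices $\{v_a, \dots, v_{a+k-1}\}$, so that a vertex-set covers it exactly when it meets this set.

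For the block shift I would make the $\mathsf{TS}$-sequence explicit, sliding the tokens of the occupied block $\{v_i, \dots, v_{i+p}\}$ one step to the right in the order $v_{i+p} \to v_{i+p+1}$, then $v_{i+p-1} \to v_{i+p}$, and so on down to $v_i \to v_{i+1}$. The core of the argument is to check that each of the $p+1$ intermediate token-sets is again a $k$-path vertex cover. I would isolate the single vertex vacated at each slide and observe that only $k$-paths through that vertex can lose coverage, since filling $v_{i+p+1}$ never hurts. For every slide except the last, the vacated vertex lies strictly between two occupied vertices (its left neighbour is still in the block, its right neighbour has just been filled); as $k \geq 2$, every $k$-path through it contains one of these occupied neighbours and remains covered.

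The main obstacle, and the only place the hypothesis on $i$ is needed, is the final slide, which vacates the left end $v_i$ whose left neighbour need not be occupied. After it, the unique $k$-path that can become uncovered is the one ending at $v_i$, namely $\{v_{i-k+1}, \dots, v_i\}$, and this stays covered precisely when $I$ meets the window immediately preceding $v_i$. The two alternatives in the hypothesis handle exactly this case: when $i$ is close to the left end the endangered $k$-path does not exist, and otherwise the assumed token in $\{v_{i-1}, \dots, v_{i-k}\}$ supplies the required cover. I expect the delicate part here to be the boundary indexing, so I would pin down the endangered $k$-path carefully and treat the small-$i$ regime separately.

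For the ``consequently'' part I would observe that $\mathtt{Push}(P, I, i, j)$ is nothing but a concatenation of block shifts of the kind just analysed: a free one-step slide of the token $t$ over an empty cell is the case $p = 0$, while a cascade of $t$ through a maximal run $v_{p+1}, \dots, v_q$ of consecutive tokens unwinds to a single block shift of $\{v_p, \dots, v_q\}$. Correctness then reduces to verifying that every invoked block shift meets the position hypothesis, which I would establish by induction on the recursion depth (equivalently, on $j - i$): the base case is the lemma's hypothesis at $v_i$, and in the inductive step the pushed token $t$ always sits immediately to the left of the next block it meets, hence furnishes the needed left cover within distance less than $k$. Concatenating these valid $\mathsf{TS}$-sequences and using that $\mathtt{Push}$ halts once $t$ reaches $v_j$ then yields a $\mathsf{TS}$-sequence in $P$ from $I$ to a $k$-path vertex cover, as claimed.
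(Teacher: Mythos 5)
Your argument is essentially the paper's proof with the induction unrolled: the paper also shifts the block right-to-left (sliding $v_{i+p}\to v_{i+p+1}$ first, noting that the only $k$-path endangered by vacating $v_{i+p}$ also contains the still-occupied $v_{i+p-1}$) and then recurses on the shorter block, which is exactly your per-slide coverage check; the treatment of $\mathtt{Push}$ as a concatenation of such shifts is likewise the same. The one step you flag as delicate---the final slide vacating $v_i$---is precisely where the paper's proof is thinnest (its base case $p=0$ is simply asserted to hold ``by the assumption''), and your caution is warranted: the endangered $k$-path $\{v_{i-k+1},\dots,v_i\}$ fails to exist only when $i\le k-1$, so for $i\in\{k,k+1\}$ with no token among $v_1,\dots,v_{i-1}$ the stated hypothesis does not actually rescue the claim (e.g.\ $k=3$, $P=v_1\dots v_{10}$, $I=\{v_3,v_6,v_9\}$, $i=3$, $p=0$: the target set leaves $v_1v_2v_3$ uncovered). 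This is a defect in the lemma's hypothesis rather than in your strategy---in the calls actually issued by $\mathtt{PVCRPathTS}$ the destination index never exceeds $k$ when no token lies left of $v_i$, which is the condition one really needs there.
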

\begin{proof}
	We prove the lemma by induction on $p$.
	If $p = 0$, then by the assumption, the lemma clearly holds because the token on $v_i$ can indeed be moved to $v_{i+1}$ without leaving any non-covered $k$-path.
	Assume that if $\{v_i, v_{i+1}, \dots, v_{i+p-1}\} \subseteq I$ and $v_{i+p} \notin I$ for some integer $p$ satisfying $0 \leq p \leq n-i-1$, then there exists a $\mathsf{TS}$-sequence $S^\prime$ in $P$ that reconfigures $I$ to $I \setminus \{v_i, v_{i+1}, \dots, v_{i+p-1}\} \cup \{v_{i+1}, \dots, v_{i+p}\}$.
	We claim that if $\{v_i, v_{i+1}, \dots, v_{i+p}\} \subseteq I$ and $v_{i+p+1} \notin I$ for some integer $p$ satisfying $0 \leq p \leq n-i-1$, then there exists a $\mathsf{TS}$-sequence $S$ in $P$ that reconfigures $I$ to $I \setminus \{v_i, v_{i+1}, \dots, v_{i+p}\} \cup \{v_{i+1}, \dots, v_{i+p+1}\}$.
	Note that the $k$-path $v_{i+p-k+1}\dots v_{i+p}$ is (at least) covered by both $v_{i+p-1}$ and $v_{i+p}$.
	Therefore, the token on $v_{i+p}$ can be slid to $v_{i+p+1}$ without leaving any non-covered $k$-path.
	More formally, $I^\prime = I \setminus \{v_{i+p}\} \cup \{v_{i+p+1}\}$ is a $k$-path vertex cover in $P$.
	By the inductive hypothesis, there exists a $\mathsf{TS}$-sequence $S^\prime$ that reconfigures $I^\prime$ to $I^\prime \setminus \{v_i, v_{i+1}, \dots, v_{i+p-1}\} \cup \{v_{i+1}, \dots, v_{i+p}\} = I \setminus \{v_i, v_{i+1}, \dots, v_{i+p}\} \cup \{v_{i+1}, \dots, v_{i+p+1}\}$.
	Thus, $S = \langle I, I^\prime \rangle \oplus S^\prime$ is our desired $\mathsf{TS}$-sequence.
	It is not hard to see that each iteration of the {\bf while} loop in $\mathtt{Push}(P, I, i, j)$ performs exactly the procedure we have just described (the case $p = 0$ corresponds to the steps outside the {\bf if} condition, the case $p \geq 0$ corresponds to the recursive call inside the {\bf if} condition).
	As a result, if the assumption of this lemma is satisfied, $\mathtt{Push}(P, I, i, j)$ is indeed a $\mathsf{TS}$-sequence.
\end{proof}

Clearly, the function $\mathtt{Push}(P, I, i_p, j_p)$ can be used to slide a token on $v_{i_p}$ to $v_{j_p}$ for $p\in \{1, \dots, s\}$ and $i_p < j_p$. Thus, we have the following theorem. 

\begin{theorem}
	\label{thm:kPVCR_paths_TS} 
	Given an instance $(P, I, J, \mathsf{TS})$ of \textsc{$k$-PVCR} on a path $P$, the $k$-path vertex covers $I$ and $J$ are reconfigurable if and only if $|I| = |J|$.
	Moreover, we can compute a shortest reconfiguration sequence in $O(n^2)$ time. 
\end{theorem}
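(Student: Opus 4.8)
The plan is to prove the biconditional together with the optimality claim, treating necessity, the lower bound, and a matching construction in turn. Necessity is immediate: a $\mathsf{TS}$-step neither creates nor destroys a token, so any reconfigurable pair must satisfy $|I|=|J|$. For the converse assume $|I|=|J|=s$ and write $I=\{v_{i_1},\dots,v_{i_s}\}$ and $J=\{v_{j_1},\dots,v_{j_s}\}$ with $i_1<\dots<i_s$ and $j_1<\dots<j_s$. First I would record the rigidity of token sliding on a path: since a token moves only onto an empty neighbour and two tokens never share a vertex, the left-to-right order of the tokens is invariant along any $\mathsf{TS}$-sequence. Hence the token on the $p$-th leftmost vertex of $I$ must be exactly the one on the $p$-th leftmost vertex of $J$, so the correspondence $v_{i_p}\mapsto v_{j_p}$ is forced. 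Because each $\mathsf{TS}$-step advances a single token by one edge and the $p$-th token must travel from $v_{i_p}$ to $v_{j_p}$, every $\mathsf{TS}$-sequence from $I$ to $J$ has length at least $\sum_{p=1}^{s}\text{dist}_P(v_{i_p},v_{j_p})=\sum_{p=1}^{s}|i_p-j_p|$. This simultaneously yields the lower bound and the target length to aim for.

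Next I would construct a $\mathsf{TS}$-sequence meeting this bound exactly. The engine is the routine $\mathtt{Push}$ from Lemma~\ref{lem:push}: whenever its left-coverage precondition holds, a call $\mathtt{Push}(P,I,i_p,j_p)$ with $i_p<j_p$ slides the $p$-th token rightward to $v_{j_p}$ while keeping every intermediate set a $k$-path vertex cover. Leftward moves ($i_p>j_p$) are handled symmetrically by applying the same routine to the reversed path $v_n v_{n-1}\dots v_1$, in which ``rightward'' and ``leftward'' are interchanged. Driving the tokens to their targets in an order-preserving, left-to-right fashion makes each token move monotonically, so the total number of steps is exactly $\sum_{p=1}^{s}|i_p-j_p|$, matching the lower bound; the produced sequence is therefore shortest. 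Since this length is $O(n^2)$ and the bookkeeping inside the $\mathtt{Push}$ calls is linear per token, the whole sequence is computed in $O(n^2)$ time.

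The delicate part, and the step I expect to dominate the write-up, is verifying the two invariants that make the construction both legal and wasteless. I must check that the precondition of Lemma~\ref{lem:push} (either $i\le k+1$, or some vertex among $v_{i-1},\dots,v_{i-k}$ is occupied) holds at every invocation: this uses that $I$ is a $k$-path vertex cover, which forces the leftmost token to lie in $\{v_1,\dots,v_k\}$ and consecutive tokens to be at distance at most $k$, and that these properties persist as tokens are processed. I must also argue that whenever $\mathtt{Push}$ carries a token through a token $t'$ lying ahead of it, $t'$ is itself a token whose order-preserved target is strictly further right, so the displacement of $t'$ is useful rather than wasted and no token is ever pushed past its destination $v_{j_p}$. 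Establishing that the rightward passes and the mirrored leftward passes interleave without any token overshooting, so that the achieved length equals the lower bound $\sum_{p=1}^{s}\text{dist}_P(v_{i_p},v_{j_p})$, is the heart of the optimality argument and the main obstacle.
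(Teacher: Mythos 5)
Your proposal is correct and follows essentially the same route as the paper: the order-preservation lower bound $\sum_{p=1}^{s}\mathrm{dist}_P(v_{i_p},v_{j_p})$, repeated calls to $\mathtt{Push}$ (Lemma~\ref{lem:push}) whose precondition is guaranteed by the $k$-path-vertex-cover property, and the observation that a displaced token $t'$ with $i_p<i_q\le j_p<j_q$ is never pushed past its own target. The only cosmetic difference is that the paper avoids leftward pushes altogether by building two rightward-only sequences $S_I$ and $S_J$ toward a common intermediate cover and returning $S_I\oplus revS_J$, whereas you mirror $\mathtt{Push}$ on the reversed path; both are valid.
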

\begin{proof}
	Before proving Theorem~\ref{thm:kPVCR_paths_TS}, we describe the algorithm $\mathtt{PVCRPathTS}(P,I,J)$ (Algorithm~\ref{algo:PVCRPathTS}) that takes two $k$-path vertex covers $I$ and $J$ of $P$ with $|I| = |J|$ as the input, and returns a $\mathsf{TS}$-sequence between them.
	In the following, we use the expression \emph{leftmost} instead of using ``with the smallest index''.
	
	Suppose that vertices in $I = \{v_{i_1}, \dots, \allowbreak v_{i_s}\}$ and $J = \{v_{j_1}, \dots, v_{j_s}\}$ are ordered such that $1 \leq i_1 < \dots < i_s \leq n$ and  $1 \leq j_1 < \dots < j_s \leq n$, where $s = |I| = |J|$.
	Intuitively, $\mathtt{PVCRPathTS}(P,I,J)$ outputs a $\mathsf{TS}$-sequence that slides the token on $v_{i_p}$ to $v_{j_p}$ for $p \in \{1, \dots, s\}$. 
	Since $P$ is a path, this is the only way of sliding tokens, and thus any $\mathsf{TS}$-sequence between $I$ and $J$ uses at least $\sum_{p=1}^s\text{dist}_P(v_{i_p}, v_{j_p})$ $\mathsf{TS}$-steps.
	
	\begin{algorithm}[b]
		\KwIn{A path $P = v_1v_2\dots v_n$, two $k$-path vertex covers $I, J$\;}
		\KwOut{A $\mathsf{TS}$-sequence $S$ between $I$ and $J$ in $P$\;}
		\SetArgSty{textbb}
		Let $S, S_I, S_J$ be reconfiguration sequences, and initialize them by $\emptyset$\;
		\While{$I \neq J$}{
			Mark all vertices in $I$ and $J$ as $\mathtt{untouched}$\;
			Find the leftmost $\mathtt{untouched}$ vertex $v_i \in I$ and the leftmost $\mathtt{untouched}$ vertex $v_j \in J$\;
			\If{$i < j$}{
				$S_I := S_I \oplus \mathtt{Push}(P, I, i, j)$ \tcp*{$I$ is updated in Push}
			}
			\Else{
				$S_J := S_J \oplus \mathtt{Push}(P, J, j, i)$ \tcp*{$J$ is updated in Push}
			}
			Mark $v_i$ and $v_j$ as $\mathtt{touched}$\;
		}
		$S: = S_I \oplus \text{rev}(S_J)$\;
		\Return $S$\;
		\caption{$\mathtt{PVCRPathTS}(P,I,J)$}
		\label{algo:PVCRPathTS}
	\end{algorithm}
	
	Now we prove Theorem~\ref{thm:kPVCR_paths_TS}. As before, the only-if direction is trivial. 
	We show that $\mathtt{PVCRPathTS}(P,I,J)$ constructs a shortest $\mathsf{TS}$-sequence between two $k$-path vertex covers $I, J$ of $P$ with $|I| = |J|$ in $O(n^2)$ time.
	
	We first verify that the output of $\mathtt{PVCRPathTS}(P,I,J)$ is a $\mathsf{TS}$-sequence between $I$ and $J$ in $P$.
	Note that if in the current iteration of the {\bf while} loop in $\mathtt{PVCRPathTS}$, the token on $v_i$ is moved to $v_j$ (i.e., $i < j$), then the distance between $v_j$ and the two $\mathtt{untouched}$ vertices considered in the next iteration must be at most $k$; otherwise, some non-covered $k$-path appears.
	Then, the assumption of Lemma~\ref{lem:push} is satisfied in the next iteration.
	A similar argument holds for $i > j$.
	As a result, the function $\mathtt{Push}$ always returns a $\mathsf{TS}$-sequence.
	Let $I^\prime$ be the $k$-path vertex cover of $P$ obtained when the condition of the {\bf while} loop of $\mathtt{PVCRPathTS}(P,I,J)$ is violated.
	Then, it is not hard to see that $S_I$ (resp. $S_J$) is a $\mathsf{TS}$-sequence that reconfigures $I$ (resp. $J$) to $I^\prime$, and therefore $S = S_I \oplus \text{rev}(S_J)$ reconfigures $I$ to $J$. 
	
	Note that in the function $\mathtt{Push}(P, I, i, j)$ (and also $\mathtt{Push}(P, J, j, i)$), $\mathtt{Push}$ is called at most once for each vertex of $P$, which implies $\mathtt{Push}(P, I, i, j)$ runs in $O(n)$ time. 
	Moreover, $\mathtt{PVCRPathTS}$ marks each vertex in $I$ and $J$ exactly twice.
	Thus, in total, $\mathtt{PVCRPathTS}$ runs in $O(n^2)$ time.
	
	To conclude the proof of Theorem~\ref{thm:kPVCR_paths_TS}, we show that the $\mathsf{TS}$-sequence $S$ between $I$ and $J$ in $P$ obtained from $\mathtt{PVCRPathTS}(P,I,J)$ is shortest.
	To see this, note that for each $p \in \{1, \dots, s\}$, either the token $t$ on $v_{i_p} \in I$ is slid to $v_{j_p} \in J$ or the token $t^\prime$ on $v_{j_p} \in J$ is slid to $v_{i_p} \in I$ in some iteration of the {\bf while} loop in $\mathtt{PVCRPathTS}(P,I,J)$, and either $S_I$ or $S_J$ is then updated accordingly.
	Suppose that the algorithm slides $t$ to $v_{j_p}$.
	Note that if there is any token $t^{\prime\prime}$ placed at some vertex $v_{i_q}$ ($i_q \in \{i_p+1, \dots, j_p\}$ in the path $v_{i_p}v_{i_p+1}\dots v_{j_p}$, then even when $t^{\prime\prime}$ is moved by some $\mathtt{Push}$ calls, by the time $t$ ends up at $v_j$, $t^{\prime\prime}$ cannot be placed at any vertex whose index is larger than $j_q$.
	(We always have $i_p < i_q \leq j_p < j_q$ for all such $i_q$.)
	Clearly, if no such $v_{i_q}$ exists, sliding $t$ has no effect on sliding any other token in the next iterations.
	A similar argument holds in case the algorithm slides $t^\prime$.
	Thus, we can conclude that $\mathtt{PVCRPathTS}(P,I,J)$ performs exactly $\sum_{p=1}^s\text{dist}_P(v_{i_p}, v_{j_p})$ $\mathsf{TS}$-steps, and therefore outputs a shortest $\mathsf{TS}$-sequence.
\end{proof}

\subsubsection{\textsc{$k$-PVCR} on Cycles.}
Let $C=v_0v_1\dots v_{n-1}v_0$ be a given $n$-vertex cycle, and let $(C,I,J, \textsf{R})$ be a \textsc{$k$-PVCR} instance on $C$ under a reconfiguration rule $\textsf{R}\in \{\mathsf{TJ}, \mathsf{TS}, \mathsf{TAR}(u)\}$. We remark that if $|I|=|J|=\lceil n/k\rceil$ and $n=c\cdot k$ for some $c$, then $(C,I,J, \mathsf{R})$ where $\mathsf{R}\in \{\mathsf{TS}, \mathsf{TJ}\}$ is a no-instance. This is because no tokens can be moved in such instances. 

Here we assume that the indices of vertices on the cycle increase in the clockwise manner. 
We claim that it is possible to apply the algorithms for paths to cycles, by cutting a cycle into a path with a vertex in $I\cap J$, \red{if it exists}.
\red{Moreover, if $I \cap J = \emptyset$, we claim that one can always move tokens to reach an instance where $I \cap J \neq \emptyset$.}
Our algorithms do not always achieve the shortest reconfiguration sequence. However, we later show that achieving the shortest sequence even on cycles under $\mathsf{TJ}$ might not be trivially easy, since we can systematically create the instances such that the length of the shortest reconfiguration sequence is not equal to $|I\Delta J|/2$.  

Now, we describe the sketch how to cut $C$ under $\mathsf{TJ}$, $\mathsf{TS}$, and $\mathsf{TAR}$. In the $\mathsf{TS}$ case, without loss of generality, we can assume that either $|I|\neq \lceil n/k\rceil$ or $n\neq c\cdot k$ holds. If $v$ is already in $I\cap J$, we cut $C$ by removing $v$. The following lemma ensures that if $I$ and $J$ are reconfigurable in $C-v$, then $I\cup\{v\}$ and $J\cup\{v\}$ are reconfigurable in $C$.

\begin{lemma}
	\label{lemma:cut_cycle_safe}
	Let $C$ be an $n$-vertex cycle and $v$ be a token in $I\cap J$ of $C$. Then, for any $k$-path vertex cover $I'$ of $C-v$, $I'\cup \{v\}$ is a $k$-path vertex cover of $C$. 
\end{lemma}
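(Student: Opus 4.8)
The plan is to prove the statement by a direct case analysis over the $k$-paths of $C$, exploiting the fact that deleting a vertex from a graph destroys exactly the $k$-paths passing through that vertex and leaves all others intact. I would first remark that the hypothesis $v \in I \cap J$ plays no role in the conclusion itself: the claim holds verbatim for \emph{any} vertex $v$ of $C$, and in fact for any graph $G$ in place of $C$ (the membership $v \in I \cap J$ matters only for the surrounding reconfiguration argument, where cutting at such a $v$ guarantees that adding $v$ back does not alter $I$ or $J$). Accordingly, I would simply fix an arbitrary vertex $v$ of $C$ and an arbitrary $k$-path vertex cover $I'$ of $C - v$, and show that every $k$-path of $C$ contains a vertex of $I' \cup \{v\}$.

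First I would fix an arbitrary $k$-path $Q$ in $C$, that is, a path subgraph on $k$ vertices, and split into two cases according to whether $v \in V(Q)$. If $v \in V(Q)$, then $Q$ is covered immediately, since $v \in I' \cup \{v\}$; this case needs nothing further.

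The only step requiring justification is the case $v \notin V(Q)$. Here I would argue that $Q$ is in fact a $k$-path of $C - v$: since $Q$ avoids $v$, none of its edges is incident to $v$, and deleting $v$ removes only $v$ together with its incident edges and nothing else. Hence every vertex and every edge of $Q$ survives in $C - v$, so $Q$ is a genuine $k$-path of $C - v$. Because $I'$ is a $k$-path vertex cover of $C - v$ by assumption, it contains a vertex of $V(Q)$, and therefore $I' \cup \{v\}$ covers $Q$ as well. Combining the two cases shows that $I' \cup \{v\}$ meets every $k$-path of $C$, i.e., it is a $k$-path vertex cover of $C$, as required.

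I do not expect any genuine obstacle; the statement is essentially a locality/monotonicity observation about vertex deletion. The one point to state carefully — the main obstacle, if it can be called that — is the identification of the $k$-paths of $C$ that avoid $v$ with the $k$-paths of $C - v$, which rests on the elementary fact that vertex deletion removes precisely the incident edges. Once this identification is in hand, both cases close immediately from the definition of a $k$-path vertex cover.
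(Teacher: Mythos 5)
Your proof is correct and follows essentially the same route as the paper's: both split the $k$-paths of $C$ into those avoiding $v$ (which survive in $C-v$ and are covered by $I'$) and those containing $v$ (covered by $v$ itself). Your observation that the hypothesis $v \in I \cap J$ is irrelevant to the lemma's conclusion and matters only for the surrounding cutting argument is also accurate, and your write-up is if anything a bit cleaner than the paper's.
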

\begin{proof}
	Let us assume $v$ to be $v_0$.
	Consider \red{the} path $P=C-v= v_1v_2\dots v_{n-2}v_{n-1}$ and a $k$-path vertex cover $I'$ on $P$. Since $I'$ covers all the $k$-paths on $P$, $I'$ has at least one token on the $k$-path $P'=v_1v_2\dots v_k$ and also at least one token on the $k$-path $P''=v_{n-k}v_{n-k+1}\dots v_{n-1}$. Now $v$ is a token in $I\cap J$, if we connect two endpoints $v_1$ and $v_{n-1}$ with $v$ and create a cycle, all new $k$-paths include $v$ and those paths are covered by $v$. This completes the proof.
\end{proof}

If $I\cap J =\emptyset$, there exists at least one token movable in the clockwise or counterclockwise direction. Here, we say a token $u$ is \emph{movable} if and only if (i) there exists a neighbor $v$ of $u$ such that no token is placed on $v$, and (ii) moving a token on $u$ to $v$ results a $k$-path vertex cover. 

\begin{lemma}
	\label{lem:token_movable}
	If either $|I|\neq \lceil n/k\rceil$ or $n\neq c\cdot k$ holds, then there exists at least one token movable by at least one step in the clockwise or counterclockwise direction. Furthermore, we can find such a token in linear time. 
\end{lemma}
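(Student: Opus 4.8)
The plan is to translate everything into the language of \emph{gaps} between consecutive tokens. Since the only $k$-paths in a cycle are arcs of $k$ consecutive vertices, a vertex-subset $I$ is a $k$-path vertex cover of $C$ if and only if every maximal run of consecutive vertices avoiding $I$ has length at most $k-1$. Writing the tokens of $I$ in cyclic order as $t_0, t_1, \dots, t_{s-1}$ with $s = |I|$, let $g_i \in \{0, 1, \dots, k-1\}$ denote the number of unoccupied vertices strictly between $t_i$ and $t_{i+1}$ (indices mod $s$). Then $\sum_i g_i = n - s$, and the covering property is exactly $g_i \le k-1$ for all $i$. I would first record these equivalences, together with the observation that the lemma is always invoked in the setting $I \cap J = \emptyset$, so that $2|I| \le n$ and in particular $I \neq V(C)$; hence we may assume there is at least one unoccupied vertex.

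Next I would characterize movability purely in terms of gaps. A token $t_i$ can slide clockwise into its right gap precisely when $g_i \ge 1$ (the target is unoccupied) and $g_{i-1} \le k-2$ (after the slide the left gap becomes $g_{i-1}+1 \le k-1$, so the cover property is preserved, while the right gap only shrinks). Symmetrically, $t_i$ slides counterclockwise precisely when $g_{i-1} \ge 1$ and $g_i \le k-2$. Thus $t_i$ is \emph{movable} if and only if $(g_i \ge 1 \wedge g_{i-1} \le k-2)$ or $(g_{i-1} \ge 1 \wedge g_i \le k-2)$.

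The core of the argument is the contrapositive. Assuming no token is movable, I would negate this condition at every $t_i$, obtaining for each $i$ that $(g_i = 0 \vee g_{i-1} = k-1) \wedge (g_{i-1} = 0 \vee g_i = k-1)$, where I have used $g_j \le k-1$ to rewrite ``$g_j \ge k-1$'' as ``$g_j = k-1$''. A short case analysis on the left gap $a = g_{i-1}$ shows that if $1 \le a \le k-2$ the two clauses force $g_i = 0$ and $g_i = k-1$ simultaneously, which is impossible; hence every gap lies in $\{0, k-1\}$. The same clauses yield the implications $g_{i-1} = 0 \Rightarrow g_i = 0$ and $g_{i-1} = k-1 \Rightarrow g_i = k-1$, which propagate around the cycle and force either all gaps to equal $0$ or all gaps to equal $k-1$. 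The all-zero case means $I = V(C)$, excluded above; in the all-$(k-1)$ case $\sum_i g_i = s(k-1) = n - s$ gives $n = sk$, i.e. $k \mid n$ and $|I| = s = n/k = \lceil n/k \rceil$. This is exactly the configuration ruled out by the hypothesis, so we reach a contradiction and conclude that a movable token exists.

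Finally, for the running time I would note that the gaps $g_0, \dots, g_{s-1}$ are computed by a single traversal of $C$ in $O(n)$ time, after which the movability test for each token is $O(1)$ and consults only its two adjacent gaps; one pass over the tokens then returns a movable token (and a feasible direction) in $O(n)$ total. I expect the main obstacle to be the bookkeeping in the non-movable case analysis—in particular, arguing cleanly that the pointwise constraints propagate into a fully rigid global pattern—and correctly separating the genuinely excluded instance ($k \mid n$ with $I$ minimum) from the degenerate case $I = V(C)$, which the surrounding assumption $I \cap J = \emptyset$ discards.
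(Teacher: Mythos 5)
Your proof is correct, but it takes a different and noticeably more systematic route than the paper's. The paper argues by pigeonhole: if $|I| \geq \lceil n/k\rceil + 1$, or if $|I| = \lceil n/k\rceil$ but $k \nmid n$, then some arc of $k$ consecutive vertices contains two tokens, and the paper then asserts (rather informally) that one of these two tokens can be slid because the other still covers their common $k$-path. That local argument leaves two points implicit that your gap formulation handles explicitly: the target vertex of the slide may itself be occupied, and the slide must not uncover a \emph{different} $k$-path. By encoding the cover condition as ``every inter-token gap is at most $k-1$'' and characterizing movability of $t_i$ exactly as $(g_i \geq 1 \wedge g_{i-1} \leq k-2)$ or $(g_{i-1} \geq 1 \wedge g_i \leq k-2)$, your contrapositive shows that the only rigid configurations are the all-gaps-$0$ and all-gaps-$(k-1)$ patterns, i.e.\ $I = V(C)$ or the tight minimum cover with $n = sk$; the paper never isolates these as the precise obstructions. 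You also correctly flag that $I = V(C)$ is a genuine counterexample to the lemma as literally stated (no token has an unoccupied neighbor, yet $|I| = n \neq \lceil n/k\rceil$), and you patch it using the invocation context $I \cap J = \emptyset$ --- a gap the paper's sketch silently skips over. One small caveat: your movability characterization presupposes $s \geq 2$ so that the two gaps adjacent to a token are distinct; for $s = 1$ one has $n \leq k$ and hence falls into the excluded case $n = 1\cdot k$, $|I| = \lceil n/k\rceil$, so nothing is lost, but it is worth saying so. The linear-time claim is established essentially the same way in both arguments (one pass to compute gaps or to locate a doubly-covered $k$-path).
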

\begin{proof}
	If $|I|\neq \lceil n/k \rceil$, since $\lceil n/k \rceil$ is a minimum size of $k$-path vertex cover on $n$-vertex cycle, we can assume $|I|\geq \lceil n/k \rceil+1$. This implies that there exists some $k$-path that has at least two tokens on it. We can find such a path (and thus such tokens) in linear time, since there are at most $n$ distinct $k$-paths on an $n$-vertex cycle. Once we find such tokens, e.g., $u$ and $v$, at least one of them can move at least one step in clockwise or counterclockwise direction, since the $k$-path is now covered by $u$ and $v$ and if we move $v$, either $u$ or $v$ still covers the $k$-path. Hence, if $|I|\neq \lceil n/k \rceil$, this lemma holds. 
	
	Consider the case $|I|=\lceil n/k \rceil$ and $n$ is not divisible by $k$. Since $I$ is a $k$-path vertex cover, $I$ covers all $k$-paths in $C$. Clearly, $C$ is a cycle of size $n$ if and only if the number of edges of $C$ is $n$. Suppose to the contrary that each $k$-path in $C$ has exactly one token of $I$. Then, the length of the cycle is $|I|\cdot (k-1)+|I|=|I|\cdot k$, which contradicts to the assumption that $n$ is not divisible by $k$. By this argument, similarly to the above $|I|\neq \lceil n/k \rceil$ case, there exists at least one $k$-path which has two tokens of $I$, and we can find them in linear time. This completes the proof.
\end{proof}

After finding such a movable token, we can use \emph{rotate} operation repeatedly until \red{we obtain} at least one vertex in $I\cap J$. Here, the rotate operation takes a token-set, a movable token \red{$u$} which can be slid at least one step towards direction $d\in \{\text{clockwise}, \text{counterclockwise}\}$ as input, and outputs a $\mathsf{TS}$-sequence that slides all tokens one step towards $d$. 
\red{Intuitively, moving $u$ one step towards $d$ enables its successor (with respect to direction $d$) to move one step towards $d$, and so on.}
After obtaining at least one vertex in $I\cap J$, we can perform the cutting operation as before. 

Next, we consider the $\mathsf{TJ}$ case. Since any $\mathsf{TS}$-sequence is also a $\mathsf{TJ}$-sequence, we can perform the same cutting operation as in the $\mathsf{TS}$ case. Then, using this cutting operation, we can show the following theorem.

\begin{theorem}
	\label{thm:kPVCR_cycle_TJTS} 
	Given an instance $(C, I, J, \textsf{R})$ of \textsc{$k$-PVCR} on a cycle $C$ where $\textsf{R}\in \{
	\textsf{TS}, \textsf{TJ}\}$, if $|I|=|J|=\lceil n/k \rceil$ and $n=c
	\cdot k$ for some $c$, then $(C,I,J, \textsf{R})$ is a no-instance. Otherwise, the $k$-path vertex covers $I$ and $J$ are reconfigurable if and only if $|I|=|J|$. Moreover, we can compute a reconfiguration sequence for $\mathsf{TJ}$ rule in $O(n)$ time, and for $\mathsf{TS}$ rule in $O(n^2)$ time.
\end{theorem}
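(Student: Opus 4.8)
The plan is to first settle the no-instance claim, and then reduce the remaining cases to the path algorithms of Theorems~\ref{thm:kPVCR_paths_TJ} and~\ref{thm:kPVCR_paths_TS} via the cutting lemmas. For the no-instance part, I would show that when $|I|=|J|=\lceil n/k\rceil=c$ and $n=ck$, every minimum $k$-path vertex cover of $C$ is \emph{frozen}. Writing such a cover as $c$ tokens separated by maximal runs (gaps) of empty vertices, the covering condition forces each gap to have at most $k-1$ vertices, while $\sum(\text{gaps})=n-c=c(k-1)$ forces every gap to equal exactly $k-1$; hence the $c$ tokens are equally spaced, and there are exactly $k$ such covers. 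For such a cover, any $\mathsf{TS}$-slide enlarges one neighbouring gap to size $k$ and uncovers a $k$-path, while any $\mathsf{TJ}$-move must first vacate a token, merging its two adjacent gaps into a single run of size $2k-1\geq k$ whose only validity-preserving re-split places the token back on its original vertex. Thus no token can move, so two distinct frozen covers cannot be reconfigured, yielding a no-instance.

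Since both $\mathsf{TS}$ and $\mathsf{TJ}$ preserve the number of tokens, $|I|=|J|$ is necessary, which gives the ``only if'' direction. For the converse, I assume $|I|=|J|=s$ and that we are not in the frozen case. The core reduction is: if $I\cap J\neq\emptyset$, pick $v\in I\cap J$ and delete it to obtain the path $P=C-v$. Then $I\setminus\{v\}$ and $J\setminus\{v\}$ are $k$-path vertex covers of $P$ of equal size $s-1$ (any $k$-path of $P$ is a $k$-path of $C$ avoiding $v$), so by Theorem~\ref{thm:kPVCR_paths_TJ} (resp.\ Theorem~\ref{thm:kPVCR_paths_TS}) they are reconfigurable in $P$. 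Keeping a token on $v$ throughout lifts this sequence to $C$: every edge of $P$ is an edge of $C$, so each move is valid in $C$, and by Lemma~\ref{lemma:cut_cycle_safe} each intermediate set together with $v$ is a $k$-path vertex cover of $C$.

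It remains to handle $I\cap J=\emptyset$, where a common vertex must first be created. Under $\mathsf{TS}$, I would invoke Lemma~\ref{lem:token_movable} to obtain a movable token and apply the \emph{rotate} operation, which slides all tokens one step in a fixed direction; since a rotated configuration is again non-frozen, the operation can be iterated, and rotating by $d=(j-i)\bmod n$ steps for suitable $i\in I$ and $j\in J$ aligns a token of $I$ with one of $J$. Each single-step rotation costs $O(n)$ slides and at most $O(n)$ rotations suffice, after which the common-vertex reduction applies with the $O(n^2)$ path-$\mathsf{TS}$ algorithm, giving $O(n^2)$ overall. Under $\mathsf{TJ}$, reconfigurability already follows because every $\mathsf{TS}$-sequence is a $\mathsf{TJ}$-sequence; for the sharper $O(n)$ bound I would instead create the common vertex directly, exhibiting in $O(n)$ time a token $u\in I$ and a vertex $w\in J$ whose exchange $I\setminus\{u\}\cup\{w\}$ is again a valid cover, performing this single jump, and then applying the $O(n)$ path-$\mathsf{TJ}$ algorithm. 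Reversibility of reconfiguration sequences then yields the $J\to I$ direction for free.

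The main obstacle is precisely this disjoint case $I\cap J=\emptyset$: I must guarantee that a common vertex can be manufactured while maintaining a valid $k$-path vertex cover at \emph{every} intermediate step. For $\mathsf{TS}$ this rests on verifying that the rotate operation emits only valid slides (which is where Lemma~\ref{lem:token_movable} and a push-style sliding argument analogous to Lemma~\ref{lem:push} enter) and on bounding the number of rotations to secure the $O(n^2)$ running time. For $\mathsf{TJ}$ the delicate point is showing that the non-frozen hypothesis always supplies a single validity-preserving exchange $(u,w)$ with $w\in J$: deleting a suitable $u$ creates an over-long gap, and since $J$ itself covers every $k$-path it must contain a token inside that gap, which can then serve as $w$ after (if necessary) one preliminary jump to rebalance the two sides below $k$. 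This is exactly what lets us bypass the expensive rotation and attain linear time, and it is where I expect the bulk of the careful case analysis to lie; the remaining complexity bookkeeping is routine.
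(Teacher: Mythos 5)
Your overall architecture matches the paper's: establish the frozen no-instance case, cut the cycle at a vertex of $I\cap J$ via Lemma~\ref{lemma:cut_cycle_safe} and invoke the path algorithms, and, when $I\cap J=\emptyset$, manufacture a common vertex by rotating all tokens using the movable token from Lemma~\ref{lem:token_movable}. Your gap-counting argument for the frozen case is in fact more explicit than the paper's (which simply asserts that no token can move and leans on the contrapositive of Lemma~\ref{lem:token_movable}), and it is correct: each gap must equal $k-1$, a slide creates a gap of size $k$, and vacating a token merges two gaps into a run of $2k-1$ whose only valid re-split is the original position.

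The one genuine gap is in the $\mathsf{TJ}$ running time. You bound the rotation phase by ``at most $O(n)$ rotations, each costing $O(n)$,'' which gives only $O(n^2)$, and you therefore invent a separate direct-jump construction for $\mathsf{TJ}$ --- finding a single validity-preserving exchange $I\setminus\{u\}\cup\{w\}$ with $w\in J$ --- which you yourself leave unproved (``one preliminary jump to rebalance'' is not shown to always exist or to preserve validity, and two jumps already exceed the single exchange you promised). This machinery is unnecessary: since $J$ is a $k$-path vertex cover, every window of $k$ consecutive vertices of $C$ contains a vertex of $J$, so a fixed token of $I$ lands on a vertex of $J$ after at most $k-1$ single-step rotations. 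Hence the rotation phase costs $O(kn)=O(n)$ for fixed $k$, and the paper obtains the $O(n)$ bound for $\mathsf{TJ}$ by running exactly the same rotate-and-cut procedure followed by the $O(n)$ path-$\mathsf{TJ}$ algorithm. Tightening your rotation count closes the gap and lets you delete the speculative direct-jump argument entirely.
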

\begin{proof}
	We describe an algorithm (Algorithm~\ref{algo:PVCRCycleTS}) that takes $C=v_0v_1\dots v_{n-1}v_0$, initial token-set $I$, and target token-set $J$ and outputs a reconfiguration sequence $S$ if exists, and otherwise says no-instance. Lemma~\ref{lemma:cut_cycle_safe} shows that it is possible to cut the cycle $C$ with a vertex $v\in I\cap J$; in other words, it is equivalent to consider problems on a path $P=C-v$. 
	
	Lemma~\ref{lem:token_movable} allows us to find at least one movable token if either $|I|\neq \lceil n/k\rceil$ or $n\neq c\cdot k$ holds. 
%
	After finding such a movable token, we can use the \emph{rotate} operation described in Function~\ref{func:rot} 
	and obtain at least one vertex $v\in I\cap J$. Let us assume that $I=\{v_{i_{0}}, v_{i_{1}}, \dots, v_{i_{s-1}}\}$ where $0 \leq i_{0} < i_{1} < \dots < i_{s-1} \leq n-1$. Here, let $v_{i_{j}}$ be a token that is movable to at least one step in the direction $d\in \{\text{clockwise}, \text{counterclockwise}\}$, where $j\in \{0,\dots,s-1\}$. 
	
	\begin{algorithm}[b]
		\SetAlgorithmName{Function}{}{}
		\KwIn{A token-set $I$, a token on $v_{i_{j}}\in I$, $d\in \{\text{clockwise}, \text{counterclockwise}\}$.}
		\KwOut{A $\mathsf{TS}$-sequence $S$ \red{that slides all tokens one step towards $d$, starting from $v_{i_j}$}.}
		\SetArgSty{textbb}
		$S:=\emptyset$\;
		$c:=j$\;
		\If{$d$ is clockwise}
		{
			\Repeat{$c=j$}
			{
				$S:=S\oplus \langle I,I\setminus\{v_{i_{c}}\}\cup \{v_{(i_{c}+1) \mod n}\}\rangle$\;
				$I:=I\setminus \{v_{i_{c}}\}\cup \{v_{(i_{c}+1) \mod n}\}$\;
				$c:=(c+1)\mod \red{|I|}$\;
			}
		}
		\Else
		{
			\Repeat{$c=j$}
			{
				$S:=S\oplus \langle I,I\setminus\{v_{i_{c}}\}\cup \{v_{(i_{c}-1) \mod n}\}\rangle$\;
				$I:=I\setminus \{v_{i_{c}}\}\cup \{v_{(i_{c}-1) \mod n}\}$\;
				$c:=(c-1)\mod \red{|I|}$\;
			}
		}
		\Return $S$;
		\caption{$\mathtt{rot}(I, i_j, d)$}
		\label{func:rot}
	\end{algorithm}
	
	One can observe that, by Lemma~\ref{lem:token_movable}, the reconfiguration sequence obtained by $\mathtt{rot}(I, i_j, d)$ is a $\mathsf{TS}$-sequence. This is indeed true, since it moves each token in $I$ by exactly one step keeping the $k$-path vertex cover property, by starting to move tokens from $v_{i_j}$ along the cycle until we meet $v_{i_j}$ again. 
	
	\begin{algorithm}[t]
		\KwIn{A cycle $C=v_0v_1\dots v_{n-1}v_0$, initial token-set $I$, and target token-set $J$\;}
		\KwOut{A reconfiguration sequence $S$ if it exists; otherwise says no-instance\;}
		\SetArgSty{textbb}
		$S:=\emptyset$\;
		\If{$I\cap J=\emptyset$ and $|I|=\lceil n/k \rceil$ and $n$ is divisible by $k$}
		{
			\Return $(C,I,J)$ is a no-instance\;
		}
		Find a token $v_{i}\in I$ such that it can move at least one step in clockwise or counterclockwise direction, and let $d$ be such  a direction\;
		\While{$I\cap J=\emptyset$}
		{
			$S:=S\oplus \mathtt{rot}(I, i, d)$ \tcp*{$I$ is updated in rot($I,i,d$)}
			\If{d is clockwise}
			{
				$i:=(i+1)\mod n$\;
			}
			\Else
			{
				$i:=(i-1)\mod n$\;
			}
		}
		Pick one token $v\in I\cap J$\; 
		$S^\prime = \mathtt{PVCRPathTS}(C-v,I,J)$\;
		Update $S^\prime$ by adding $v$ to each of its members\;
		$S:=S~\oplus S^\prime$\;
		\Return $S$\;
		\caption{$\mathtt{PVCRCycleTS}(C,I,J)$}
		\label{algo:PVCRCycleTS}
	\end{algorithm}
	
	By Lemma~\ref{lemma:cut_cycle_safe} and Lemma~\ref{lem:token_movable}, \texttt{PVCRCycleTS$(C,I,J)$} is shown to be correct. Note here that, for \textsc{$k$-PVCR} on cycles under $\mathsf{TJ}$, one can use \texttt{PVCRPathTJ$(C-v,I,J)$} instead of applying \texttt{PVCRPathTS$(C-v,I,J)$} in the algorithm. 
	For the computation time, since (i) \textbf{while} loop takes $O(kn)$ time and (ii) \texttt{PVCRPathTS$(C-v,I,J)$} runs in $O(n^2)$ time, \texttt{PVCRCycleTS$(C,I,J)$} runs in $O(n^2)$ time. For $\mathsf{TJ}$ case, since \texttt{PVCRPathTJ$(C-v,I,J)$} runs in $O(n)$ time, \texttt{PVCRCycleTJ$(C,I,J)$} runs in $O(n)$ time.
\end{proof}

For the $\mathsf{TAR}$ case, we can use the result for the $\mathsf{TJ}$ case and Lemma~\ref{lem:TJ-TAR-equiv} to show the following theorem.

\begin{theorem}
	\label{thm:kPVCR_cycles_TAR}
	For any instance $(C,I,J, \mathsf{TAR}(u))$ of \textsc{$k$-PVCR} on a cycle $C$, one can decide if $I$ and $J$ are reconfigurable in linear time. 
\end{theorem}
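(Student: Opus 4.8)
The plan is to reduce the $\mathsf{TAR}$ question to the already-solved $\mathsf{TJ}$ question via Lemma~\ref{lem:TJ-TAR-equiv}, using minimum $k$-path vertex covers as canonical intermediate configurations. Write $\psi = \psi_k(C) = \lceil n/k\rceil$. First I would record the obvious necessary condition: since every member of a $\mathsf{TAR}(u)$-sequence has size at most $u$, the instance is trivially a no-instance whenever $u < \max\{|I|, |J|\}$. Assuming $u \geq \max\{|I|, |J|\}$, the strategy is to show that $I$ and $J$ are reconfigurable precisely when each of them can be driven down to a minimum $k$-path vertex cover and the minimum covers are mutually reachable, all within the budget $u$; the whole test then reduces to evaluating a few numerical conditions on $n$, $k$, $|I|$, $|J|$, $u$ together with one linear scan of the cycle.

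The central structural step is a \emph{reduce-to-minimum} lemma: starting from a cover $X$ with $|X| > \psi$, I want to produce a $\mathsf{TAR}(u)$-sequence from $X$ to some minimum cover. A token of $X$ is removable exactly when merging its two neighbouring gaps keeps every run of token-free vertices shorter than $k$, and repeatedly deleting removable tokens shrinks $X$. The obstacle is that this greedy deletion can stall at a \emph{minimal but non-minimum} cover (for instance $\{0,2,4\}$ on $C_6$ with $k=3$, whose minimum size is $2$), from which no single removal is valid. Here I would exploit the slack granted by $u$: whenever the current cover has size strictly below $u$, one may first \emph{add} a token to break the stalled configuration and then delete two, for a net decrease. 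Consequently $X$ reaches a minimum cover under $\mathsf{TAR}(u)$ unless $X$ is itself a minimal non-minimum cover with $|X| = u$; such an $X$ admits no $\mathsf{TAR}(u)$-move whatsoever and is therefore \emph{frozen}. Removability of every token, and hence whether $I$ or $J$ is frozen, is decidable in $O(n)$ time from the gap sequence.

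Next I would settle reachability among minimum covers, where Lemma~\ref{lem:TJ-TAR-equiv} with $s = \psi$ applies directly: two minimum covers are $\mathsf{TAR}(\psi+1)$-reconfigurable if and only if they are $\mathsf{TJ}$-reconfigurable, so Theorem~\ref{thm:kPVCR_cycle_TJTS} governs this level. The delicate case is $n = c\cdot k$, where the minimum covers are pairwise $\mathsf{TJ}$-frozen and hence, by the equivalence, \emph{not} connected under $\mathsf{TAR}(\psi+1)$; one genuinely needs one further unit of budget, $u \geq \psi+2$, before a token can be temporarily parked and the canonical cover rotated (as in connecting $\{0,3\}$ and $\{1,4\}$ on $C_6$). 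When $k \nmid n$, the non-frozen branch of Theorem~\ref{thm:kPVCR_cycle_TJTS} already yields $\mathsf{TJ}$-connectivity of the minimum covers, so $u \geq \psi+1$ suffices. Combining the two steps gives the characterization I aim to prove: for $I \neq J$, the instance is a yes-instance if and only if $u \geq \max\{|I|,|J|\}$, neither $I$ nor $J$ is frozen, and the minimum-cover threshold on $u$ is met (namely $\psi+1$, or $\psi+2$ when $k \mid n$). Each ingredient is read off from the parameters and one gap scan, so the decision runs in $O(n)$ time.

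The main obstacle I anticipate is pinning the budget thresholds down exactly and proving their \emph{necessity}, not merely their sufficiency. Soundness is the routine concatenation $I \to (\text{min cover}) \to (\text{min cover}) \to J$, assembled from the reduce-to-minimum sequences and the minimum-level connectivity. The hard direction is completeness under a tight budget: I must argue that a frozen endpoint, or an under-budgeted minimum level, really does disconnect $I$ from $J$—that no clever interleaving of the permitted token additions can circumvent the size bound $u$. This is where the careful gap-counting on the cycle, the fact that minimal covers have size at most $2n/(k+1)$, and the exact length-doubling equivalence of Lemma~\ref{lem:TJ-TAR-equiv} will have to be combined to rule out every escape route.
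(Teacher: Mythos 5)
Your high-level route (drive both endpoints down to minimum covers, then connect at the minimum level) is genuinely different from the paper's, which instead cuts the cycle at a vertex of $I \cap J$ (creating such a vertex by padding and rotating if necessary) and invokes the path results. Unfortunately, your final characterization is wrong: the ``minimum-cover threshold'' $u \geq \psi_k(C)+2$ when $k \mid n$ is \emph{not} a necessary condition. Connecting $I$ to $J$ does not require traversing between two \emph{distinct} minimum covers---the two endpoints may reduce to the \emph{same} minimum cover, in which case no budget beyond $\max\{|I|,|J|\}$ is needed. Concretely, take $C = C_6$ with $k = 3$ (so $\psi_k(C) = 2$ and $k \mid n$), $u = 3$, $I = \{v_0, v_1, v_3\}$ and $J = \{v_0, v_2, v_3\}$. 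Your test rejects this instance because $u = 3 < \psi_k(C) + 2 = 4$, yet $\langle I, \{v_0,v_3\}, J\rangle$ is a valid $\mathsf{TAR}(3)$-sequence, so it is a yes-instance. The correct dividing line in the tight-budget regime is whether $I$ and $J$ can be brought to share a vertex (equivalently, whether some token is movable in the sense of Lemma~\ref{lem:token_movable}), not a uniform numerical threshold comparing $u$ to $\psi_k(C)$.

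Beyond the characterization error, two steps you rely on are unproven and nontrivial: (i) that from any minimal-but-non-minimum cover of size strictly below $u$ one can always add one token and then remove two for a net decrease (this is plausible but needs a gap-counting argument, and it is the entire content of your reduce-to-minimum lemma); and (ii) that $u \geq \psi_k(C)+2$ suffices to connect \emph{every} pair of minimum covers when $k \mid n$ (you verify a single example on $C_6$). The paper sidesteps both by never insisting on reaching a minimum cover: it pads the smaller endpoint up to size $s=\max\{|I|,|J|\}$, then either spends a spare unit of budget to insert a common cut vertex (when $u \geq s+2$), or rotates a movable token until $I \cap J \neq \emptyset$ (when $u = s+1$), and in either case reduces to the already-solved path instance via Lemma~\ref{lem:TJ-TAR-equiv} and Theorem~\ref{thm:kPVCR_paths_TJ}; the only no-instances it identifies are $u < \max\{|I|,|J|\}$, $u = \psi_k(C)$, and the frozen situation where $u=s+1$, $I\cap J=\emptyset$ and no token is movable (which forces $n = ck$ and $I$ minimum). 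If you want to salvage your scheme, you must replace the global threshold by a test of whether the minimum covers actually reached from $I$ and from $J$ coincide or are mutually reachable, which is essentially the common-vertex condition the paper uses.
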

\begin{proof}
	Clearly, if $u<\max \{|I|,|J|\}$ or $u=\psi_k(C)$ then $(C,I,J, \mathsf{TAR})$ is a no-instance, because either $I$ or $J$ cannot be modified by adding/removing tokens. We now consider the case $u\geq \max\{|I|,|J|\}$ and $u>\psi_k(C)$. Note that if $|I|<|J|$ then we can add tokens to $I$ until the resulting $k$-path vertex cover is of size $|J|$, simply because $u\geq \max\{|I|,|J|\}$. As a result, we can assume without loss of generality that $|I|=|J|=s$ for some constant $s$. Now we have $|I|=|J|=s$ and $u>\psi_k(C)$, we divide into two cases: $u\geq s+2$ or $u=s+1$.
	
	Consider the case $u\geq s+2$. If $I\cap J=\emptyset$, then we add one token $v\notin I\cup J$. Then we can cut $C$ by $v$ and consider the instance $(C - v, I \setminus \{v\}, J \setminus \{v\})$ on the path $C - v$ under $\mathsf{TAR}(u^\prime)$ where $u^\prime \geq s+1$. Then, by Theorem~\ref{thm:kPVCR_paths_TJ} and Lemma~\ref{lem:TJ-TAR-equiv}, $I$ is always reconfigurable to $J$ under \textsf{TAR}$(u)$. Otherwise, i.e., $I\cap J\neq \emptyset$, we can use the similar argument as before with $|I|=|J|=s-1$ and $u\geq s+1$, therefore $I$ is always reconfigurable to $J$ under \textsf{TAR}$(u)$. 
	
	Next, consider the case $u=s+1$. If $I\cap J\neq \emptyset$, also similar argument can be applied as before with $|I|=|J|=s-1$ and $u=s$. Hence, in this case, $I$ is always reconfigurable to $J$ under \textsf{TAR}$(u)$. Otherwise, we first find a token of $I$ which is movable in direction $d \in \{\text{clockwise}, \text{counterclockwise}\}$. Recall that a token $u$ is movable to some vertex $v$ if the resulting set still keeps a $k$-path vertex cover property. If we can find such a token, we can rotate $I$ in $d$ until $I\cap J\neq \emptyset$ as in the algorithm $\mathtt{PVCRCycleTS}$. We note that though such rotation forms a $\mathsf{TS}$-sequence (which is also a $\mathsf{TJ}$-sequence), by Lemma~\ref{lem:TJ-TAR-equiv}, it can be converted to a \textsf{TAR}$(u)$-sequence. If we finish the rotation, then we can also cut $C$ by $v\in I\cap J$ and similar argument can be applied as before. Else, assume without loss of generality that no token in $I$ can move. Then, by Lemma~\ref{lem:token_movable}, it follows that $I$ is minimum and $n=c\cdot k$. Now we have $u=s+1$, and we can add exactly one token. However, even when adding a new token $v$, one cannot remove any other token $u$ while keeping the $k$-path vertex cover property. Suppose to the contrary, let $I^\prime = I\setminus \{u\}\cup \{v\}$. This implies that $I^\prime$ can be obtained from $I$ by jumping the token on $u$ to $v$. However, since $n=c\cdot k$ and $I$ and $I^\prime$ are token sets of minimum size, then $I$ cannot be reconfigured to $I^\prime$ under $\mathsf{TJ}$, a contradiction. 
\end{proof}

So far, we have shown that \textsc{$k$-PVCR} on cycles under each of $\mathsf{TJ}$ and $\mathsf{TAR}(u)$ can be solved in $O(n)$ time, and under $\mathsf{TS}$ can be solved in $O(n^2)$ time (Theorems~\ref{thm:kPVCR_cycle_TJTS} and~\ref{thm:kPVCR_cycles_TAR}).
To conclude this section, we give an example showing that even in a yes-instance $(C, I, J, \mathsf{TJ})$ of \textsc{$k$-PVCR} ($k \geq 3$) under $\mathsf{TJ}$ on a cycle $C$, one may need to use more than $|I \Delta J|/2$ $\mathsf{TJ}$-steps even in a shortest $\mathsf{TJ}$-sequence.
Intuitively, the lower bound $|I \Delta J|/2$ seems to be easy to achieve under $\mathsf{TJ}$, simply by jumping tokens one by one from $I \setminus J$ to $J \setminus I$.
However, as we show in the following lemma, to keep the $k$-path vertex cover property, sometimes a token in $I$ may need to jump to some vertex not in $J \setminus I$ beforehand.
This implies the non-triviality of finding a \emph{shortest} reconfiguration sequence even under $\mathsf{TJ}$. 

\begin{lemma}
	\label{lemma:cycle_notshortestTJ}
	For \textsc{$k$-PVCR} ($k\geq3$) yes-instances $(C, I, J, \mathsf{TJ})$ on cycles where $C=v_0v_1\dots v_{3k-2}v_0$, $I=\{v_0, v_k, v_{2k}\}$ and $J=\{v_{3k-2}, v_{2k-2}, \red{v_{k-1}}\}$, the length of a shortest reconfiguration sequence from $I$ to $J$ is greater than $|I\Delta J|/2$. 
\end{lemma}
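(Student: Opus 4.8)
The plan is to exploit the rigidity of \emph{minimum} $k$-path vertex covers of $C=C_{3k-1}$. Since $\psi_k(C_{3k-1})=\lceil(3k-1)/k\rceil=3=|I|=|J|$ and every set along a $\mathsf{TJ}$-sequence has size $|I|$, every configuration that can possibly occur is a minimum cover. First I would classify them: writing the three tokens cyclically with gaps $g_1,g_2,g_3$ (the numbers of consecutive uncovered vertices between consecutive tokens), one has $g_1+g_2+g_3=(3k-1)-3=3k-4$, while the covering condition forces each $g_i\le k-1$. The total deficit from $(k-1,k-1,k-1)$ is thus exactly $1$, so every minimum cover has gap profile $(k-1,k-1,k-2)$ up to cyclic order. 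In particular each is determined by the position $e$ of its unique short gap, so there are exactly $3k-1$ of them, indexed by $e$ modulo $3k-1$.

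Next I would pin down which single $\mathsf{TJ}$-moves are legal. Removing a token merges its two incident gaps, and to restore a valid cover the token must be re-inserted so as to split the merged gap into pieces of size $\le k-1$. A short case analysis then shows: (a) the token lying between the two long gaps is \emph{frozen} (the merged gap of size $2k-1$ splits only as $(k-1,k-1)$, returning the token to its original vertex); and (b) each of the two tokens incident to the short gap can move by exactly one vertex, and doing so shifts $e$ by exactly $+k$ or $-k$ modulo $3k-1$. Hence the graph whose vertices are the minimum covers and whose edges are legal $\mathsf{TJ}$-moves is the Cayley graph of the integers modulo $3k-1$ with connection set $\{+k,-k\}$; since $\gcd(k,3k-1)=1$, this is a single cycle on $3k-1$ vertices.

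With this in hand the bound becomes a modular computation. Reading off short gaps gives $e_I=2k+1$ and $e_J=0$. A $\mathsf{TJ}$-sequence of length $\ell$ is a walk of length $\ell$ in this cycle, so it realises a net integer displacement $m$ (with $|m|\le\ell$) of $m$ signed steps of size $k$; reaching $J$ requires $mk\equiv e_J-e_I=-(2k+1)\pmod{3k-1}$. Using $3k\equiv 1$, hence $k^{-1}\equiv 3$, and $6k+3=2(3k-1)+5$, this is equivalent to $m\equiv -5\pmod{3k-1}$. For $k\ge 4$ we have $3k-1\ge 11$, so the residue class of $-5$ contains no integer with absolute value at most $3$ (the nearest representatives are $-5$ and $3k-6\ge 6$); therefore $\ell\ge|m|\ge 5>3=|I\Delta J|/2$, and the five single shifts $e_I\to e_I-k\to\cdots\to e_I-5k\equiv e_J$ show the shortest length is in fact exactly $5$. (The single residual value $k=3$, where $3k-1=8$ and the representative $3k-6=3$ appears, is best checked by direct inspection.)

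The hard part will be step two: proving that the move graph is \emph{exactly} the $\pm k$ circulant, i.e.\ that no ``clever'' jump onto a distant vertex is ever legal. This is precisely where the $k$-path-cover constraint bites—one must verify that the merged gap created by deleting a token admits only the re-insertion positions claimed, so that the naive strategy of jumping a token of $I\setminus J$ directly onto a vertex of $J\setminus I$ is (for $k\ge 4$) forbidden because the intermediate set would leave a run of $\ge k$ uncovered vertices. It is exactly this obstruction that the count $|I\Delta J|/2$ ignores, which is the content of the lemma.
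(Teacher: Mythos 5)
Your approach is genuinely different from the paper's and, for $k\ge 4$, it is correct and in fact stronger. The paper argues purely locally about the first move: since every intermediate set is a minimum cover, a length-$3$ sequence would have to jump each token of $I\setminus J$ directly onto a vertex of $J\setminus I$, and the paper checks that each of the three candidate first jumps creates an uncovered $k$-path. You instead classify \emph{all} minimum covers of $C_{3k-1}$ by the position $e$ of the unique short gap (the profile $(k-1,k-1,k-2)$ is indeed forced), show that the legal moves form the circulant on the integers modulo $3k-1$ with connection set $\{+k,-k\}$, and reduce the lemma to the congruence $mk\equiv -(2k+1)\pmod{3k-1}$, i.e.\ $m\equiv -5$. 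I checked the removal/reinsertion case analysis and the computations $e_I=2k+1$, $e_J=0$, $k^{-1}\equiv 3$; for $k\ge 4$ the minimal $|m|$ in that residue class is $5$, so your argument yields the exact shortest length $5$ (matching the paper's explicit length-$5$ witness), not merely the bound $>3$. This global picture is what the paper's local argument cannot give.

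The genuine gap is the deferred case $k=3$, and it is not a formality: there the lemma is \emph{false}, and so is the paper's proof of it. For $k=3$ (so $n=8$, $I=\{v_0,v_3,v_6\}$, $J=\{v_1,v_4,v_7\}$) the sequence $\langle \{v_0,v_3,v_6\},\ \{v_1,v_3,v_6\},\ \{v_1,v_4,v_6\},\ \{v_1,v_4,v_7\}\rangle$ is a valid $\mathsf{TJ}$-sequence (indeed a $\mathsf{TS}$-sequence) of length $3=|I\Delta J|/2$: no intermediate set leaves three consecutive uncovered vertices. This is exactly what your congruence predicts, since $m\equiv -5\equiv 3\pmod 8$ is attainable with $|m|=3$ (three consecutive $+k$ steps). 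The corresponding flaw in the paper's argument is the claim that moving $v_0$ to $v_{k-2}$ leaves an uncovered path ``of length $2k-3\ge k$'': the path it names has the tokens $v_{2k}$ and $v_{k-2}$ at its two ends, and the genuinely uncovered stretch has only $2k-4$ vertices, which is shorter than $k$ precisely when $k=3$. So ``direct inspection'' cannot rescue $k=3$; you must either restrict your statement to $k\ge 4$ or record the $k=3$ counterexample explicitly.
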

\begin{figure}[b]
	\centering
	\includegraphics[scale=0.25]{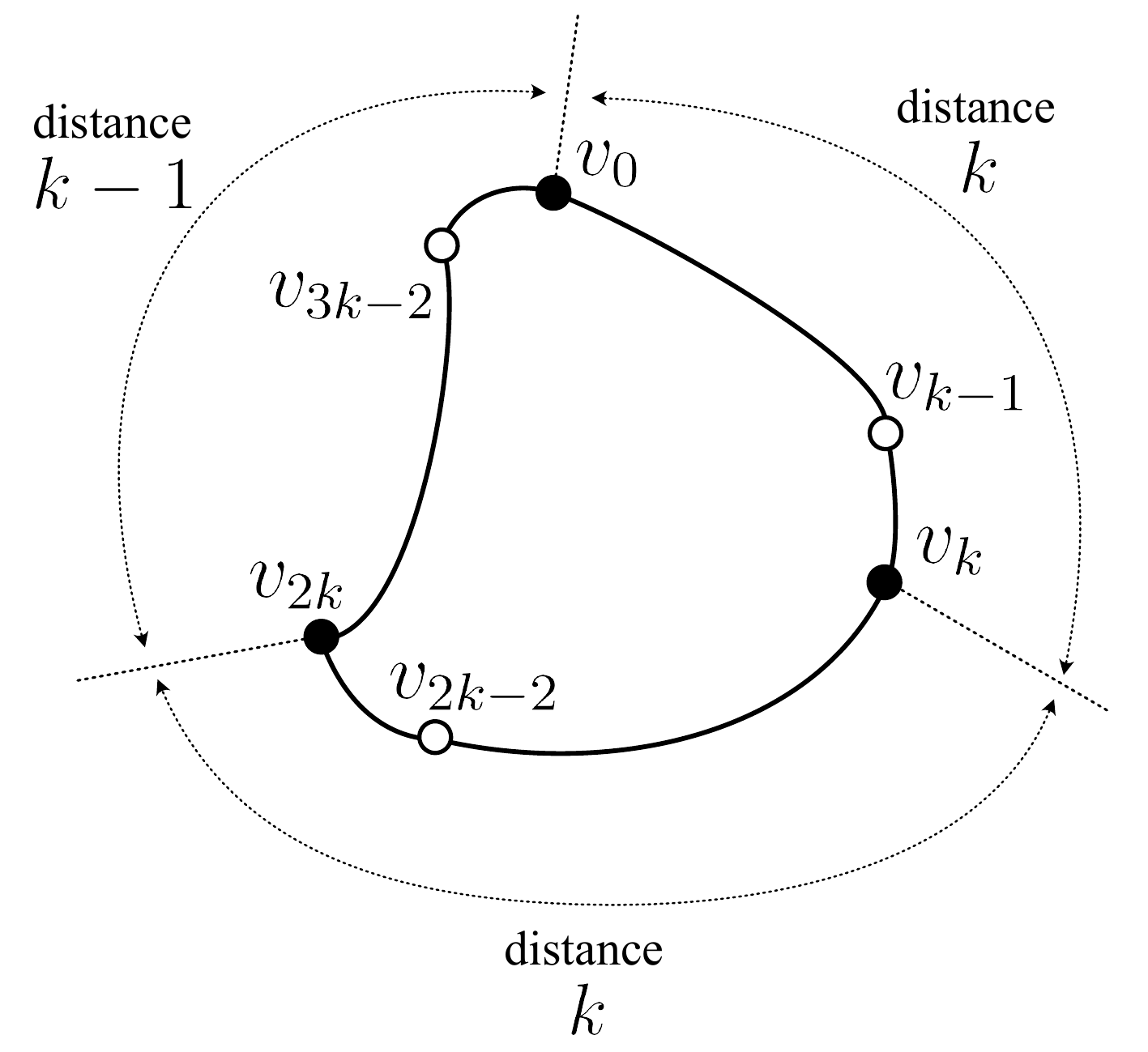}
	\caption{An instance $(C,I,J, \mathsf{TJ})$ that requires more than $|I\Delta J|/2$ steps to reconfigure}
	\label{fig:cycle_of_lemma:cycle_notshortestTJ}
\end{figure}

\begin{proof}
	We illustrate such instances in \figurename~\ref{fig:cycle_of_lemma:cycle_notshortestTJ}. In \figurename~\ref{fig:cycle_of_lemma:cycle_notshortestTJ}, black tokens are in $I$, and white tokens are in $J$. Note that $\text{dist}_C(v_{2k-2}, v_{2k}) = 2$ and $\text{dist}_C(v_{3k-2}, v_{0}) = \red{\text{dist}_C(v_{k-1}, v_{k})} = 1$. 
	
	First, $v_0$ is the only vertex that covers the path $P = v_0v_1\dots v_{k-1}$, which means $v_0$ cannot move to some vertex outside $P$, such as $v_{3k-2}$. Therefore, $v_0$ has no choice but to move to \red{$v_{k-1}$}. However then, the path $v_{2k}\dots v_{3k-2}v_0\dots \red{v_{k-1}}$ is of length $\red{2k-2}\geq k$. By these arguments, $v_0$ cannot directly move to \red{$v_{k-1}$}. Similarly, since $v_{2k}$ is the only vertex that covers the path $P' = \red{v_{k+1}}\dots v_{2k}$,  the possible way is only to move $v_{2k}$ to $v_{2k-2}$, which also results in an non-covered path $v_{2k-1}\dots v_{3k-2}$ of length $k-1$. It is clear that $v_{k}$ cannot move \red{directly to} either $v_{2k-2}$ or \red{$v_{k-1}$}. Therefore, every token in $I$ cannot move directly to one of the tokens in $J$, which means it requires at least one step to put some token on some vertex $v\notin I\Delta J$. This also holds for the case moving tokens in $J$ to $I$. Hence, the length of the reconfiguration sequence is greater than $|I\setminus J|=|J\setminus I|=|I\Delta J|/2$. 
	
	Finally, we confirm that the created instance is a yes-instance. First, for example, one can move $v_{2k}$ to $v_{2k-1}$, since after such a move the $k$-vertex path $v_{2k}\dots v_0$ is covered by the token $v_0$ and another $k$-vertex path $v_{2k-1}\dots v_{3k-2}$ is covered by the token $v_{2k-1}$. Then, now the length of path $v_{2k-1}\dots v_{k}$ is $k$, hence \red{$v_k$} can be moved to \red{$v_{k-1}$} by the similar argument. Therefore, by the reconfiguration sequence $S = \langle I = \{v_0, v_k, v_{2k}\}, \{v_0, v_k, v_{2k-1}\}, \{v_0, v_{k-1}, v_{2k-1}\}, \allowbreak\{v_{3k-2}, v_{k-1}, v_{2k-1}\}, \{v_{3k-2}, \red{v_{k-1}}, v_{2k-2}\}=J \rangle$, one can reconfigure $I$ to $J$. 
\end{proof}

\section{Concluding Remarks}
\label{sec:conclusion}
In this paper, we have investigated the complexity of \textsc{$k$-PVCR} under each of $\mathsf{TS}$, $\mathsf{TJ}$, and $\mathsf{TAR}$ for several graph classes.
In particular, several known hardness results for \textsc{VCR} (i.e., $k = 2$) can be generalized for \textsc{$k$-PVCR} when $k \geq 3$.
Additionally, we proved a complexity dichotomy for \textsc{$k$-PVCR} by showing that it remains $\mathtt{PSPACE}$-complete even if the input (planar) graph is of maximum degree \red{three} (using a reduction from NCL) and can be solved in polynomial time when the input (planar) graph is of maximum degree \red{two} (i.e., it is either a path or a cycle). 
On the positive side, we designed polynomial-time algorithms for \textsc{$k$-PVCR} on trees under each of $\mathsf{TJ}$ and $\mathsf{TAR}$.
We also showed how to construct a \emph{shortest} reconfiguration sequence on paths, and presented an example showing the nontriviality of finding shortest reconfiguration sequences on cycles even under $\mathsf{TJ}$.
The question of whether one can solve \textsc{$k$-PVCR} on trees under $\mathsf{TS}$ in polynomial time remains open.
Another target graphs may be chordal graphs (under each of $\mathsf{TJ}$ and $\mathsf{TAR}$), cographs, and graphs of treewidth at most $2$.
Even on graphs of treewidth at most $2$ (and moreover, on outerplanar graphs), the complexity of \textsc{VCR} remains open.

\section*{Acknowledgements}

\red{We would like to thank the anonymous reviewers for their insightful and valuable comments that help improving the preliminary versions of this paper.}
This work is partially supported by JSPS KAKENHI Grant Numbers JP20H05964 (D.A. Hoang), JP18H04091, JP20K11666 and JP20H05794 (A. Suzuki), Japan.

\bibliographystyle{plain}
\bibliography{refs.bib}

\providecommand{\noopsort}[1]{}
\begin{thebibliography}{10}

\bibitem{AcharyaCBG12}
Hrishikesh~B. Acharya, Taehwan Choi, Rida~A. Bazzi, and Mohamed~G. Gouda.
\newblock The $k$-observer problem in computer networks.
\newblock {\em Netw. Sci.}, 1(1-4):15--22, 2012.

\bibitem{AusielloBE11}
Giorgio Ausiello, Vincenzo Bonifaci, and Bruno Escoffier.
\newblock Complexity and approximation in reoptimization.
\newblock In {\em Computability in Context: Computation and Logic in the Real
  World}, pages 101--129. World Scientific, 2011.

\bibitem{BeckLKSZ19}
Moritz Beck, Kam-Yiu Lam, Joseph Kee~Yin Ng, Sabine Storandt, and Chun~Jiang
  Zhu.
\newblock Concatenated $k$-path covers.
\newblock In {\em Proc. of ALENEX 2019}, pages 81--91, 2019.

\bibitem{BelmonteKLMOS19}
R\'{e}my Belmonte, Eun~Jung Kim, Michael Lampis, Valia Mitsou, Yota Otachi, and
  Florian Sikora.
\newblock Token sliding on split graphs.
\newblock In Rolf Niedermeier and Christophe Paul, editors, {\em Proc. of STACS
  2019}, volume 126 of {\em LIPIcs}, pages 13:1--13:7. Schloss
  Dagstuhl--Leibniz-Zentrum fuer Informatik, 2019.

\bibitem{BonamyB17}
Marthe Bonamy and Nicolas Bousquet.
\newblock Token sliding on chordal graphs.
\newblock In {\em Proc. of WG 2017}, volume 10520 of {\em LNCS}, pages
  127--139. Springer, 2017.

\bibitem{Bonsma16}
Paul~S. Bonsma.
\newblock Independent set reconfiguration in cographs and their
  generalizations.
\newblock {\em J. of Graph Theory}, 83(2):164--195, 2016.

\bibitem{BonsmaKW14}
Paul~S. Bonsma, Marcin Kami{\'n}ski, and Marcin Wrochna.
\newblock Reconfiguring independent sets in claw-free graphs.
\newblock In {\em Proc. of SWAT 2014}, volume 8503 of {\em LNCS}, pages 86--97.
  Springer, 2014.

\bibitem{BrevsarKKS11}
Bo{\v{s}}tjan Bre{\v{s}}ar, Franti{\v{s}}ek Kardo{\v{s}}, J{\'a}n
  Katreni{\v{c}}, and Gabriel Semani{\v{s}}in.
\newblock Minimum $k$-path vertex cover.
\newblock {\em Discrete Appl. Math.}, 159(12):1189--1195, 2011.

\bibitem{BrevsarKSS14}
Bo{\v{s}}tjan Bre{\v{s}}ar, Rastislav Krivo{\v{s}}-Bellu{\v{s}}, Gabriel
  Semani{\v{s}}in, and Petra {\v{S}}parl.
\newblock On the weighted k-path vertex cover problem.
\newblock {\em Discrete Appl. Math.}, 177:14--18, 2014.

\bibitem{BrianskiFHM21}
Marcin Bria{\'n}ski, Stefan Felsner, J{\k{e}}drzej Hodor, and Piotr Micek.
\newblock Reconfiguring independent sets on interval graphs.
\newblock {\em arXiv preprint}, 2021.
\newblock To appear in Proceedings of MFCS 2021.

\bibitem{DemaineDFHIOOUY15}
Erik~D. Demaine, Martin~L. Demaine, Eli Fox{-}Epstein, Duc~A. Hoang, Takehiro
  Ito, Hirotaka Ono, Yota Otachi, Ryuhei Uehara, and Takeshi Yamada.
\newblock Linear-time algorithm for sliding tokens on trees.
\newblock {\em Theor. Comput. Sci.}, 600:132--142, 2015.

\bibitem{Diestel2010}
Reinhard Diestel.
\newblock {\em Graph Theory}, volume 173 of {\em Graduate Texts in
  Mathematics}.
\newblock Springer, 4th edition, 2010.

\bibitem{Fox-EpsteinHOU15}
Eli Fox{-}Epstein, Duc~A. Hoang, Yota Otachi, and Ryuhei Uehara.
\newblock Sliding token on bipartite permutation graphs.
\newblock In {\em Proc. of ISAAC 2015}, volume 9472 of {\em LNCS}, pages
  237--247. Springer, 2015.

\bibitem{FunkeNS14}
Stefan Funke, Andr{\'e} Nusser, and Sabine Storandt.
\newblock On $k$-path covers and their applications.
\newblock {\em Proc. VLDB Endow.}, 7(10):893--902, 2014.

\bibitem{GareyJohnson}
Michael~R. Garey and David~S. Johnson.
\newblock {\em Computers and Intractability; A Guide to the Theory of
  NP-Completeness}.
\newblock W. H. Freeman \& Co., New York, NY, USA, 1990.

\bibitem{HearnD05}
Robert~A. Hearn and Erik~D. Demaine.
\newblock {PSPACE}-completeness of sliding-block puzzles and other problems
  through the nondeterministic constraint logic model of computation.
\newblock {\em Theor. Comput. Sci.}, 343(1-2):72--96, 2005.

\bibitem{Heuvel13}
Jan {\noopsort{Heuvel}van den Heuvel}.
\newblock The complexity of change.
\newblock In {\em Surveys in Combinatorics}, volume 409 of {\em London Math.
  Soc. Lecture Note Ser.}, pages 127--160. Cambridge University Press, 2013.

\bibitem{ItoDHPSUU11}
Takehiro Ito, Erik~D. Demaine, Nicholas J.~A. Harvey, Christos~H.
  Papadimitriou, Martha Sideri, Ryuhei Uehara, and Yushi Uno.
\newblock On the complexity of reconfiguration problems.
\newblock {\em Theor. Comput. Sci.}, 412(12-14):1054--1065, 2011.

\bibitem{ItoKOSUY20}
Takehiro Ito, Marcin Kami{\'n}ski, Hirotaka Ono, Akira Suzuki, Ryuhei Uehara,
  and Katsuhisa Yamanaka.
\newblock Parameterized complexity of independent set reconfiguration problems.
\newblock {\em Discrete Applied Mathematics}, 838:45--57, 2020.

\bibitem{KaminskiMM12}
Marcin Kami{\'n}ski, Paul Medvedev, and Martin Milani{\v{c}}.
\newblock Complexity of independent set reconfigurability problems.
\newblock {\em Theor. Comput. Sci.}, 439:9--15, 2012.

\bibitem{KumarKR19}
Mehul Kumar, Amit Kumar, and C.~Pandu Rangan.
\newblock Reoptimization of path vertex cover problem.
\newblock In {\em Proc. of COCOON 2019}, volume 11653 of {\em LNCS}, pages
  363--374. Springer, 2019.

\bibitem{LokshtanovM19}
Daniel Lokshtanov and Amer~E. Mouawad.
\newblock The complexity of independent set reconfiguration on bipartite
  graphs.
\newblock {\em ACM Trans. Algorithms.}, 15(1):7:1--7:19, 2019.

\bibitem{MiyanoSUYZ18}
Eiji Miyano, Toshiki Saitoh, Ryuhei Uehara, Tsuyoshi Yagita, and Tom~C. van~der
  Zanden.
\newblock Complexity of the maximum $k$-path vertex cover problem.
\newblock In {\em Proc. of WALCOM 2018}, pages 240--251. Springer, 2018.

\bibitem{MynhardtN19}
C.M. Mynhardt and S.~Nasserasr.
\newblock Reconfiguration of colourings and dominating sets in graphs.
\newblock In Fan Chung, Ron Graham, Frederick Hoffman, Ronald~C. Mullin, Leslie
  Hogben, and Douglas~B. West, editors, {\em 50 years of Combinatorics, Graph
  Theory, and Computing}, pages 171--191. CRC Press, 1st edition, 2019.

\bibitem{Nishimura18}
Naomi Nishimura.
\newblock Introduction to reconfiguration.
\newblock {\em Algorithms}, 11(4), 2018.
\newblock (article 52).

\bibitem{RanZHLD19}
Yingli Ran, Zhao Zhang, Xiaohui Huang, Xiaosong Li, and Ding-Zhu Du.
\newblock Approximation algorithms for minimum weight connected 3-path vertex
  cover.
\newblock {\em Appl. Math. Comput.}, 347:723--733, 2019.

\bibitem{Tsur19}
Dekel Tsur.
\newblock Parameterized algorithm for 3-path vertex cover.
\newblock {\em Theor. Comput. Sci.}, 783:1--8, 2019.

\bibitem{Wrochna18}
Marcin Wrochna.
\newblock Reconfiguration in bounded bandwidth and treedepth.
\newblock {\em J. Comput. Syst. Sci.}, 93:1--10, 2018.

\bibitem{Z15}
Tom~C. {\noopsort{Zanden}van der Zanden}.
\newblock Parameterized complexity of graph constraint logic.
\newblock In {\em Proc. of IPEC 2015}, volume~43 of {\em LIPIcs}, pages
  282--293, 2015.

\end{thebibliography}

\end{document}